\newcites{sup}{References}
\renewcommand\paragraph{\@startsection{paragraph}{4}{\z@}%
            {-2.5ex\@plus -1ex \@minus -.25ex}%
            {1.25ex \@plus .25ex}%
            {\normalfont\normalsize\bfseries}}
\theoremstyle{plain}
\theoremstyle{plain}
\providecommand{\algorithmname}{Algorithm}
\DeclareMathOperator*{\argmin}{argmin}
\newcounter{daggerfootnote}
\newcommand{\Natural}{\mathbb{N}}
\newcommand{\dd}{\mathrm{d}}
\newcommand{\R}{\mathbb{R}}
\renewcommand{\P}{\mathbb{P}}
\newcommand{\E}{\mathbb{E}}
\newcommand\iid{\stackrel{\mathclap{\normalfont\mbox{\tiny{iid}}}}{\sim}}
\newtheorem{remark}{Remark}
\newtheorem{theorem}{Theorem}
\newtheorem{lemma}{Lemma}
\newtheorem{proposition}{Proposition}
\newtheorem{definition}{Definition}
\newtheorem{corollary}{Corollary}
\newtheorem{assumption}{}
\newtheorem{assumptionS}{}
\newcommand{\crl}[1]{\left\{#1 \right\}}
\newcommand{\round}[1]{\left (#1 \right)}
\newcommand{\sqrd}[1]{\left[#1 \right]}
\newcommand{\Liminf}{\mathrm{Liminf}}
\newcommand{\Ell}{\mathcal{L}}
\begin{document}

\title{Convergence of a class of gradient-free optimisation schemes
when the objective function is noisy, irregular, or both}

\date{\today}

\author{\thanks{c.andrieu@bristol.ac.uk} Christophe Andrieu, \thanks{nicolas.chopin@ensae.fr} Nicolas Chopin, \thanks{ettore.fincato@bristol.ac.uk} Ettore Fincato, \thanks{mathieu.gerber@bristol.ac.uk} Mathieu Gerber\\
        \small $^{*,\ddagger,\S}$ School of Mathematics, University of Bristol, UK\\
        \small $^{\dagger}$ ENSAE, Institut Polytechnique de Paris, France\\
}

\maketitle

\begin{abstract}

We investigate the convergence properties of a class of iterative algorithms 
designed to minimize a potentially non-smooth and noisy
objective function, which may be algebraically intractable and whose values may be obtained as the output of a black box. The algorithms considered can be cast under the umbrella of a generalised gradient descent
recursion, where the gradient is that of a smooth approximation of the objective function. The framework we develop includes as special cases model-based and mollification methods, two classical approaches to zero-th order optimisation.
The convergence results are obtained under very weak assumptions on the regularity of the objective function and involve a trade-off between the degree of smoothing and size of the steps taken in the parameter updates. As expected, additional assumptions are required in the stochastic case. We illustrate the relevance of these algorithms and our convergence results through a challenging classification example from machine learning.


\end{abstract}


\section{Introduction}

We are interested in the convergence properties of iterative algorithms designed to minimise an objective function $l:\R^d\to \R$ in the
following, non-exclusive, scenarios: (a) $l$ is not smooth (i.e.~non-differentiable or even
discontinuous); or (b) $l$ is the output of a black box; or (c) only  noisy evaluations of $l$ can be obtained. 
Such scenarios preclude the use of standard gradient descent methods: in  (a) the
gradient of $l$ does not exit, in  (b) little is known about $l$ and even when the gradient exists it is inaccessible, and in  (c) the function $l$ itself cannot be evaluated exactly. Numerous algorithms have been developed to address such situations, including, among many others, simulated annealing \citep{doi:10.1287/opre.18.6.1225,Khachaturyan:a19748},  genetic algorithms \citep{zhigljavsky2008stochastic}, function smoothing techniques \citep{ermolievnorkin2003} and model-based, or variational, search methods, \citep{article,ghosh2025variationallearningfindsflatter}. In this contribution we develop novel convergence theory covering the latter two approaches simultaneously.

More specifically, we assume the function $l$ to be of the form
$l(\theta)=\E[\ell(\theta, U)]$, where $U$ is a random variable whose distribution $\mathbb{P}$ is independent of $\theta$, and we let $\ell\colon\R^d\times\mathsf{U}\rightarrow\R$ be the noisy evaluation of $l$ one may obtain for any given input $\theta \in \mathbb{R}^d$; 
the noiseless scenario is recovered by taking $\ell(\theta, U) = l(\theta)$. For
  any $\gamma>0$ a smooth approximation $\Ell_\gamma \colon \R^d\times \mathsf{U}\rightarrow \R$ of $\ell$ can be defined as follows,
\begin{equation}\label{eq:smooth_approx}
  \mathcal{L}_\gamma(\theta, u)  = \psi^{-1}\left( 
  \int \psi\left( \ell(x, u) \right)
  \phi_{d,\gamma}(x - \theta) dx\right) \,,
\end{equation}
where $\phi_{d,\gamma}(x)$ is the probability density function of a $\mathscr{N}_d(0, \gamma I)$ distribution, and $\psi$ a diffeomorphism. The scalar parameter $\gamma>0$ determines both how far  $\mathcal{L}_\gamma(\theta, u)$ is from
$\ell(\theta, u)$ (the smaller, the closer), and how smooth $\mathcal{L}_\gamma(\theta, u)$ is (the larger, the smoother). This smooth approximation of  $\ell$ is differentiable under fairly mild conditions and,
in this work, for a given   
$\theta_0 \in \mathbb{R}^d$  we consider  the generalised gradient descent recursion: 
\begin{equation}\label{eq:theta_seq}
\theta_{n+1} = \theta_n - \beta_n \nabla \mathcal{L}_{\gamma_n}(\theta_n,
U_{n+1}),\quad n \geq 1,
\end{equation}
where the $U_n$'s are independent and identically distributed from $\mathbb{P}$, 
and where $(\gamma_n)_{n \geq 1}$ and $(\beta_n)_{n\geq 1}$ are sequences of positive numbers. 

In Section~\ref{subsec:mainresults} we present general results regarding the convergence of   the generic  time-inhomoge\-neous gradient descent algorithms   \eqref{eq:theta_seq}. The main results in Section~\ref{subsec:mainresults} are Theorems \ref{thm:conv-general} and \ref{thm:conv-general-det}, where the deterministic and stochastic scenarios are treated separately. Letting $L_\gamma(\theta):=\E[\Ell_\gamma(\theta,U)]$ for all $\gamma>0$ and $\theta\in\R^d$, the convergence of \eqref{eq:theta_seq} is established in the sense that $\lim_{n\rightarrow\infty}\|\nabla L_{\gamma_n}(\theta_n)\|=0$ under general conditions on the two functions $(\theta,\gamma,u)\mapsto \Ell_\gamma(\theta,u)$ and $(\theta,\gamma,u)\mapsto\nabla_\theta \Ell_\gamma(\theta,u)$, and on the two sequences $(\beta_n)_{n\geq1}$ and $(\gamma_n)_{n\geq1}$.  In Section~\ref{subsec:application-theory} (Propositions \ref{prop:2point} and \ref{prop:Bayes})  we then specialise these results to the following two popular scenarios: $\psi(x)=x$, in which case 
\begin{equation} \label{eq:def-barl-gam}
\Ell_{\gamma}(\theta,u)= 
\int_{\mathbb{R}^d} \ell(x,u)\phi_{d,\gamma}(x-\theta)\dd x \,, 
\end{equation} 
and $\psi(x)=\exp(-x)$, in which case 
\begin{equation} \label{eq:def-l-gam}
   \Ell_{\gamma}(\theta,u)=  
    -\log\Big(\int_{\R^d} e^{-\ell(x,u)}\phi_{d,\gamma}( x-\theta)\dd x\Big).
\end{equation}


The connection between these two particular instances of recursion \eqref{eq:theta_seq} with smoothing (i.e.~mollification) techniques and model-based search methods is discussed in Section \ref{subsec:links-to-the-literature}, while in Section~\ref{sec:application} we apply our theory to an intricate example from the machine learning literature.  We stress that convergence results of recursions of the type \eqref{eq:theta_seq} in scenarios \eqref{eq:def-barl-gam} and \eqref{eq:def-l-gam} are, to the best of our knowledge, scarce; see Section~\ref{subsec:links-to-the-literature} for a discussion.

The expectation involved in the expression for the gradient of $\mathcal{L}_\gamma(\theta, u)$ may not be tractable, in which case it is replaced with an  estimator based on iid samples from $\mathcal{N}_d(0,\gamma I_d)$. When   $\mathcal{L}_\gamma(\theta, u)$ is as defined in \eqref{eq:def-barl-gam} we can easily obtain an unbiased  estimator of this gradient, in which case our results  remain valid; this only requires notational change. In contrast, for  $\mathcal{L}_\gamma(\theta, u)$ as defined in \eqref{eq:def-l-gam} constructing an unbiased estimator of the gradient is not trivial  due to the presence of a ratio. Classical approaches to study such recursions rely on using the ideal dynamics \eqref{eq:theta_seq} as a reference, of which the algorithm effectively implemented is a perturbation; our result is therefore the first step in such an analysis. Alternatively, it is possible to modify the algorithm to employ debiasing techniques for self-normalised importance sampling \citep{10.3150/15-BEJ785,cardoso2022brsnisbiasreducedselfnormalized}. In this manuscript we do not attempt to compare performance of the two approaches, which is left for future work.



\section{Main results and discussion} 

\subsection{General inhomogeneous gradient descent}\label{subsec:mainresults}

This subsection presents results on the convergence of the inhomogeneous
gradient descent~\eqref{eq:theta_seq}, without making assumptions on the exact expression 
for $\Ell_\gamma(\theta, u)$, e.g., the choice of bijection $\psi$ in~\eqref{eq:smooth_approx}. 
Specifically, we provide technical, intermediate results which we will leverage in
the next subsection  to establish the convergence of~\eqref{eq:theta_seq} for the two maps
$\psi$ of interest. 

Recall  that $L_\gamma(\theta)=\E\left[\Ell_\gamma(\theta, U)\right]$ for any $\gamma > 0$. 
We consider the following assumptions on these functions. 


\begin{assumptionS}\label{assume:Gen1}
There exist finite constants $c\in\R$ and $\bar{\gamma}_1>0$  such that $L_\gamma(\theta)\geq c$ for all $\theta\in\R^d$ and all $\gamma\in (0,\bar{\gamma}_1]$.
\end{assumptionS}

\begin{assumptionS}\label{assume:Gen2}
There exist  a function $C\colon \mathsf{U} \rightarrow [1,\infty)$, such that $\mathbb{E}[C(U)^\eta]<\infty$ for some $\eta\in [2,\infty)$, and   constants  $\alpha \in [0,1]$ and $\bar{\gamma}_2\in(0,\infty]$ such that, for all $0<\tilde{\gamma}\leq \gamma\leq \bar{\gamma}_2$, all  $\theta,\theta'\in\R^d$ and all $u\in\mathsf{U}$, the following conditions hold:
   \begin{enumerate}
    \item $\|\nabla \Ell_\gamma(\theta,u)\|\leq \gamma^{\frac{\alpha-1}{2}}C(u)$,
      \item $\|\nabla \Ell_\gamma(\theta,u)-\nabla \Ell_\gamma(\theta',u)\|\leq \gamma^{-1+\frac{\alpha}{2}}\|\theta-\theta'\| C(u)$\,,
\item $\big|\Ell_{\tilde{\gamma}}(\theta,u)-\Ell_\gamma(\theta,u)\big|\leq (\gamma/\tilde{\gamma})^{\frac{d}{2}} \frac{\gamma-\tilde{\gamma}}{\tilde{\gamma}}C(u)$,
\item $\big\|\nabla \Ell_{\tilde{\gamma}}(\theta,u)-\nabla \Ell_\gamma(\theta,u)\big\|\leq \gamma^{1/2}(\gamma/\tilde{\gamma})^{\frac{d}{2}} \frac{\gamma-\tilde{\gamma}}{\tilde{\gamma}^{2}}C(u)$.
\end{enumerate}
\end{assumptionS}

\noindent In the deterministic setup, $C(u)$ is a constant and $u$ should be ignored throughout. In Section \ref{subsec:conv-general} we show the   two results presented below.
\begin{theorem}\label{thm:conv-general}
 Assume that \ref{assume:Gen1}-\ref{assume:Gen2} hold  and let $(\theta_n)_{n\geq 1}$ be as defined in \eqref{eq:theta_seq}, where $\beta_n=c_\beta n^{-\iota}$ and $\gamma_n=c_\gamma n^{-\kappa}$ for all $n\geq 1$ and for some constants $(c_\beta,c_\gamma)\in (0,\infty)^2$ and $(\iota,\kappa)\in(0,1]^2$. Let    $\alpha\in[0,1]$ and $\eta\in[2,\infty)$ be as in \ref{assume:Gen2}. Then,
\begin{enumerate}
\item if   $\kappa(2-3\alpha/2)<\iota$ we have $\liminf_{n\rightarrow\infty}\|\nabla L_{\gamma_n}(\theta_n)\|=0$, $\P-a.s$,
\item if in addition   $\min\{1-\kappa/2, \iota-\kappa(3/2-\alpha)\} >1/\eta$ then  $\lim_{n\rightarrow\infty}\|\nabla L_{\gamma_n}(\theta_n)\|=0$, $\P-a.s$.
\end{enumerate} 
\end{theorem}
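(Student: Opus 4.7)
I would organise the proof around a Lyapunov analysis of $V_n := L_{\gamma_n}(\theta_n) - c$, which is non-negative thanks to \ref{assume:Gen1}. A descent inequality applied to $L_{\gamma_n}$, whose gradient is Lipschitz with constant of order $\gamma_n^{-1+\alpha/2}$ after averaging \ref{assume:Gen2}(ii), combined with the bound from \ref{assume:Gen2}(iii) on the transition $L_{\gamma_n}(\theta_{n+1}) \to L_{\gamma_{n+1}}(\theta_{n+1})$, yields a one-step recursion
\begin{equation*}
\E[V_{n+1}\mid\F_n] \le V_n - \beta_n\|\nabla L_{\gamma_n}(\theta_n)\|^2 + A_n + B_n + M_{n+1},
\end{equation*}
where $A_n = O(\beta_n^2\gamma_n^{-2+3\alpha/2})$ collects the quadratic part of the descent lemma (using \ref{assume:Gen2}(i) together with $\E[C(U)^2]<\infty$ via $\eta\ge 2$), $B_n = O(n^{-1})$ encodes the smoothing transition, and $M_{n+1}$ is a martingale-difference term whose squared increments are summable so that $\sum M_n$ converges almost surely.

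For part (1) I would argue by contradiction: on the event $\{\|\nabla L_{\gamma_n}(\theta_n)\|\ge\epsilon \text{ eventually}\}$ the assumption $\kappa(2-3\alpha/2)<\iota$ gives $A_n = o(\beta_n)$, so eventually $\E[V_{n+1}\mid\F_n] \le V_n - \tfrac{1}{2}\epsilon^2\beta_n + B_n + M_{n+1}$; summing and using non-negativity of $V_n$, almost-sure convergence of $\sum M_n$, and control of $B_n$ (with the case $\iota=1$ requiring particular care, likely via a refined mean-value expansion of $L_\gamma$ in $\gamma$) forces $\sum_n\beta_n<\infty$, contradicting $\iota\le 1$. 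For part (2), the upgrade from $\liminf$ to $\lim$ requires that $\|\nabla L_{\gamma_n}(\theta_n)\|$ not oscillate, which I would establish by decomposing
\begin{equation*}
\nabla L_{\gamma_{n+1}}(\theta_{n+1}) - \nabla L_{\gamma_n}(\theta_n) = \bigl[\nabla L_{\gamma_n}(\theta_{n+1}) - \nabla L_{\gamma_n}(\theta_n)\bigr] + \bigl[\nabla L_{\gamma_{n+1}}(\theta_{n+1}) - \nabla L_{\gamma_n}(\theta_{n+1})\bigr]
\end{equation*}
and bounding the first term via \ref{assume:Gen2}(ii) applied to $\theta_{n+1}-\theta_n = -\beta_n\nabla\Ell_{\gamma_n}(\theta_n,U_{n+1})$ (yielding an increment of order $\beta_n\gamma_n^{-3/2+\alpha}C(U_{n+1})\E[C(U)] = O(n^{-(\iota-\kappa(3/2-\alpha))})$ with an iid factor) and the second via \ref{assume:Gen2}(iv) (yielding $O(n^{-(1-\kappa/2)})$). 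The additional assumption $\min\{1-\kappa/2,\iota-\kappa(3/2-\alpha)\}>1/\eta$ makes the $\eta$-th moments of these increments $n$-summable, so Markov combined with Borel--Cantelli produces almost-sure vanishing increments, and together with the liminf conclusion this gives $\lim_n\|\nabla L_{\gamma_n}(\theta_n)\| = 0$ almost surely.

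The main obstacle is the balancing act between the three competing small quantities $\beta_n$, $\gamma_n$ and the $\gamma$-dependent blow-ups in \ref{assume:Gen2}, arranged so that the descent term $\beta_n\|\nabla L_{\gamma_n}\|^2$ dominates each error contribution along the contradicting event; the boundary case $\iota=1$ is delicate because the deterministic $\gamma$-transition term $B_n$ is then of the same order as $\beta_n$, rather than $o(\beta_n)$. The lim upgrade, while conceptually a routine Markov plus Borel--Cantelli argument, demands careful bookkeeping at the $\eta$-th moment to land exactly on the stated thresholds $1-\kappa/2$ and $\iota-\kappa(3/2-\alpha)$.
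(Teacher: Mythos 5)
Your descent inequality and all of your increment estimates match the paper's (Lemmas \ref{lem:descent-general} and \ref{lem:Limsup-general} use exactly the decomposition and the orders $n^{-(\iota-\kappa(3/2-\alpha))}$ and $n^{-(1-\kappa/2)}$ you derive), but both halves of your argument have a step that fails. For part (1), your pathwise Lyapunov/contradiction argument needs $\sum_n M_{n+1}$ to converge almost surely, and you justify this by summability of the squared increments. Under \ref{assume:Gen2}(i) the best available bound is $\E[M_{n+1}^2\mid \F_n]=O(\beta_n^2\gamma_n^{2\alpha-2})=O(n^{-2\iota+2\kappa(1-\alpha)})$, so summability requires $\iota-\kappa(1-\alpha)>1/2$, which is \emph{not} implied by the hypothesis $\kappa(2-3\alpha/2)<\iota$ (take $\alpha=0$, $\kappa=0.2$, $\iota=0.45$: the hypothesis holds but $\iota-\kappa=0.25<1/2$). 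Without convergence of the martingale part, summing the conditional recursion on the contradiction event does not produce a contradiction. The paper sidesteps this entirely by working with unconditional expectations: it applies the deterministic sequence Lemma \ref{lemma:sequence} to $a_n=\E[L_{\gamma_n}(\theta_n)]$ and $b_n=\E[\|\nabla L_{\gamma_n}(\theta_n)\|^2]$ to obtain $\liminf_n\E[\|\nabla L_{\gamma_n}(\theta_n)\|^2]=0$, and then upgrades to the almost-sure statement by extracting a subsequence converging in $L^1$, hence in probability, hence a sub-subsequence converging almost surely. This only yields a $\liminf$, but that is all part (1) claims; no martingale convergence is needed.

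For part (2), the concluding logic is a non sequitur: almost-surely vanishing increments together with $\liminf_n\|\nabla L_{\gamma_n}(\theta_n)\|=0$ do \emph{not} imply $\lim_n\|\nabla L_{\gamma_n}(\theta_n)\|=0$ (consider $x_n=|\sin(\log n)|$, which has vanishing increments, liminf $0$ and limsup $1$). The missing ingredient is the upcrossing/dichotomy argument of Lemma \ref{lem:Limsup-general} (adapted from Patel, 2022): using your $\eta$-th moment bound on the increment one controls
\begin{equation*}
\P\Big(\|\nabla L_{\gamma_{n+1}}(\theta_{n+1})\|\geq \delta+\epsilon,\ \|\nabla L_{\gamma_{n}}(\theta_{n})\|\leq \delta\Big)\leq \epsilon^{-\eta}\,\E\big[\|\nabla L_{\gamma_{n+1}}(\theta_{n+1})-\nabla L_{\gamma_{n}}(\theta_{n})\|^{\eta}\big],
\end{equation*}
whose sum over $n$ is finite precisely under $\min\{1-\kappa/2,\iota-\kappa(3/2-\alpha)\}>1/\eta$. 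Borel--Cantelli then shows that almost surely, for each $\delta>0$, the sequence eventually either stays above $\delta$ or stays at or below $\delta$; part (1) rules out the first alternative for every $\delta$, which gives the limit. So your moment bookkeeping lands on exactly the right thresholds, but you must apply it to the joint (upcrossing) event rather than to the increment alone.
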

 
 The deterministic scenario requires weaker assumptions on $(\beta_n)_{n\geq1}$, $(\gamma_n)_{n\geq 1}$ and no moment condition.

\begin{theorem}\label{thm:conv-general-det}
 Assume that $\ell(\theta,u)=l(\theta)$ for all $(\theta,u)\in \R^d\times \mathsf{U}$ and that \ref{assume:Gen1}-\ref{assume:Gen2} hold,  and let $(\theta_n)_{n\geq 1}$ be as defined in \eqref{eq:theta_seq}, where $\beta_n=c_\beta n^{-\iota}$ and $\gamma_n=c_\gamma n^{-\kappa}$ for all $n\geq 1$ and for some constants $(c_\beta,c_\gamma)\in (0,\infty)^2$ and $(\iota,\kappa)\in(0,1]^2$. Let $\alpha\in[0,1]$ and $\eta\in[2,\infty)$ be as in \ref{assume:Gen2}. Then, there exists a constant $c_\star>0$ such that 
\begin{enumerate}
\item if $\kappa(1-\alpha/2)<\iota$  or we have both $\kappa(1-\alpha/2)=\iota$ and $c_\beta c_\gamma^{\alpha/2-1}< c_\star$ then $\liminf_{n\rightarrow\infty}\|\nabla L_{\gamma_n}(\theta_n)\|=0$,
\item if in addition $\kappa(3/2-\alpha)<\iota$  then $\lim_{n\rightarrow\infty}\|\nabla L_{\gamma_n}(\theta_n)\|=0$.
\end{enumerate}  
\end{theorem}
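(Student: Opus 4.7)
Since $\ell(\theta,u) = l(\theta)$, the function $C(u)$ reduces to a positive constant $C$ and we may suppress $u$ throughout. My strategy is a descent-plus-error scheme adapted to the variation of the smoothing parameter $\gamma_n$. Combining the quadratic upper bound coming from the $C\gamma_n^{-1+\alpha/2}$-Lipschitzness of $\nabla L_{\gamma_n}$ (condition~2 of \ref{assume:Gen2}) with the effect of moving from $\gamma_n$ to $\gamma_{n+1}$ at the point $\theta_{n+1}$ (condition~3) yields the one-step bound
\[
L_{\gamma_{n+1}}(\theta_{n+1}) \leq L_{\gamma_n}(\theta_n) - \beta_n\bigl(1 - \tfrac{C}{2}\beta_n\gamma_n^{-1+\alpha/2}\bigr)\|\nabla L_{\gamma_n}(\theta_n)\|^2 + e_n,
\]
with $e_n \leq C(\gamma_n/\gamma_{n+1})^{d/2}(\gamma_n-\gamma_{n+1})/\gamma_{n+1}$, a quantity that a direct calculation identifies as $O(n^{-1})$. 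Setting $c_\star := 2/C$, the curvature factor $\beta_n\gamma_n^{-1+\alpha/2} = c_\beta c_\gamma^{\alpha/2-1}n^{\kappa(1-\alpha/2)-\iota}$ is eventually bounded above by a constant strictly less than $c_\star$ under either regime of part~(i), so there is $\rho>0$ with $1 - \tfrac{C}{2}\beta_n\gamma_n^{-1+\alpha/2}\geq\rho$ for all sufficiently large $n$.

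For part~(i), telescoping the one-step bound from $1$ to $N$ and using \ref{assume:Gen1} on the left-hand side gives
\[
\rho\sum_{n=1}^N \beta_n\|\nabla L_{\gamma_n}(\theta_n)\|^2 \leq L_{\gamma_1}(\theta_1) - c + \sum_{n=1}^N e_n.
\]
Since $\iota \leq 1$, $\sum\beta_n$ diverges, while $\sum e_n = O(\log N)$. If $\liminf\|\nabla L_{\gamma_n}(\theta_n)\| > 0$ the left-hand side would eventually grow strictly faster than the right (polynomially when $\iota<1$), producing the contradiction.

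For part~(ii), I would run a standard oscillation argument. Supposing $\limsup\|\nabla L_{\gamma_n}(\theta_n)\| > 3\delta$ for some $\delta > 0$, the already-established liminf being zero yields index pairs $m_k < n_k$ along which $\|\nabla L\|$ rises from below $\delta$ to above $2\delta$ while remaining in $[\delta,2\delta]$. Telescoping $\nabla L_{\gamma_{n_k}}(\theta_{n_k}) - \nabla L_{\gamma_{m_k}}(\theta_{m_k})$ and using conditions~2 and~4,
\[
\|\nabla L_{\gamma_{n+1}}(\theta_{n+1}) - \nabla L_{\gamma_n}(\theta_n)\| \leq C\gamma_{n+1}^{-1+\alpha/2}\beta_n\|\nabla L_{\gamma_n}(\theta_n)\| + C\gamma_n^{1/2}(\gamma_n/\gamma_{n+1})^{d/2}(\gamma_n-\gamma_{n+1})/\gamma_{n+1}^2,
\]
in which the first piece is $O(n^{\kappa(1-\alpha/2)-\iota})$ and the second is $O(n^{\kappa/2-1})$. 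The strengthened hypothesis $\kappa(3/2-\alpha) < \iota$ is precisely what makes both pieces decay rapidly enough to, after an upper bound on $n_k - m_k$ extracted from the descent estimate of part~(i), produce a total variation smaller than $\delta$ over $[m_k,n_k]$ for large $k$ --- contradicting the required jump of $\delta$.

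The main obstacle I anticipate is item~(ii): coordinating two structurally different remainder profiles (one controlled by $\beta_n$, the other by the $\gamma$-spacing) so that both contributions shrink over the same window $[m_k,n_k]$, which is the reason for condition~4 and for the sharper exponent $3/2-\alpha$ in the hypothesis. A secondary subtlety lies in part~(i) at the boundary rate $\iota=1$, where both sides of the telescoped inequality are logarithmic in $N$, so the comparison of leading constants rather than orders must be handled explicitly (e.g., by reorganising the telescoping to exploit cancellations between consecutive smoothing errors).
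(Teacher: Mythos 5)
Your part~(i) is essentially the paper's own argument. The one-step bound you write down is \eqref{eq:des_det} of Lemma~\ref{lem:descent-general}, your $\rho$ is the paper's $\liminf_n\bigl(1-\bar C\beta_n\gamma_n^{\alpha/2-1}\bigr)>0$, your $c_\star=2/C$ plays the role of the paper's $1/\bar C$ (consistent with the remark that $c_\star=2$ when $\psi(x)=e^{-x}$, where $\bar C=1/2$), and the telescoping-by-contradiction step is exactly Lemma~\ref{lemma:sequence} applied with $\epsilon_n=\rho\beta_n$ and $\varrho_n=\bar C\delta_n\asymp n^{-1}$. The boundary case $\iota=1$ that you flag is real: there $\varrho_n/\epsilon_n\asymp n^{\iota-1}$ does not vanish, so Lemma~\ref{lemma:sequence} does not apply and a constant comparison does not obviously rescue it for small gradient levels; the paper's write-up does not treat this case separately either, so this is a shared issue rather than a defect of your plan.

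Part~(ii) is where you genuinely diverge from the paper, and where your route has a gap that I do not think can be closed as sketched. Your oscillation argument needs the \emph{window-summed} gradient variation $\sum_{j=m_k}^{n_k-1}\|\nabla L_{\gamma_{j+1}}(\theta_{j+1})-\nabla L_{\gamma_j}(\theta_j)\|$ to fall below $\delta$, and you propose to control the window length via the descent estimate. But the standard lever for that --- $\sum_n\beta_n\|\nabla L_{\gamma_n}(\theta_n)\|^2<\infty$ --- is unavailable here: telescoping \eqref{eq:des_det} only gives $\rho\sum_{n\le N}\beta_n\|\nabla L_{\gamma_n}(\theta_n)\|^2\le \mathrm{const}+\bar C\sum_{n\le N}\delta_n$, and $\sum_n\delta_n\asymp\sum_n n^{-1}$ diverges. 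What the descent estimate actually yields is only $\sum_{j\in(m_k,n_k)}\beta_j\lesssim\delta^{-2}\log n_k$, i.e. $n_k-m_k\lesssim\delta^{-2}n_k^{\iota}\log n_k$; feeding this back into the window variation, the $\beta$-weighted piece is bounded only by something of order $\gamma_{n_k}^{-1+\alpha/2}\sum_{j\in(m_k,n_k)}\beta_j\lesssim n_k^{\kappa(1-\alpha/2)}\log n_k$, which diverges, and the $\gamma$-drift piece $\sum_j j^{\kappa/2-1}$ over a window of that length need not vanish either. So no contradiction with the required jump of $\delta$ is produced, under any of the stated rate conditions. This is not a matter of ``coordinating the two remainder profiles'': the window mechanism itself is defeated by the non-summability of the smoothing drift. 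The paper avoids the problem entirely by never summing over a window: Lemma~\ref{lem:Limsup-general} (adapted from \citealp{Patel2022}) uses exactly the one-step decomposition you wrote --- with the displacement bounded via \ref{assume:Gen2}(1), which is where the exponent $\kappa(3/2-\alpha)$ rather than your $\kappa(1-\alpha/2)$ comes from --- to rule out \emph{single-step} upcrossings $\{\|\nabla L_{\gamma_n}(\theta_n)\|\le\delta,\ \|\nabla L_{\gamma_{n+1}}(\theta_{n+1})\|\ge\delta+\epsilon\}$ eventually, and then combines the resulting dichotomy with the $\liminf$ statement from part~(i). If you want to keep a window argument you would need an a priori bound on $n_k-m_k$ that does not pass through summability of $\beta_n\|\nabla L_{\gamma_n}(\theta_n)\|^2$; I do not see one, and I would recommend switching to the single-step crossing argument.
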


\begin{remark}
For the scenario where $\psi(x)=\exp(-x)$   the conclusion of Theorem \ref{thm:conv-general-det} holds with $c_\star=2$.  
\end{remark}

The next assumption is specific to the interpretation of the provided convergence results. The definition of epi-convergence is given in Appendix \ref{app:epi_conv_lsc}.
\begin{assumptionS}
    \label{assume:Gen3}
For any sequence $(\gamma_n)_{n\geq 1}$ in $\R_+$ such that $\lim_n\gamma_n =0$, the sequence $(L_{\gamma_n})_{n\geq 1}$ epi-converges  as $n\to \infty$ to the function  $l$ (see Appendix   \ref{app:epi_conv_lsc} for a definition of epi-convergence).  
\end{assumptionS}

Under Assumption \ref{assume:Gen3}, and some minimal regularity conditions on $l$, an interpretation of the convergence results of Theorems \ref{thm:conv-general} and \ref{thm:conv-general-det} can be given in terms of the following characterisation of local minima under epi-convergence (Theorem \ref{thm:loc_argmin_result}), which is an extension of a result of \citet{Ermoliev1995} taken from  \citet[Theorem 3.2]{andrieu2024gradientfreeoptimizationintegration}.

\begin{theorem}
    \label{thm:loc_argmin_result}
Assume that \ref{assume:Gen3} holds and that the function $l$ is locally integrable, lower bounded and lower semi-continuous. Then, for any $\theta_*\in \operatorname{loc-argmin}  l$, there exists a sequence $(\theta_n)_{n\geq 1}$ such that $\lim_{n\rightarrow\infty}\theta_n= \theta_{*}$ and $\lim_{n\rightarrow\infty}\|\nabla L_{\gamma_n}(\theta_n)\|=0$.
\end{theorem}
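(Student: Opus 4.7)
The plan is to leverage epi-convergence together with a small quadratic perturbation to turn $\theta_*$ into a strict local minimiser of the limit, so that minimisers of the perturbed smoothed objectives on a small closed ball are forced to converge to $\theta_*$; a first-order optimality condition then gives the gradient vanishing statement.

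\textbf{Step 1 (set-up).} Since $\theta_*\in\operatorname{loc-argmin}l$, I fix $\delta>0$ such that $l(\theta_*)\le l(\theta)$ for all $\theta$ in the closed ball $\bar B:=\bar B(\theta_*,\delta)$. For an arbitrary fixed $\epsilon>0$ I define on $\bar B$ the functions $\tilde L_n(\theta):=L_{\gamma_n}(\theta)+\epsilon\|\theta-\theta_*\|^2$ and $\tilde l(\theta):=l(\theta)+\epsilon\|\theta-\theta_*\|^2$. Under \ref{assume:Gen2} the function $L_{\gamma_n}$ is $C^1$ (dominated convergence justifies differentiation under the expectation thanks to parts 1--2), so $\tilde L_n$ is continuous on the compact set $\bar B$ and attains its minimum at some $\theta_n$. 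By construction, $\theta_*$ is the \emph{unique} minimiser of $\tilde l$ on $\bar B$: for $\theta\ne\theta_*$ one has $\tilde l(\theta)\ge l(\theta_*)+\epsilon\|\theta-\theta_*\|^2>\tilde l(\theta_*)$.

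\textbf{Step 2 (convergence $\theta_n\to\theta_*$).} Since epi-convergence is preserved under addition of a continuous function, \ref{assume:Gen3} implies that $(\tilde L_n)_{n\ge 1}$ epi-converges to $\tilde l$. I then use both halves of the epi-convergence definition. A recovery sequence at $\theta_*$ gives $\tilde\theta_n\to\theta_*$ with $\limsup_n \tilde L_n(\tilde\theta_n)\le \tilde l(\theta_*)=l(\theta_*)$; for $n$ large $\tilde\theta_n\in\bar B$, and minimality of $\theta_n$ yields $\limsup_n \tilde L_n(\theta_n)\le l(\theta_*)$. Conversely, extracting by compactness any convergent subsequence $\theta_{n_k}\to\theta'\in\bar B$, the $\liminf$ half of epi-convergence gives $\liminf_k \tilde L_{n_k}(\theta_{n_k})\ge \tilde l(\theta')$. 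Combining yields $\tilde l(\theta')\le l(\theta_*)=\tilde l(\theta_*)$, so uniqueness of the minimiser forces $\theta'=\theta_*$. As every convergent subsequence has the same limit, $\theta_n\to\theta_*$.

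\textbf{Step 3 (first-order condition).} For $n$ sufficiently large, $\theta_n$ lies in the interior of $\bar B$, and since $\tilde L_n$ is $C^1$, interior minimality gives $\nabla L_{\gamma_n}(\theta_n)+2\epsilon(\theta_n-\theta_*)=0$, whence $\|\nabla L_{\gamma_n}(\theta_n)\|=2\epsilon\|\theta_n-\theta_*\|\to 0$. This exhibits the required sequence.

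The main obstacle is Step 2: epi-convergence alone does not, in general, transfer local minimisers, and the quadratic penalty is introduced precisely to create a strict (and hence unique-on-$\bar B$) minimiser of the limit, which unlocks the standard epi-convergence argument. A minor but necessary check is the $C^1$ regularity of $L_{\gamma_n}$, which rests on differentiation under the integral sign, justified by parts 1--2 of \ref{assume:Gen2}.
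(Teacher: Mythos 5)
Your proof is correct, but it takes a different route from the paper in the sense that the paper does not actually prove this statement from scratch: it derives it by citing Theorem~\ref{thm:ermoliev-charact-local-min} in Appendix~\ref{app:epi_conv_lsc}, an Ermoliev-type characterisation of local minimisers under epi-convergence imported from \citet{Ermoliev1995} and the authors' earlier work, applied with $f=l$ and $f_n=L_{\gamma_n}$. What you have written is essentially a self-contained proof of that black-boxed result, via the classical Tikhonov-type device: the quadratic penalty $\epsilon\|\theta-\theta_*\|^2$ turns $\theta_*$ into the unique minimiser of the perturbed limit on $\bar B$, the stability of epi-convergence under addition of a continuous function lets you run the standard ``recovery sequence $+$ liminf inequality $+$ compactness'' argument to force $\theta_n\to\theta_*$, and interior first-order optimality then gives $\|\nabla L_{\gamma_n}(\theta_n)\|=2\epsilon\|\theta_n-\theta_*\|\to 0$ for the fixed $\epsilon$, which is exactly the claimed conclusion. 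All the steps check out, including the slightly delicate one that the liminf half of epi-convergence may be applied along the subsequence $\theta_{n_k}\to\theta'$ (extend to a full sequence constant equal to $\theta'$ off the subsequence). Two minor remarks: (i) you invoke \ref{assume:Gen2} to get $C^1$ regularity of $L_{\gamma_n}$, which is not among the stated hypotheses of the theorem — but differentiability of $L_{\gamma_n}$ is implicitly required for the statement to make sense, and mere differentiability (not $C^1$) is all your Steps~1 and~3 need; (ii) the hypothesis of local integrability of $l$ plays no role in your argument, which is consistent with it being needed only inside the cited result's own machinery. Your approach buys a transparent, elementary proof; the paper's buys brevity by outsourcing to the literature.
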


When combined with Theorem \ref{thm:loc_argmin_result},  
Theorems \ref{thm:conv-general} and  \ref{thm:conv-general-det} constitute a tool to identify local minima candidates: if the sequence
$(\theta_n)_{n\geq 1}$ defined in \eqref{eq:theta_seq}
converges to some $\theta$, then $\theta$ is a candidate local minimum of
$l$. This also leads to a number of consequences and stronger results, when more is
known on the objective function. For instance,  if $\ell(\theta,u)=l(\theta)$ for all $(\theta,u)\in \R^d\times \mathsf{U}$, the function  $\theta \mapsto l(\theta)$ is convex and
the function  $\theta\mapsto L_{\gamma}(\theta)$ is also convex for any $\gamma>0$, by using Attouch's theorem  \citep[Theorem 3.66]{attouch1984variational} one can easily show from Theorems \ref{thm:conv-general} and  \ref{thm:conv-general-det} that the sequence $(\theta_n)_{n\geq 1}$ defined in \eqref{eq:theta_seq} converges to the minimiser of $l$.


\subsection{Application to function smoothing and model-based optimization} \label{subsec:application-theory}

In this subsection we show that Theorems \ref{thm:conv-general}, \ref{thm:conv-general-det} and \ref{thm:loc_argmin_result} hold for the particular smoothed approximations of the function $\ell$ considered in $\eqref{eq:def-barl-gam}$ and \eqref{eq:def-l-gam}, under the following assumptions:

\begin{assumption}\label{assume:lower-bound}
$\inf_{\theta\in\R^d}\E[\ell(\theta,U)]>-\infty$.
 \end{assumption}
 
\begin{assumption}\label{assume:holder}
There exist constants  $\alpha\in[0,1]$, $\eta\in[2,\infty)$ and $\beta\in [\alpha,\infty)$  and a function $J:\mathsf{U}\to \R_+$, such that $\E[J(u)^\eta]<\infty$ and such that
\begin{align}\label{eq:Holder}
|\ell(\theta,u)-\ell(\theta',u)|\leq J(u)\big(\|\theta-\theta'\|^\alpha+\|\theta-\theta'\|^\beta\big),\quad\forall \theta,\theta'\in\R^d,\quad\forall u\in\mathsf{U}.
\end{align}
\end{assumption}

\begin{assumption}\label{assume:holder2}
There exist constants  $\alpha\in[0,1]$,  $\beta\in [\alpha,2]$ and $\upsilon\in(0,1]$ and and a function $J:\mathsf{U}\to \R_+$, such that \eqref{eq:Holder} holds and such that
\begin{align*}
\E\bigg[\int_{\R } e^{\upsilon J(U)(1+|z|^\beta)}\phi_{1,1}(z)\dd z\bigg]<\infty,\quad U\sim\P. 
\end{align*} 
\end{assumption}

Under Assumptions \ref{assume:lower-bound}-\ref{assume:holder2} we have the following two results, proved in Section \ref{proof_prop12}.

\begin{proposition}\label{prop:2point}
  Assume that \ref{assume:lower-bound} and \ref{assume:holder}. hold. Then, \ref{assume:Gen1}-\ref{assume:Gen2} hold  for $\psi(x)=x$, that is for $\Ell_\gamma$ defined 
  as in~\eqref{eq:def-barl-gam},  with $\alpha$ and $\eta$ as in  \ref{assume:holder}.
\end{proposition}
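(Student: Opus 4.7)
The strategy is to derive all four bounds in S2 from uniform controls on the $\theta$-derivatives of the Gaussian convolution $\Ell_\gamma(\theta,u)=\int \ell(\theta+y,u)\phi_{d,\gamma}(y)\,dy$, and to transfer $\gamma$-variations into moments of a Gaussian increment via the semigroup identity
\[
\Ell_\gamma(\theta,u)=\E\bigl[\Ell_{\tilde\gamma}(\theta+W,u)\bigr],\qquad W\sim\mathcal{N}\bigl(0,(\gamma-\tilde\gamma)I_d\bigr),\quad 0<\tilde\gamma\leq\gamma,
\]
which follows from $\phi_{d,\gamma}=\phi_{d,\tilde\gamma}\ast\phi_{d,\gamma-\tilde\gamma}$. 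I would fix $\bar\gamma_1=\bar\gamma_2=1$ so that $\tilde\gamma\leq\gamma\leq 1$ together with $\beta\geq\alpha$ in A2 allows every $\beta$-term to be absorbed into the corresponding $\alpha$-term. Assumption S1 is then immediate from A1 and Tonelli (applied to $\ell-\inf l\geq 0$): $L_\gamma(\theta)=\int l(x)\phi_{d,\gamma}(x-\theta)\,dx\geq \inf l>-\infty$.

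For S2.1 and S2.2 I differentiate under the integral (standard dominated convergence), use the Stein identities $\nabla_\theta\phi_{d,\gamma}(y)=(y/\gamma)\phi_{d,\gamma}(y)$ and $\nabla_\theta^2\phi_{d,\gamma}(y)=[yy^\top/\gamma^2-I/\gamma]\phi_{d,\gamma}(y)$, and subtract the constant $\ell(\theta,u)$ for free (since $\int y\phi_{d,\gamma}=0$ and $\int(yy^\top-\gamma I)\phi_{d,\gamma}=0$) to obtain
\[
\nabla\Ell_\gamma(\theta,u)=\int\bigl[\ell(\theta+y,u)-\ell(\theta,u)\bigr]\frac{y}{\gamma}\phi_{d,\gamma}(y)\,dy,
\]
with an analogous expression for $\nabla^2\Ell_\gamma$. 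Inserting the Hölder bound from A2 and the rescaling $y=\sqrt\gamma z$ yields $\|\nabla\Ell_\gamma(\theta,u)\|\leq CJ(u)\gamma^{(\alpha-1)/2}$ and $\|\nabla^2\Ell_\gamma(\theta,u)\|_{\mathrm{op}}\leq CJ(u)\gamma^{\alpha/2-1}$, which give S2.1 and (via the mean value theorem) S2.2 upon taking $C(u)=c\max\{1,J(u)\}$, whose $\eta$-th moment is finite by A2.

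For S2.3, first-order Taylor of $\Ell_{\tilde\gamma}$ inside the semigroup representation, combined with $\E[W]=0$ and the Hessian bound just derived, yields
\[
|\Ell_\gamma(\theta,u)-\Ell_{\tilde\gamma}(\theta,u)|\leq \tfrac{1}{2}\sup_{\theta'}\|\nabla^2\Ell_{\tilde\gamma}(\theta',u)\|_{\mathrm{op}}\,\E\|W\|^2\leq CdJ(u)\tilde\gamma^{\alpha/2-1}(\gamma-\tilde\gamma),
\]
and the elementary inequality $\tilde\gamma^{\alpha/2+d/2}\leq\gamma^{d/2}$ for $\tilde\gamma\leq\gamma\leq 1$ dominates the right-hand side by $(\gamma/\tilde\gamma)^{d/2}(\gamma-\tilde\gamma)\tilde\gamma^{-1}C(u)$. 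For S2.4 I differentiate the semigroup identity in $\theta$ (justified by dominated convergence using S2.1) to obtain $\nabla\Ell_\gamma(\theta,u)=\E[\nabla\Ell_{\tilde\gamma}(\theta+W,u)]$ and apply a second-order Taylor expansion of $\nabla\Ell_{\tilde\gamma}$, again exploiting $\E[W]=0$:
\[
\|\nabla\Ell_\gamma(\theta,u)-\nabla\Ell_{\tilde\gamma}(\theta,u)\|\leq \tfrac{1}{2}\sup_{\theta'}\|\nabla^3\Ell_{\tilde\gamma}(\theta',u)\|\,\E\|W\|^2.
\]
This requires an auxiliary third-derivative bound $\|\nabla^3\Ell_{\tilde\gamma}\|\leq CJ(u)\tilde\gamma^{\alpha/2-3/2}$, which I derive by one further iteration of the centering-plus-rescaling recipe, using that $\nabla_\theta^3\phi_{d,\tilde\gamma}(y)$ equals $\tilde\gamma^{-3/2}$ times a Hermite-type $3$-tensor in $y/\sqrt{\tilde\gamma}$ with vanishing Gaussian integral. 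Matching the resulting $\tilde\gamma^{\alpha/2-3/2}(\gamma-\tilde\gamma)$ against the target $\gamma^{1/2}(\gamma/\tilde\gamma)^{d/2}(\gamma-\tilde\gamma)\tilde\gamma^{-2}C(u)$ reduces to the elementary inequality $\tilde\gamma^{\alpha/2+1/2+d/2}\leq\gamma^{d/2+1/2}$, again valid for $\tilde\gamma\leq\gamma\leq 1$.

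The main technical obstacle is S2.4: a naive bound using only the Lipschitz gradient estimate from S2.2 would produce a remainder of order $\E\|W\|=O(\sqrt{\gamma-\tilde\gamma})$, which is too weak to match the required $(\gamma-\tilde\gamma)/\tilde\gamma^2$ scaling. The correct scaling only emerges once one uses $\E[W]=0$ to cancel the linear term in the Taylor expansion of $\nabla\Ell_{\tilde\gamma}$ and bounds the quadratic remainder via the third-derivative estimate, promoting $\E\|W\|$ to $\E\|W\|^2=d(\gamma-\tilde\gamma)$.
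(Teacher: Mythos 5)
Your proof is correct, and for the first two points of \ref{assume:Gen2} it coincides with the paper's argument (Lemma~\ref{lemma:standard-gaussian_smoothing}): differentiate under the integral, centre by subtracting $\ell(\theta,u)$ using the vanishing Gaussian moments, rescale by $\sqrt{\gamma}$, and invoke the H\"older bound of \ref{assume:holder}. For points 3 and 4, however, you take a genuinely different route. The paper controls the $\gamma$-variation by a direct pointwise comparison of the two Gaussian kernels, namely the bounds $|\phi_{d,\gamma}(z)-\phi_{d,\tilde\gamma}(z)|\leq C(\gamma/\tilde\gamma)^{d/2}\tfrac{\gamma-\tilde\gamma}{\tilde\gamma}(1+\gamma^{-1}\|z\|^2)\phi_{d,\gamma}(z)$ and its analogue for $\gamma^{-1}\phi_{d,\gamma}-\tilde\gamma^{-1}\phi_{d,\tilde\gamma}$ (Corollary~\ref{cor:delta-phi-gam}), and then integrates these against the centred increment of $\ell$. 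You instead exploit the heat-semigroup identity $\phi_{d,\gamma}=\phi_{d,\tilde\gamma}\ast\phi_{d,\gamma-\tilde\gamma}$ and Taylor-expand $\mathcal{L}_{\tilde\gamma}$ (resp.\ $\nabla\mathcal{L}_{\tilde\gamma}$) around $\theta$, killing the linear term with $\E[W]=0$ and paying for the quadratic remainder with a Hessian (resp.\ third-derivative) bound; your identification of the need for the auxiliary estimate $\|\nabla^3\mathcal{L}_{\tilde\gamma}\|\leq CJ(u)\tilde\gamma^{\alpha/2-3/2}$ is exactly the right observation, and the final reductions $\tilde\gamma^{\alpha/2+d/2}\leq\gamma^{d/2}$ and $\tilde\gamma^{\alpha/2+1/2+d/2}\leq\gamma^{1/2+d/2}$ are valid for $0<\tilde\gamma\leq\gamma\leq 1$. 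Your route in fact yields slightly sharper constants in $\gamma/\tilde\gamma$ (the factor $(\gamma/\tilde\gamma)^{d/2}$ enters only to match the stated form of \ref{assume:Gen2}, not because it is needed), at the price of one extra order of differentiation of the mollified function; the paper's kernel-difference lemma avoids third derivatives but carries the dimension-dependent ratio $(\gamma/\tilde\gamma)^{d/2}$ intrinsically. One cosmetic caveat: for \ref{assume:Gen1} the interchange of $\E$ and the convolution integral should be justified via the integrability supplied by \ref{assume:holder} (the increment $\ell(x,u)-\ell(\theta,u)$ is dominated by $J(u)$ times Gaussian-integrable powers), since $\ell(\cdot,u)-\inf l$ need not be pointwise nonnegative for fixed $u$; this is routine and does not affect the conclusion.
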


 \begin{proposition}\label{prop:Bayes}
 Assume that \ref{assume:lower-bound} and \ref{assume:holder2}   hold. Then, \ref{assume:Gen1}-\ref{assume:Gen2} hold  for both $\psi(x)=x$, and $\psi(x)=\exp(-x)$, or, in 
 other words, for $\Ell_\gamma$ defined either as~\eqref{eq:def-barl-gam} or~\eqref{eq:def-l-gam}, 
 and with $\alpha$ as in \ref{assume:holder2} and for any $\eta>0$.
 \end{proposition}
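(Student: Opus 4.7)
My plan is to verify \ref{assume:Gen1} and the four conditions of \ref{assume:Gen2} for each of the two choices of $\psi$. The case $\psi(x)=x$ is essentially already covered by Proposition~\ref{prop:2point}, since \ref{assume:holder2} implies \ref{assume:holder} (exponential integrability of $J(U)$ delivers every polynomial moment of $J(U)$); the novel content is therefore the log-Laplace case $\psi(x)=\exp(-x)$, for which $\Ell_\gamma(\theta,u)=-\log h_\gamma(\theta,u)$ with $h_\gamma(\theta,u)=\int e^{-\ell(x,u)}\phi_{d,\gamma}(x-\theta)\dd x$. The natural probabilistic reformulation is to introduce the tilted measure $\mu_{\gamma,\theta,u}(\dd x)\propto e^{-\ell(x,u)}\phi_{d,\gamma}(x-\theta)\dd x$, so that the standard log-derivative identities give $\nabla\Ell_\gamma = -\gamma^{-1}\E_{\mu}[X-\theta]$ and $\nabla^2\Ell_\gamma = \gamma^{-1}I - \gamma^{-2}\mathrm{Cov}_\mu(X)$. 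The substitution $Z=(X-\theta)/\sqrt\gamma$ sends $\mu$ to the density $e^{-r(z)}\phi_{d,1}(z)/h$, with $r(z)=\ell(\theta+\sqrt\gamma z,u)-\ell(\theta,u)$ and $h=\int e^{-r}\phi_{d,1}\dd z$.

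For \ref{assume:Gen1} I would apply Jensen's inequality to $-\log$, obtaining $\Ell_\gamma(\theta,u)\geq \int\ell(\theta+\sqrt\gamma z,u)\phi_{d,1}(z)\dd z$, and combine with \ref{assume:holder2} and \ref{assume:lower-bound} to conclude. For the first two parts of \ref{assume:Gen2} I would proceed in three steps. First, shrink $\bar\gamma_2$ so that the Hölder bound of \ref{assume:holder2} becomes $|r(z)|\leq \upsilon J(u)(1+\|z\|^\beta)$ uniformly for $\gamma\leq\bar\gamma_2$. Second, use the vanishing Hermite moments $\int z\phi_{d,1}\dd z=0$ and $\int(zz^\top-I)\phi_{d,1}\dd z=0$ to rewrite the numerators of $\nabla\Ell_\gamma$ and $\nabla^2\Ell_\gamma$ as $\int z(e^{-r}-1)\phi_{d,1}\dd z$ and $\int(zz^\top-I)(e^{-r}-1)\phi_{d,1}\dd z$; the inequality $|e^{-r}-1|\leq |r|e^{|r|}$, the Hölder estimate, and a Cauchy--Schwarz split these into a polynomial-Gaussian moment (finite) and a factor $\E_Z[e^{2\upsilon J(u)(1+\|Z\|^\beta)}]^{1/2}$, which is finite by \ref{assume:holder2} once one observes that $\|z\|^\beta\leq\sum_i|z_i|^\beta$ for $\beta\leq 2$ and tensorises the one-dimensional condition. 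Third, bound $h$ from below by Jensen's inequality, $h\geq \exp(-\int r\phi_{d,1}\dd z)\geq \exp(-CJ(u)\gamma^{\alpha/2})$, so that $h^{-1}$ contributes a factor $e^{CJ(u)\bar\gamma_2^{\alpha/2}}$ to the final constant. Multiplying through gives conditions~1 and~2 of \ref{assume:Gen2}.

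For the last two conditions of \ref{assume:Gen2} I would use the viscous Hamilton--Jacobi identity
\[
\pdiff{\Ell_\gamma}{\gamma} = \tfrac{1}{2}\Delta_\theta\Ell_\gamma - \tfrac{1}{2}\|\nabla\Ell_\gamma\|^2,
\]
which follows from $\partial_\gamma h_\gamma=\tfrac{1}{2}\Delta h_\gamma$ and the chain rule, to deduce pointwise estimates $|\pdiff{\Ell_\gamma}{\gamma}|\leq C(u)\gamma^{\alpha/2-1}$ and $\|\pdiff{\nabla\Ell_\gamma}{\gamma}\|\leq C(u)\gamma^{\alpha/2-3/2}$ from the previous step. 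Integrating in $\tau$ over $[\tilde\gamma,\gamma]$ and applying the mean-value theorem then yields $|\Ell_\gamma-\Ell_{\tilde\gamma}|\leq C(u)\tilde\gamma^{\alpha/2}(\gamma-\tilde\gamma)/\tilde\gamma$ and $\|\nabla\Ell_\gamma-\nabla\Ell_{\tilde\gamma}\|\leq C(u)\tilde\gamma^{(\alpha+1)/2}(\gamma-\tilde\gamma)/\tilde\gamma^2$, both stronger than required since $(\gamma/\tilde\gamma)^{d/2}\geq 1$. The main obstacle is moment bookkeeping: the constant $C(u)$ produced by the above construction has the form $P(J(u))e^{\lambda J(u)}$ with $\lambda$ proportional to $\bar\gamma_2^{\alpha/2}$, and to secure $\E[C(U)^\eta]<\infty$ for an arbitrarily prescribed $\eta\geq 2$ I would shrink $\bar\gamma_2$ further so that $\eta\lambda\leq \upsilon/2$, after which \ref{assume:holder2}'s exponential integrability closes the argument. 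This is precisely why \ref{assume:holder2} requires exponential rather than merely polynomial integrability of $J(U)$.
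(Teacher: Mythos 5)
The step that fails is your verification of \ref{assume:Gen1}. Jensen's inequality applied to the convex function $-\log$ gives
$\Ell_\gamma(\theta,u)=-\log\int e^{-\ell(\theta+\sqrt{\gamma}z,u)}\phi_{d,1}(z)\dd z\leq\int \ell(\theta+\sqrt{\gamma}z,u)\phi_{d,1}(z)\dd z$,
i.e.\ an \emph{upper} bound, whereas \ref{assume:Gen1} requires a \emph{lower} bound on $L_\gamma$; the inequality you wrote is the reverse of what Jensen delivers and is false in general. The correct lower bound is the other half of the paper's Bayes--Laplace sandwich (Theorem \ref{thm: bayes-laplace sandwich}), namely the Gibbs variational inequality $\Ell_\gamma(\theta,u)\geq \E_{\mu}\big[\ell(\theta+\sqrt{\gamma}Z,u)\big]$ with $\mu$ the tilted measure you already introduced. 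The gap is repairable inside your own framework: write the right-hand side as $\ell(\theta,u)+\E_\mu[r(Z)]$ and control $|\E_\mu[r(Z)]|$ by the same ratio estimate you use for the numerator of $\nabla\Ell_\gamma$ (this is exactly what the paper's Lemma \ref{lemma:S1forlgamma} does, via Lemma \ref{lemma:zero_mean}) — but as written the step does not go through.

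The rest is essentially sound. Your treatment of conditions 1--2 of \ref{assume:Gen2} (tilted measure, Hermite cancellation, $|e^{-r}-1|\leq|r|e^{|r|}$, and shrinking $\bar{\gamma}_2$ as a function of the prescribed moment order to keep the exponential factor integrable) mirrors the paper's Lemmas \ref{lemma:estimates-laplace}, \ref{lemma:zero_mean} and \ref{lemma:assume}, with Cauchy--Schwarz plus a Jensen lower bound on $h$ replacing the paper's covariance inequality for the ratio — a harmless substitution. Your route to conditions 3--4 via the identity $\partial_\gamma\Ell_\gamma=\tfrac12\Delta\Ell_\gamma-\tfrac12\|\nabla\Ell_\gamma\|^2$ is genuinely different from the paper's direct comparison of $\phi_{d,\gamma}$ and $\phi_{d,\tilde{\gamma}}$ (Corollary \ref{cor:delta-phi-gam}) and, if completed, yields cleaner bounds without the $(\gamma/\tilde{\gamma})^{d/2}$ factor. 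However, bounding $\partial_\gamma\nabla\Ell_\gamma$ requires an estimate on $\nabla\Delta\Ell_\gamma$, i.e.\ a third-order derivative bound of order $\gamma^{(\alpha-3)/2}$; this does not follow ``from the previous step'' (which only controls first and second derivatives) and is an additional computation of the same type that you would need to carry out and whose $u$-dependent constant must also be folded into the moment bookkeeping.
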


Both results rely on \eqref{eq:Holder} where the term depending on $\alpha$ controls the local regularity of $\theta \rightarrow \ell(\theta,u)$. Of particular interest is the scenario $\alpha = 0$ which allows one to consider bounded discontinuities. The term depending on $\beta$ controls large variations, and in particular the behaviour at infinity of this function. The main difference between the two families of smoothed approximations is the moment condition on $J(u)$, which is more stringent when $\psi(x)=\exp(-x)$ due to the presence of the exponential and the requirement to control a ratio.

We move to the characterisation of local minima under epi-convergence, showing that the assumptions of Theorem \ref{thm:loc_argmin_result}  hold  for the smooth approximations of $\ell$ defined in \eqref{eq:def-barl-gam} and \eqref{eq:def-l-gam}.
\begin{proposition}
    \label{prop:characterisation_specific}
    Assume \ref{assume:lower-bound}, \ref{assume:holder} and that $\theta\mapsto \ell(\theta,u)$ is lower-semicontinuous for $\P$-a.e. $u\in\mathsf{U}$. Assume further that for all $  \theta\in \R^d$ there exists a sequence $(\theta_n)_{n\geq 1}$ such that 
    $\lim_{n\rightarrow\infty}\theta_n= \theta$, such that $l$ is continuous at $\theta_n$ for every $n\geq 1$, and  such that $\lim_{n\rightarrow\infty} l(\theta_n)=l(\theta)$. Then \ref{assume:Gen3} holds for  $\Ell_\gamma$ defined either as~\eqref{eq:def-barl-gam} or~\eqref{eq:def-l-gam},
    and the function $l$ is a locally integrable, lower bounded and lower semi-continuous.
\end{proposition}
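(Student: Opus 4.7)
The proof separates into establishing the regularity properties of $l$ and verifying the epi-convergence claim for each of the two smoothed approximations \eqref{eq:def-barl-gam} and \eqref{eq:def-l-gam}. Lower boundedness of $l$ is immediate from \ref{assume:lower-bound}. Local integrability follows from the Hölder bound \eqref{eq:Holder}, which gives $|\ell(\theta,u)|\leq |\ell(0,u)|+J(u)(\|\theta\|^\alpha+\|\theta\|^\beta)$ and hence a locally bounded envelope for $l$. For lower semi-continuity, if $\theta_n\to\theta$ I would apply Fatou's lemma to $\ell(\theta_n,U)$, using pointwise lsc of $\ell(\cdot,u)$ for $\P$-a.e.\ $u$ together with the integrable lower envelope supplied by \eqref{eq:Holder} on the bounded set $\{\theta_n:n\geq 1\}$.

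For epi-convergence of $(L_{\gamma_n})_{n\geq 1}$ to $l$, I verify the two standard requirements at every $\theta\in\R^d$. The \emph{liminf inequality} $\liminf_n L_{\gamma_n}(\theta_n)\geq l(\theta)$ along an arbitrary sequence $\theta_n\to\theta$ is most conveniently obtained from the probabilistic rewriting $L_\gamma(\theta)=\E[\ell(\theta+\sqrt{\gamma}\,Z,U)]$ when $\psi(x)=x$, and $L_\gamma(\theta)=\E_U[-\log\E_Z[e^{-\ell(\theta+\sqrt{\gamma}\,Z,U)}]]$ when $\psi(x)=\exp(-x)$, with $Z\sim\mathcal{N}_d(0,I_d)$ independent of $U$. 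In the former case Fatou applies directly via the same Hölder-derived lower envelope. In the latter I would first apply reverse Fatou to the inner $\E_Z$ (using an exponential upper envelope from \eqref{eq:Holder}, integrable under the moment condition provided by \ref{assume:holder2}) to obtain $\limsup_n \E_Z[e^{-\ell(\theta_n+\sqrt{\gamma_n}\,Z,U)}]\leq e^{-\ell(\theta,U)}$ $\P$-a.s., then take $-\log$ and a second Fatou over $U$.

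For the \emph{recovery sequence}, I would first show that $L_\gamma(\theta^\star)\to l(\theta^\star)$ as $\gamma\to 0$ whenever $l$ is continuous at $\theta^\star$. When $\psi(x)=x$ this reduces to dominated convergence of $\E_Z[l(\theta^\star+\sqrt{\gamma}\,Z)]$ using the polynomial domination inherited from \eqref{eq:Holder}. When $\psi(x)=\exp(-x)$, Jensen's inequality supplies the upper bound $\Ell_\gamma(\theta^\star,u)\leq \E_Z[\ell(\theta^\star+\sqrt{\gamma}\,Z,u)]$ whose $U$-expectation tends to $l(\theta^\star)$ as above, while \eqref{eq:Holder} yields $\Ell_\gamma(\theta^\star,u)\geq \ell(\theta^\star,u)-\log\E_Z[e^{J(u)(\gamma^{\alpha/2}\|Z\|^\alpha+\gamma^{\beta/2}\|Z\|^\beta)}]$ whose correction vanishes as $\gamma\to 0$ by dominated convergence. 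The hypothesis of the proposition then furnishes, for each $\theta$, a sequence $\theta_k^\star\to\theta$ of continuity points of $l$ with $l(\theta_k^\star)\to l(\theta)$; a diagonal extraction $\theta_n:=\theta_{k_n}^\star$, with $k_n\to\infty$ sufficiently slowly, yields the required recovery sequence because $|L_{\gamma_n}(\theta_{k_n}^\star)-l(\theta_{k_n}^\star)|\to 0$ along that choice.

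I expect the main obstacle to lie in the $\psi(x)=\exp(-x)$ case: the log-sum-exp structure demands matching two-sided Hölder envelopes, and exchanging limit with integration over $(U,Z)$ rests on exponential moments of $J(U)\|Z\|^\beta$ against the joint Gaussian-$\P$ law, which is precisely the content of \ref{assume:holder2}. The $\psi(x)=x$ case and the regularity properties of $l$ itself reduce to routine Fatou and dominated-convergence manipulations.
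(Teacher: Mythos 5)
Most of your proposal is sound and proceeds by a more self-contained route than the paper: you verify the two defining conditions of epi-convergence directly via Fatou/dominated convergence with envelopes drawn from \eqref{eq:Holder} and \ref{assume:holder2}, whereas the paper delegates to three auxiliary results (Proposition~\ref{prop:l_lsc} for lower semi-continuity of $l$, and Theorems~\ref{thm: epi-convergence stochastic} and \ref{thm:epi_conv_standard} for the two smoothings), which in turn rest on a cited mollifier inequality, the Bayes--Laplace sandwich (Theorem~\ref{thm: bayes-laplace sandwich}), Stein's approximation-to-the-identity theorem, and an Ermoliev-style $\Liminf$-of-sets diagonal extraction. Your local H\"older envelope even sidesteps the condition $\E[\inf_{\theta}\ell(\theta,U)]>-\infty$ that the paper's auxiliary theorems require, which is formally stronger than \ref{assume:lower-bound}.

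There is, however, one step that fails as written: your lower bound for the recovery sequence in the $\psi(x)=\exp(-x)$ case. You claim $\Ell_\gamma(\theta^\star,u)\geq \ell(\theta^\star,u)-\log\E_Z\big[e^{J(u)(\gamma^{\alpha/2}\|Z\|^\alpha+\gamma^{\beta/2}\|Z\|^\beta)}\big]$ with a correction that ``vanishes as $\gamma\to 0$.'' When $\alpha=0$ the term $\gamma^{\alpha/2}\|Z\|^\alpha$ is identically $1$, so the correction tends to $-J(u)\neq 0$; the pointwise H\"older bound on $\ell(\cdot,u)$ simply does not localise at $\theta^\star$, and continuity of $l$ at $\theta^\star$ does not transfer to $\ell(\cdot,u)$. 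Since $\alpha=0$ (bounded discontinuities) is precisely the regime the paper emphasises and the one arising in its AUC example, this is not a corner case. The repair is the one the paper uses: at a continuity point $\theta^\star$ of $l$, take only the \emph{upper} half from Jensen, $\Ell_\gamma(\theta^\star,u)\le \E_Z[\ell(\theta^\star+\sqrt{\gamma}Z,u)]$, whose $U$-expectation converges to $l(\theta^\star)$, and obtain the matching lower half by applying the liminf inequality you have already established to the constant sequence $\theta_n\equiv\theta^\star$. With that substitution your diagonal extraction goes through, and the remaining integrability details you flag (choosing $\gamma$ small enough that the exponential envelopes fall within the moment condition of \ref{assume:holder2}) are handled in the paper by Lemmas~\ref{lemma:assume} and \ref{lemma:zero_mean}.
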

 
We finally remark that if $\ell(\theta,u)=l(\theta)$ for all $(\theta,u)\in \R^d\times \mathsf{U}$ and $l$ is convex, then it is easy to see that $\Ell_{\gamma}$, is also convex for any $\gamma>0$, 
(whether~\eqref{eq:def-barl-gam} or~\eqref{eq:def-l-gam} hold), 
leading to stronger convergence results provided by the previously mentioned Attouch theorems.

 \subsection{Links to the literature} \label{subsec:links-to-the-literature}

\cite{gupal_norkin} also consider the optimisation of a discontinuous function using a recursion similar to \eqref{eq:theta_seq} with $\Ell_\gamma$ of the form \eqref{eq:def-barl-gam}, but where the smoothing kernel is uniform, requiring the use of two successive convolutions to ensure differentiability. The nature of their theoretical result is similar to ours, but exploits boundedness of the kernel support and assumes the sequence $(\theta_n)_{n \geq 1}$ to be bounded, which we do not require. In addition, we show that the use of a Gaussian kernel leads to more favourable dependence on the smoothing parameter of the gradient, therefore relaxing assumptions on step-sizes. Gaussian smoothing is mentioned in \citet{Ermoliev1995} in a remark while \citet{nesterov2017random} seems to be the first thorough study of a version of the algorithm. More specifically, in the scenario where $(\gamma_n)_{n\geq 1}$ is constant, they obtain complexity bounds under the assumption of convexity of $l$ \citet[][Sections 4 and 5]{nesterov2017random} or L-smoothness or global Lipschitz continuity \citet[][Section 7]{nesterov2017random}, therefore not covering the discontinuous setup.

\cite{starnes2023gaussian} also consider the algorithm discussed above, which they call GSMoothGD (Gaussian Smoothing Gradient Descent), with constant step-sizes and potentially vanishing smoothing. Under the assumption of L-smoothness on $\theta\mapsto l(\theta)$ they establish convergence estimates of the function values along the iterations in the convex scenario \citep[][Theorem 3.1]{starnes2023gaussian} and bounds on the minimum gradient norm encountered in the non-convex scenario \citep[][Theorem 3.3]{starnes2023gaussian}, akin to a $\liminf$ result. Our results complement theirs by considering rougher functions, including potentially discontinuous ones, and utilising vanishing step-sizes, both in the deterministic and stochastic scenarios. Not surprisingly, the results we obtain are weaker given the weaker assumptions considered and in line with the corresponding literature \citep{gupal_norkin,Ermoliev1995}. In \cite{starnes2024anisotropicgaussiansmoothinggradientbased} the authors consider extensions to the stochastic scenario and where the covariance matrix of the normal kernel is adapted. In both contributions, the authors also provide an in-depth literature review, discussing links with natural evolution strategy algorithms.

In our earlier contribution \citep{andrieu2024gradientfreeoptimizationintegration}, we established convergence of \eqref{eq:theta_seq},  in the situation where 
(a) $\Ell_\gamma$ is as in \eqref{eq:def-l-gam},
(b) in the deterministic scenario and 
(c) for $(\beta_n)_{n\geq 1}=(\gamma_n)_{n\geq 1}$ only. Here, thanks to a particular effort to obtain good estimates of the quantities in \ref{assume:Gen2} (see Lemma \ref{lemma:estimates-laplace}), we lift these three restrictions, leading to broader applicability. We further establish epiconvergence of $\Ell_\gamma(\theta,u)$ to $\ell(\theta,u)$ as $\gamma \rightarrow 0$, required for the interpretation of accumulation points \citep{Ermoliev1995} in the discontinuous scenario. 

Interestingly, this work also establishes a link to the model-based search literature \citep{article,rubinstein2004cross}, where convergence results seem scarce or specific \citep[e.g.,][]{COSTA2007573}, by reinterpreting the following procedure as a recursion of the form \eqref{eq:theta_seq}. With here $\pi_{\theta,\gamma}(x):=\phi_{d,\gamma}(x-\theta)$ for $(\theta,\gamma) \in \R^d\times \R_+$ and $(\gamma_n)_{n\geq 1}$ a vanishing sequence of step sizes, the algorithm consists of constructing a sequence of probability densities $(\pi_{\theta_n,\gamma_n})_{n\geq 1}$  concentrating on local minima of $l$. Specifically, for a sequence of random variables $(U_n)_{n\geq1}$ as above, the recursive algorithm proceeds as follows. At iteration $n+1$, (a) apply a pseudo Bayes' rule
$\tilde{\pi}_{n+1}(x)\propto \pi_{\theta_n,\gamma_n}(x)\exp\left\{-\ell(x,U_{n+1})\right\}$ (b) project $\tilde{\pi}_{n+1}$ back onto the Gaussian family, i.e.~obtain $\pi_{\theta_{n+1},\gamma_{n+1}}$ with $\theta_{n+1}\in \argmin_{\theta\in \Theta} \mathcal{\mathrm{KL}}(\tilde{\pi}_{n+1}, \pi_{\theta,\gamma_{n}})$.
 It can be shown that the resulting sequence $(\theta_n)_{n \geq 1}$ can be written as \eqref{eq:theta_seq} in the situation where $\Ell_\gamma$ is as in \eqref{eq:def-l-gam}. In the deterministic scenario, this algorithm can also be interpreted as a greedy coordinate descent algorithm applied to the functional
$\Phi_1(\nu,\theta;\gamma)=\int l(x)\nu(x){\rm d}x+{\rm KL}(\nu,\pi_{\theta,\gamma})$; the stochastic scenario is similar. Indeed, one can check (see Appendix~\ref{app:coordinate-descent}) that each iteration amounts to alternating Bayes' rule and the minimisation of
\begin{equation*}
\theta \mapsto l_\gamma(\theta) \coloneq -\log\left(\int e^{-l(x)}
\gamma^{-d/2}\phi_{d,\gamma}\left(\frac{x-\theta}{\sqrt{\gamma}}\right)\mathrm{d} x\right) \,,
\end{equation*}
which, as discussed below, can be achieved by moment matching. Noting that
\begin{equation}
\nabla l_\gamma(\theta) \coloneq -\gamma^{-1}[\E_{\tilde{\pi}_n}(X)-\theta],
\end{equation}
one can replace local exact minimisation of $l_\gamma(\theta)$ with a gradient descent step leading, for a sequence of step-sizes $(\beta_n)_{n\geq 1}$, to the recursion \eqref{eq:theta_seq} taking the form
\begin{equation}\label{eq:gradient_descent_with_moment_match}
    \theta_{n+1} = \theta_n -\beta_n \gamma_n^{-1}[\theta_n-\E_{\tilde{\pi}_n}(X)] \,.
\end{equation}
We notice that when $\beta_n = \gamma_n$ we recover moment matching $\theta_{n+1}=\E_{\tilde{\pi}_n}(X)$, the particular scenario covered in \cite{andrieu2024gradientfreeoptimizationintegration}. Decoupling $(\beta_n)_{n\geq 1}$ and $(\gamma_n)_{n\geq 1}$ however allows for better control of the irregularities induced by the discontinuities of the objective function.

\section{Example: A discontinuous and noisy optimization problem} \label{sec:application}

\subsection{Set-up}

We illustrate the relevance of our theory on a classification example based on \citet{ustats}. We let $\{(z_{i},y_{i})\in
\mathbb{R}^{p}\times\{-1,1\}\colon i=1,\ldots,n_\mathrm{data}\}$ be a training dataset (with $p\geq 2$), assumed to
arise from a probability distribution $\mathbb{P}$, and we  wish to construct a
score function $s\colon \mathbb{R}^{p} \to \mathbb{R}$ such that the empirical version of the probability
\begin{align}\label{eq:auc}
\mathbb{P}\big(\left[ s(Z)-s(Z')\right] (Y-Y')<0\big)\,,\quad (Z,Y),(Z',Y')\iid  \mathbb{P}
\end{align}
 is as small as possible. The quantity \eqref{eq:auc} is often called the area under curve (AUC) risk function, and  this criterion has the advantage to be less sensitive to class imbalance
than other more standard classification criteria.  

For any $\vartheta\in\R^p$ we let $s_\vartheta(z)=\vartheta^\top z$ for all $z\in\R^p$, and below we focus  on the scenario where the function $s$ must be chosen from the set $\{s_\vartheta,\,\vartheta\in \R^p\}$. Without loss of generality,  we can assume that the observations in the training sets are labeled so that   
\begin{align*}
y_i=
\begin{cases}
1  &i\in\{1,\ldots,n_{+}\}\\
-1 &i\in\{n_{+}+1,\ldots,n_{\mathrm{data}}\}
\end{cases},\quad  n_{+}:=\sum_{i=1}^{n_{{\rm data}}}\mathbf{1}(y_{i}=1).
\end{align*}  
Letting $I=\{(i,j)\colon i\in\{1,\ldots,n_{+}\},j\in\{n_{+}+1,\ldots,n_{\mathrm{data}}\}\}$, the parameter $\vartheta$ is then chosen by minimizing the empirical risk  $E:\R^p\mapsto [0,\infty)$, defined by
\begin{equation} 
\begin{split}
E(\vartheta)&=\frac{\sum_{(i,j)\in I}\mathbf{1}\{s_{\vartheta}(z_{i})<s_{\vartheta}(z_{j})\}}{n_\mathrm{data}(n_\mathrm{data}-1)}=\frac{\sum_{(i,j)\in I}\mathbf{1}\{\vartheta^{\top}(z_{i}-z_{j})<0\}}{n_\mathrm{data}(n_\mathrm{data}-1)},\quad \quad  \vartheta\in\R^p.\label{eq:barl-AUC}
\end{split}
\end{equation}

\subsection{Definition of the functions $l(\theta)$ and $\ell(\theta,u)$ }

While minimising  the empirical risk $E(\vartheta)$ is a classical problem in machine learning, two issues arise when this procedure is used for choosing  a score function $s\in\{s_\vartheta,\,\vartheta\in\R^p\}$. The first problem is that $\vartheta=0$ is a global minimum of $E(\cdot)$, which
is pathological, and the second problem is that the mapping  $\vartheta \mapsto E(\vartheta)$ is rescaling invariant.

To address these two  problems,  we can exploit the fact that for $\vartheta\neq0$ we have $E(\vartheta)=E(\vartheta/\|\vartheta\|)$ to restrict our attention to minimizers of the function $E$ that lie on  $ \mathcal{S}_p$,  the unit hypersphere in $\R^p$. To do so, we let $\sigma\colon \mathcal{S}_p \setminus \{\mathbf{e}_p\} \mapsto\mathbb{R}^{p-1}$ be the stereographic projection of centre $\mathbf{e}_{p}:=(0,\ldots,0,1)\in\mathbb{R}^{p}$, defined by 
\begin{equation}
    \sigma(\vartheta)=\left(  \frac{\vartheta_1}{1-\vartheta_p}, \frac{\vartheta_2}{1-\vartheta_p},\ldots,  \frac{\vartheta_{p-1}}{1-\vartheta_{p}}\right),\quad (\vartheta_1,\dots,\vartheta_p)\in  \mathcal{S}_p.
\end{equation}
It can be easily shown that the mapping $\sigma$ is bijective, with inverse mapping  $\sigma^{-1}\colon\mathbb{R}^{p-1}\rightarrow\mathcal{S}_p$, which is  Lipschitz continuous and defined by
\[
\sigma^{-1}(\theta ) = 
\left( \frac{2\theta_{1}}{\|\theta\|^{2}+1},\ldots,\frac{2\theta_{d-1}}{\|\theta\|^{2}+1},\frac{\|\theta\|^{2}-1}{\|\theta\|^{2}+1}\right),\quad(\theta_1,\dots,\theta_{p-1})\in \R^{p-1}.
\] 
With this notation in place,  our classification problem can be   reformulated as choosing an element in the set $\{s_{\sigma(\theta)},\,\theta\in\R^{p-1}\}$ by minimizing the function $l:\R^{p-1}\rightarrow [0,\infty)$ defined by 
\begin{align*} 
l(\theta)= E\circ\sigma^{-1}(\theta),\quad \theta\in\R^{p-1}.
\end{align*}

\begin{remark}
This definition of the function $l$ implicitly assumes that the point $\mathbf{e}_{p}=(0,\ldots,0,1)\in\mathbb{R}^{p}$ is not a minimizer of the empirical risk $E(\cdot)$, since there exists no $\theta\in \R^{p-1}$ such that $\sigma^{-1}(\theta)=\mathbf{e}_{p}$. 
\end{remark}

To reduce computations one may replace the sum appearing in \eqref{eq:barl-AUC} by an
unbiased estimate, following the standard mini-batch approach popular in machine
learning.  To this aim let $n_{\mathrm{batch}}\in\{1,\dots,\sharp I\}$  denote the batch size and, for any element  $u=\big( (i_{1},j_{1}),\dots,(i_{n_{\mathrm{batch}}},j_{n_{\mathrm{batch}}})\big)$ of  $\mathsf{U}:=I^{n_{\mathrm{batch}}}$,  let
\begin{equation}
\mathcal{E}(\vartheta,u) = \frac{2n_{+}(n_{\mathrm{data}}-n_+)}{n_\mathrm{data}(n_\mathrm{data}-1)n_{\mathrm{batch}}}\sum_{k=1}^{n_{\mathrm{batch}}}\mathbf{1}\{s_{\vartheta}(z_{i_{k}})<s_{\vartheta}(z_{j_{k}})\},\quad\forall \vartheta\in \R^p.\label{eq:def_auc_minibatch}
\end{equation}

With this notation in place, we let $\ell(\theta,u)=\mathcal{E}(\sigma^{-1}(\theta), u)$ for all pairs $(\theta,u)\in\R^{p-1}\times  \mathsf{U}$. Then, letting  $\mathbb{P}$ denote the uniform distribution on $\mathsf{U}$, we can easily check that for all $\theta\in\R^{p-1}$ we have $\E[\ell(\theta,U)]=l(\theta)$ if $U\sim \mathbb{P}$. Observe that the two functions $\theta\mapsto l(\theta)$ and $\theta\mapsto \ell(\theta,u)$ are discontinuous.


Since each term in the sum appearing in \eqref{eq:def_auc_minibatch} is equal to zero or one,  it readily follows that \ref{assume:lower-bound} holds while \eqref{eq:Holder} is satisfied for some bounded function $J:\mathsf{U}\to \R_+$ and with $\alpha=0$. In this context,   condition  \ref{assume:holder} holds for any $\eta\in [2,\infty)$ and condition \ref{assume:holder2} holds for any $\beta\in[0,2]$ and $\upsilon\in(0,1]$. The following proposition, proved in Section \ref{p-propExample}, establishes   that the functions $l$ and $\ell$ defined above also satisfy the assumptions of Proposition \ref{prop:characterisation_specific}.

\begin{proposition}\label{prop:lsc-unconstrained-example}
For any $\vartheta\in \R^d$ let $D_{\vartheta}=\left\{ (i,j)\in I\colon\vartheta^{\top}(z_{i}-z_{j})=0\right\}$ and assume that the following condition holds
\begin{align*}
\vartheta\neq 0\implies (z_{i_{1}}-z_{j_{1}})^{\top}(z_{i_{2}}-z_{j_{2}})\geq 0,\quad \forall (i_{1},j_{1}),(i_{2},j_{2})\in D_{\vartheta}.
\end{align*}
Then, the functions $l$ and $\ell$ defined above satisfy the assumptions of Proposition~\ref{prop:characterisation_specific}.
\end{proposition}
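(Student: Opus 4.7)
Assumptions \ref{assume:lower-bound} and \ref{assume:holder} have already been verified in the surrounding discussion, so I only need to check the two remaining hypotheses of Proposition~\ref{prop:characterisation_specific}: (i) lower semi-continuity of $\theta\mapsto\ell(\theta,u)$ for every $u\in\mathsf{U}$, and (ii) the existence, for every $\theta\in\R^{p-1}$, of a sequence $(\theta_n)_{n\geq 1}$ converging to $\theta$ along which $l$ is continuous at every $\theta_n$ and $l(\theta_n)\to l(\theta)$. Part (i) I would dispatch quickly: the scalar indicator $x\mapsto\mathbf{1}\{x<0\}$ is lower semi-continuous (its sublevel sets are closed), precomposition with the continuous map $\theta\mapsto\sigma^{-1}(\theta)^\top(z_{i_k}-z_{j_k})$ preserves lsc, and a non-negative linear combination of lsc functions is lsc, giving the claim for $\ell(\cdot,u)$ and, by Fatou, for $l$ as well.

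Part (ii) is the substantive step. I would fix $\theta\in\R^{p-1}$, set $\vartheta_0:=\sigma^{-1}(\theta)\in\mathcal{S}_p\setminus\{\mathbf{e}_p\}$ (so $\vartheta_0\neq 0$), and split $D:=D_{\vartheta_0}$ into $D'=\{(i,j)\in D:z_i\neq z_j\}$ and $D\setminus D'$. When $D'=\emptyset$, the indicators indexed by $D\setminus D'$ are identically zero and those indexed by $I\setminus D$ are continuous at $\vartheta_0$, so $l$ is already continuous at $\theta$ and I would take $\theta_n=\theta$. Otherwise, I would apply the hypothesis at $\vartheta_0$ to obtain that $\{z_i-z_j:(i,j)\in D'\}$ has pairwise non-negative inner products, and define
\begin{equation*}
w:=\sum_{(i,j)\in D'}(z_i-z_j).
\end{equation*}
Expanding $\|w\|^2$ and $w^\top(z_{i_0}-z_{j_0})$ as sums of inner products then shows $w\neq 0$ and $w^\top(z_{i_0}-z_{j_0})\geq\|z_{i_0}-z_{j_0}\|^2>0$ for every $(i_0,j_0)\in D'$.

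I would then set $\vartheta_n:=(\vartheta_0+n^{-1}w)/\|\vartheta_0+n^{-1}w\|\in\mathcal{S}_p$ and $\theta_n:=\sigma(\vartheta_n)$; both are well defined for large $n$ since $\vartheta_0\neq 0,\mathbf{e}_p$, and $\theta_n\to\theta$ by continuity of $\sigma$ on $\mathcal{S}_p\setminus\{\mathbf{e}_p\}$. A case analysis then completes the argument: indicators indexed by $(i,j)\notin D$ retain the sign of $\vartheta_0^\top(z_i-z_j)\neq 0$ for $n$ large; indicators indexed by $(i,j)\in D\setminus D'$ are identically zero; and indicators indexed by $(i,j)\in D'$ satisfy $\vartheta_n^\top(z_i-z_j)>0$, so take value zero, matching the value at $\vartheta_0$. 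Hence $l(\theta_n)=l(\theta)$ for $n$ large, and $D_{\vartheta_n}$ contains no pair with $z_i\neq z_j$, which means $l$ is continuous at $\theta_n$; dropping the initial terms supplies the required sequence.

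The main obstacle is choosing a perturbation direction that simultaneously breaks every active orthogonality $\vartheta_0^\top(z_i-z_j)=0$ in a sign-consistent way, so that the perturbed point lies in the continuity locus of $l$ \emph{without} inflating any indicator value. The hypothesis on pairwise non-negative inner products is tailored exactly so that the natural choice $w=\sum_{(i,j)\in D'}(z_i-z_j)$ works, via the elementary Gram-matrix positivity computation above.
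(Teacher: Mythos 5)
Your proof is correct, and it follows the same overall strategy as the paper: characterise the discontinuity set of $\ell(\cdot,u)$ (equivalently $l$) via $D_{\vartheta}$, verify lower semi-continuity termwise, and then perturb $\sigma^{-1}(\theta)$ in a direction that makes every active constraint strictly positive so as to land in the continuity locus without increasing any indicator. The one genuine difference is your choice of perturbation direction $w=\sum_{(i,j)\in D'}(z_i-z_j)$ rather than the paper's single direction $(z_{i_0}-z_{j_0})/\|z_{i_0}-z_{j_0}\|$ for one fixed $(i_0,j_0)\in D_{u,\theta}$. This is not merely cosmetic: the hypothesis only gives $(z_{i_0}-z_{j_0})^\top(z_i-z_j)\ge 0$ for the other active pairs, so the paper's construction guarantees $\vartheta_n^\top(z_i-z_j)\ge 0$ but its claim that $\vartheta_n^\top(z_i-z_j)\neq 0$ for \emph{all} $(i,j)\in D_{u,\theta}$ really relies on the generic-position remark ($\sharp D_\vartheta\le 1$); your Gram-type computation $w^\top(z_{i_0}-z_{j_0})\ge\|z_{i_0}-z_{j_0}\|^2>0$ delivers strict positivity against every pair in $D'$ simultaneously, which is exactly what the continuity-at-$\theta_n$ claim needs. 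Two further refinements on your side are welcome: isolating the degenerate pairs $z_i=z_j$ into $D\setminus D'$ (whose indicators are identically zero, so they never obstruct continuity and the division by $\|z_{i_0}-z_{j_0}\|$ is never vacuous), and renormalising $\vartheta_0+n^{-1}w$ back onto $\mathcal{S}_p$ before applying $\sigma$, which the paper's proof skips even though $\sigma$ is only defined on $\mathcal{S}_p\setminus\{\mathbf{e}_p\}$ (harmless for signs, since the normalising factor is positive, but your version is cleaner). The lsc step via closed sublevel sets of $x\mapsto\mathbf{1}\{x<0\}$ is an equivalent, slightly slicker packaging of the paper's liminf decomposition.
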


\begin{remark}
The condition imposed on the observations may appear strong. However we remark that it holds with probability
one if, for example,  the observations $z_{1},\ldots,z_{n_{{\rm data}}}$ are independent realizations of a distribution with positive density
w.r.t.~the Lebesgue measure. It is indeed highly improbable that more than one of the differences $\{z_i-z_j, (i,j)\in I\}$ belong to a hyperplane $\vartheta^\top z = 0$ for some $\vartheta \in\mathbb{R}^{d}$. In other words for any $\vartheta\in\mathbb{R}^{d}\setminus\{0\}$ we have $\sharp D_{\vartheta}\in\{0,1\}$. 
\end{remark}

\subsection{Numerical experiments}

\newcommand{\ndat}{n_\text{data}}
\newcommand{\nbatch}{n_\text{batch}}
\newcommand{\ind}{\mathbf{1}}  
\newcommand{\Cp}{\mathcal{C}^+} 
\newcommand{\Cn}{\mathcal{C}^-} 

We consider the Fashion-MNIST dataset \citep{fashion},
which consists of $\ndat=7\times 10^4$ grayscale images,
of size $p=28\times 28 = 784$, of fashion items organised in ten categories
(t-shirt/top, trousers, etc.). We set $y_i=1$ if item $i$ falls in the first
category (t-shirt/top), $y_i=-1$ otherwise, leading to class unbalance 
($n_{+} = 7\,000$, $n_{-} = 63\ 000$). Recall that one of the appeals of the AUC criterion 
is that it is less sensitive to class unbalance. 

For this dataset, even a single evaluation of the AUC criterion takes several
seconds on a recent computer, making the optimisation of this function
particularly challenging. 
To address this issue, we apply two standard machine learning tricks: 
(a) we split the data into a training set (90\% randomly selected datapoints) and a test 
set (the remaining 10\%).  Our optimiser is run on the former, and evaluated on the latter.
(b)  we use the mini-batch strategy described in the previous sub-section, 
with $\nbatch = 10^3$.
  


We set $\gamma_n = \gamma_1 / n^\kappa$, with $\gamma_1 = 0.2$, $\kappa=0.2$,
and $\beta_n = \beta_1 / n^\iota$, with $\beta_0=0.2$, $\iota=0.5$. 
(Thus, the conditions of Theorem~\ref{thm:conv-general} hold, since we can take $\alpha=0$.)
We focus on the map $\psi(x)=\exp(x)$, and thus the gradient descent recursion takes the form given 
by~\eqref{eq:gradient_descent_with_moment_match}. 
As in~\cite{andrieu2024gradientfreeoptimizationintegration}, 
we replace the expectation with an importance sampling estimate computed over a set of
$N=1024$ i.i.d. (independent and identically distributed) Gaussian variates.
In addition, we use the same rescaling strategy as in the aforementioned paper; that is, remarking that 
the minimiser of $l(\theta)$ does not change if we multiply $l$ by an arbitrary, positive constant, 
we rescale $l$ at each iteration to keep the ESS (effective sample size), or equivalently,
the variance of the normalised importance sampling weights, constant. 
Formally, our convergence results do not result when we apply this rescaling strategy, but,
in practice, rescaling seems to accelerate convergence significantly. Analysing the convergence
of the algorithm based on rescaling (and Monte Carlo estimates) is left to future work;
see next section.

We run our gradient descent algorithm 10 times. 
Figure~\ref{fig:fashion_AUC_vs_iter} shows how the AUC risk (evaluated on the
test data, at $\vartheta_n=\sigma^{-1}(\theta_n)$, the current
iterate at iteration $n$) evolves over time. 
Remarkably, convergence occurs essentially in about $10^3$ iterations, which amounts
to less than 3 epochs; an epoch  is  the number of iterations required to
access once (on average) each of the data pairs in $\mathsf{U}$.

For reference, we also include as a baseline the AUC risk of $L^2-$penalised
logistic regression (as implemented in python package scikit-learn), which is
another method to construct a linear score for predicting the class.  Of
course, this is just a sanity check, since logistic regression provides a linear score 
which is obtained by minimising a likelihood function (rather than the AUC criterion).
Interestingly, the default implementation in scikit-learn issues a warning
message  suggesting lack of convergence, which is another indication that the
considered dataset is particularly challenging.

\begin{figure}
\begin{center}
  \includegraphics[scale=0.45]{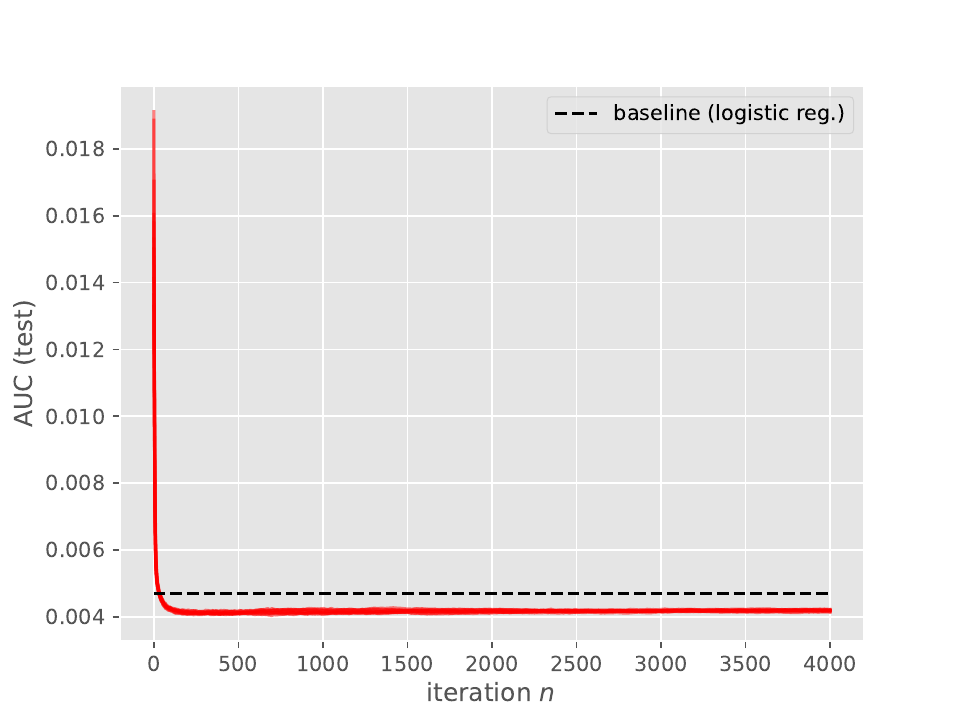}
  \includegraphics[scale=0.45]{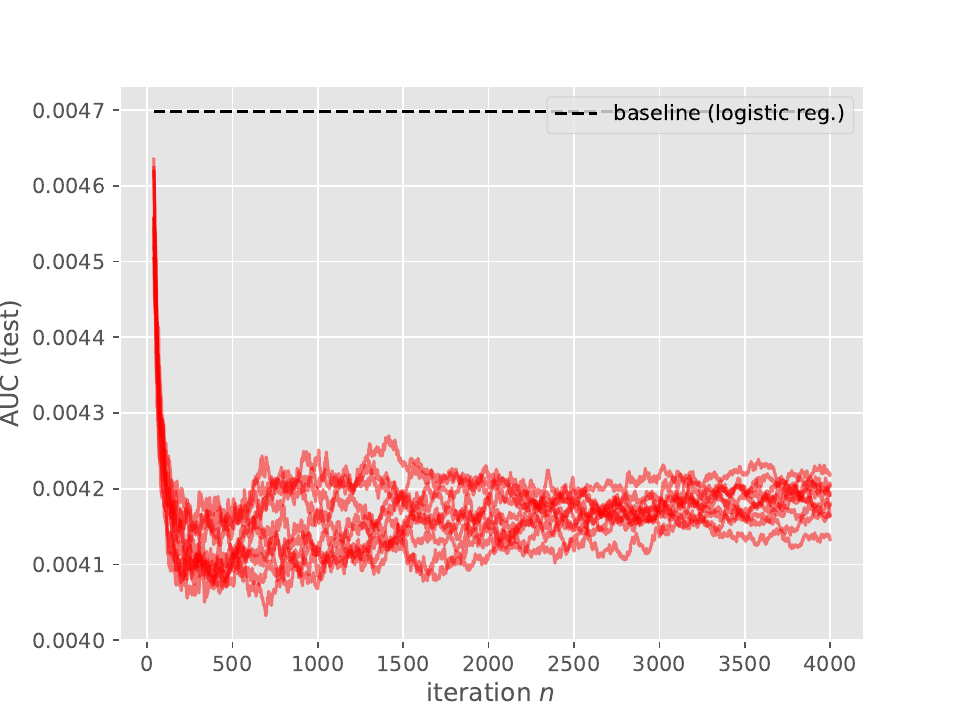}
\end{center}
\caption{AUC score (test data) of iterate vs iteration (10 independent runs).
The right panel is a zoomed-in section of the left panel, where the first 100
iterations are discarded. 
  The baseline (dashed line) is the AUC score of the logistic regression
  estimate.}
\label{fig:fashion_AUC_vs_iter}
\end{figure}

\section{Conclusion and future work}

As already discussed, this paper generalises significantly our previous results 
\citep{andrieu2024gradientfreeoptimizationintegration}, in various ways (i.e.,
noisy and noiseless scenarios, $\beta_n \neq \gamma_n$, and the map $\psi$ 
in~\eqref{eq:smooth_approx} may be either $\psi(x)=x$, or $\psi(x)=\exp(-x)$). 

Nevertheless, practical implementations of the algorithms are not yet covered,
such as the Monte Carlo variant of~\eqref{eq:gradient_descent_with_moment_match} 
(when the expectation is replaced by a biased self-normalised importance sampling estimate),  or 
the adaptive scaling strategy we use in our numerical experiment, despite our empirical observation of its beneficial impact in terms of convergence speed. In addition, there may be scenarios where one would like to use a non-Gaussian
kernel; for instance, when the domain of the objective function is a manifold
or a set of a nature different from that of $\mathbb{R}^d$.  In this vein, we have already obtained
preliminary results for smoothing kernels derived from the Wishart distribution, addressing the situation where the domain of interest is the set of positive symmetric matrices of size $d$. Another useful extension would involve relaxing our assumptions on the noise, and in particular allow for its distribution to be state dependent. 

We however leave all these extensions to future work.


\section{Proofs} \label{sec:proof}

In what follows, for any $\gamma>0$ and $(\theta,u)\in \R^d\times\mathsf{U}$ we let 
\begin{equation*}  
\bar{\ell}_\gamma(\theta,u)= 
\int_{\mathbb{R}^d} \ell(x,u)\phi_{d,\gamma}(x-\theta)\dd x,\quad\bar{l}_\gamma(\theta)=\E[\bar{\ell}_\gamma(\theta,U)]
\end{equation*} 
and 
\begin{equation*}  
   \ell_\gamma(\theta,u)=  
    -\log\Big(\int_{\R^d} e^{-\ell(x,u)}\phi_{d,\gamma}( x-\theta)\dd x\Big),\quad l_\gamma(\theta)=\E[\ell_\gamma(\theta,U)].
\end{equation*}
 
\subsection{Proof of Theorems \ref{thm:conv-general}-\ref{thm:conv-general-det}} \label{subsec:conv-general}

Theorems \ref{thm:conv-general}-\ref{thm:conv-general-det} directly  follow from   Lemmas \ref{lem:Liminf-general}-\ref{lem:Limsup-general} stated and proved in this subsection.

\subsubsection{A useful technical lemma}
\begin{lemma}\label{lemma:sequence}
Let $(a_n)_{n\geq 1}$ be a sequence in $\R$ such that $\inf_{n\geq 1}a_n>-\infty$, $(b_n)_{n\geq 1}$ be a sequence in $[0,\infty)$, and   let   $(\epsilon_n)_{n\geq 1}$ and $(\varrho_n)_{n\geq 1}$ be two bounded sequences in $[0,\infty)$ such that   $\lim_{n\rightarrow\infty}(\varrho_n/\epsilon_n)=0$ and such that $\sum_{n\geq 1}\epsilon_n=\infty$. Assume that $a_{n+1}\leq a_n-\epsilon_n b_n+\varrho_n$ for all $n\geq 1$. Then, $\liminf_{n\rightarrow\infty}b_n=0$.
\end{lemma}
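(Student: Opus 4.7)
The plan is the standard telescoping argument for Robbins--Siegmund-type recursions, combined with a contradiction based on $\sum_n \epsilon_n = \infty$.

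First I would iterate the one-step inequality from an arbitrary index $N$ up to $M \geq N$ to obtain
\begin{equation*}
a_{M+1} \leq a_N - \sum_{n=N}^{M} \epsilon_n b_n + \sum_{n=N}^{M} \varrho_n.
\end{equation*}
Using the lower bound $a_{M+1} \geq \inf_{n \geq 1} a_n =: a_\star > -\infty$, this rearranges to
\begin{equation*}
\sum_{n=N}^{M} \epsilon_n b_n \leq (a_N - a_\star) + \sum_{n=N}^{M} \varrho_n.
\end{equation*}
This is the main identity I will exploit: the weighted partial sums $\sum_n \epsilon_n b_n$ are controlled by the accumulated error $\sum_n \varrho_n$, up to a constant depending only on $N$.

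Next I would argue by contradiction. Suppose $\liminf_{n\to\infty} b_n = c > 0$. Pick any $\delta \in (0,c/2)$. Since $\varrho_n/\epsilon_n \to 0$, there exists $N_1$ such that $\varrho_n \leq \delta \epsilon_n$ for all $n \geq N_1$; and since $\liminf b_n = c$, there exists $N_2$ such that $b_n \geq c/2$ for all $n \geq N_2$. Fix $N = \max(N_1, N_2)$ and let $C = a_N - a_\star$, which is finite and independent of $M$. Substituting into the displayed inequality gives, for all $M \geq N$,
\begin{equation*}
\tfrac{c}{2} \sum_{n=N}^{M} \epsilon_n \leq C + \delta \sum_{n=N}^{M} \epsilon_n, \qquad \text{hence} \qquad \Bigl(\tfrac{c}{2} - \delta\Bigr) \sum_{n=N}^{M} \epsilon_n \leq C.
\end{equation*}
Since $\tfrac{c}{2} - \delta > 0$ and $\sum_{n \geq N} \epsilon_n = \infty$ (as tails of a divergent nonnegative series diverge), letting $M \to \infty$ produces a contradiction. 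Therefore $\liminf_{n \to \infty} b_n = 0$, as required.

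There is no real obstacle here: the only thing to watch is that $\varrho_n/\epsilon_n \to 0$ is used to absorb the error term into a fraction of the main sum, which requires $\epsilon_n > 0$ eventually (otherwise the quotient is ill-defined and the hypothesis implicitly rules this out); and the finiteness of $\inf_n a_n$ is what lets the telescoped left-hand side be replaced by a finite lower bound. The boundedness of $(\epsilon_n)$ and $(\varrho_n)$ is not actually needed in the argument, but does no harm.
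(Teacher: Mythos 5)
Your proposal is correct and follows essentially the same route as the paper: argue by contradiction, telescope the one-step inequality, absorb $\sum\varrho_n$ into a fraction of $\sum\epsilon_n b_n$ using $\varrho_n/\epsilon_n\to 0$, and contradict the divergence of $\sum_n\epsilon_n$ together with the lower bound on $(a_n)$. The only cosmetic point is that the negation of the conclusion is $\liminf_n b_n>0$, which includes the case $\liminf_n b_n=+\infty$; it is cleaner to phrase the contradiction hypothesis as ``there exist $b>0$ and $n_1$ with $b_n\geq b$ for all $n\geq n_1$'' (as the paper does) rather than naming the value $c$ of the $\liminf$, but your argument goes through verbatim with any such finite $b$.
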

\begin{proof}
We prove the result by contradiction and assume that there exists a constant $b>0$ and an $n_1\in\mathbb{N}$ such that $b_n\geq b$ for all $n\geq n_1$. We also assume that $n_1$ is sufficiently large so that $\varrho_n\leq (b/2)\epsilon_n$ for all $n\geq n_1$. Then, for all $n\geq n_1$ we have $a_{n+1}\leq a_n-\epsilon_n b+\varrho_n$ implying that, for all $n>n_1$,
\begin{align*}
a_{n}\leq a_{n_1}-b\sum_{m=n_1}^{n-1}\epsilon_m+\sum_{m=n_1}^{n-1}\varrho_m=a_{n_1}-\sum_{m=n_1}^{n-1}\epsilon_m\Big(b-\frac{\sum_{m=n_1}^{n-1}\varrho_m}{\sum_{m=n_1}^{n-1}\epsilon_m}\Big)\leq a_{n_1}-\frac{b}{2}\sum_{m=n_1}^{n-1}\epsilon_m
\end{align*}
and thus that $\lim_{n\rightarrow\infty}a_n=-\infty$. This contradicts the fact that, by assumption, $\inf_{n\geq 1} a_n>-\infty$ and the proof of the lemma is complete.
\end{proof}

\subsubsection{A descent lemma}
\begin{lemma} \label{lem:descent-general}
 Assume that \ref{assume:Gen2} holds  and let $(\theta_n)_{n\geq 1}$ be as defined in \eqref{eq:theta_seq} for some non-increasing sequence $(\gamma_n)_{n\geq 1}$ such that $\limsup_{n\rightarrow\infty}\gamma_n<\bar{\gamma}_2$, with $\bar{\gamma}_2$  as in \ref{assume:Gen2}. Then, there exist constants $\bar{C}<\infty$ and   $n_1\in\mathbb{N}$ such that, for all $n\geq n_1$ and with $\delta_n \coloneq (\gamma_n/\gamma_{n+1})^{\frac{d}{2}}\gamma_{n+1}^{-1}\big(\gamma_n-\gamma_{n+1}\big)$, we have
   \begin{align}\label{eq:des_expect}
 \E[L_{\gamma_{n+1}}(\theta_{n+1})]\leq \E[L_{\gamma_n}(\theta_n)]-\beta_n\E\big[\|\nabla L_{\gamma_n} (\theta_n)\|^2\big]+\bar{C}\Big(\beta_n^2\gamma_n^{\frac{3\alpha}{2}-2} + \delta_n \Big).
 \end{align}
In addition,  if $\ell(\theta,u)=l(\theta)$  for all $(\theta,u)\in \R^d\times \mathsf{U}$, it also holds that
 \begin{equation}\label{eq:des_det}
  \begin{split}
L_{\gamma_{n+1}}(\theta_{n+1})&\leq L_{\gamma_n}(\theta_n)-\beta_n \big(1-\bar{C}\beta_n \gamma_n^{-1+\alpha/2})\|\nabla L_{\gamma_n}(\theta_n)\|^2 +\bar{C}\delta_n,\quad\forall n\geq n_1.
\end{split}
 \end{equation}
\end{lemma}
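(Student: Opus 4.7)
The plan is to decompose the one-step increment through the intermediate quantity $L_{\gamma_n}(\theta_{n+1})$, writing
\begin{equation*}
L_{\gamma_{n+1}}(\theta_{n+1}) - L_{\gamma_n}(\theta_n) = \bigl[L_{\gamma_{n+1}}(\theta_{n+1}) - L_{\gamma_n}(\theta_{n+1})\bigr] + \bigl[L_{\gamma_n}(\theta_{n+1}) - L_{\gamma_n}(\theta_n)\bigr],
\end{equation*}
so that the first bracket isolates the change in the smoothing level (at the new iterate) while the second isolates the effect of the gradient step (at the old smoothing level). I would choose $n_1$ large enough that $\gamma_n\leq\bar\gamma_2$ for every $n\geq n_1$, which is possible because $\limsup_n\gamma_n<\bar\gamma_2$; this allows \ref{assume:Gen2} to be invoked at $\gamma=\gamma_n$ throughout.

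The first bracket is handled pointwise in $u$ by item~3 of \ref{assume:Gen2}, applied with $\tilde\gamma=\gamma_{n+1}\leq\gamma_n=\gamma$ (legal by monotonicity): this gives $|\Ell_{\gamma_{n+1}}(\theta_{n+1},u)-\Ell_{\gamma_n}(\theta_{n+1},u)|\leq\delta_n C(u)$, and taking expectation in $u$ yields the $\delta_n\,\E[C(U)]$ contribution. For the second bracket, item~2 of \ref{assume:Gen2} supplies a $\gamma_n^{-1+\alpha/2}C(u)$-Lipschitz bound on $\nabla\Ell_{\gamma_n}(\cdot,u)$, which transfers to a $K\gamma_n^{-1+\alpha/2}$-Lipschitz bound on $\nabla L_{\gamma_n}$ with $K=\E[C(U)]$, using item~1 as the dominant needed to swap $\nabla$ and $\E$. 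The standard descent inequality for functions with Lipschitz gradient then gives
\begin{equation*}
L_{\gamma_n}(\theta_{n+1}) \leq L_{\gamma_n}(\theta_n) - \beta_n\bigl\langle\nabla L_{\gamma_n}(\theta_n),\nabla\Ell_{\gamma_n}(\theta_n,U_{n+1})\bigr\rangle + \tfrac{K}{2}\beta_n^2\gamma_n^{-1+\alpha/2}\|\nabla\Ell_{\gamma_n}(\theta_n,U_{n+1})\|^2.
\end{equation*}
Taking conditional expectation given the $\sigma$-field generated by $\theta_n$, the unbiasedness $\E[\nabla\Ell_{\gamma_n}(\theta_n,U_{n+1})\mid\theta_n]=\nabla L_{\gamma_n}(\theta_n)$ converts the cross term into $-\beta_n\|\nabla L_{\gamma_n}(\theta_n)\|^2$, while item~1 squared, together with $\E[C(U)^2]<\infty$ (guaranteed by $\eta\geq 2$), bounds the quadratic term by $\tfrac{K'}{2}\beta_n^2\gamma_n^{3\alpha/2-2}$. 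Combining the two brackets and taking a final expectation delivers~\eqref{eq:des_expect}.

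In the deterministic case one has $\nabla\Ell_{\gamma_n}(\theta_n,U_{n+1})=\nabla L_{\gamma_n}(\theta_n)$ pointwise, so the quadratic term becomes $\tfrac{K}{2}\beta_n^2\gamma_n^{-1+\alpha/2}\|\nabla L_{\gamma_n}(\theta_n)\|^2$ and fuses with the $-\beta_n\|\nabla L_{\gamma_n}(\theta_n)\|^2$ contribution to produce the $(1-\bar C\beta_n\gamma_n^{-1+\alpha/2})$ factor of~\eqref{eq:des_det}; the $\delta_n$ term is unchanged. I do not anticipate a genuine obstacle: the only technical subtleties are the standard differentiation-under-the-integral step needed to propagate the pointwise bounds of \ref{assume:Gen2} to $L_{\gamma_n}$ (the dominant is item~1) and the bookkeeping required to collapse $\E[C(U)]$, $\E[C(U)^2]$, and the prefactors from items~1--3 into a single constant $\bar C$. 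Notably, no constraint on the ratio $\gamma_n/\gamma_{n+1}$ is needed beyond monotonicity, since the descent step is performed at smoothing level $\gamma_n$ rather than $\gamma_{n+1}$, and the second bracket therefore involves no power of $\gamma_{n+1}$.
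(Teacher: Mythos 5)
Your proposal is correct and follows essentially the same route as the paper: bound $L_{\gamma_n}(\theta_{n+1})-L_{\gamma_n}(\theta_n)$ via a second-order descent inequality built from items 1--2 of \ref{assume:Gen2}, use independence of $U_{n+1}$ from $\theta_n$ to turn the cross term into $-\beta_n\|\nabla L_{\gamma_n}(\theta_n)\|^2$, and absorb the change of smoothing level via item 3, which yields the $\delta_n$ term. The only cosmetic difference is that the paper applies Taylor's theorem to $\Ell_{\gamma_n}(\cdot,u)$ pointwise in $u$ and then takes expectations, whereas you apply the descent lemma directly to $L_{\gamma_n}$ after transferring the Lipschitz bound; both are valid and lead to the same constants up to the value of $\bar C$.
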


\begin{remark}
For $\Ell_\gamma=\ell_\gamma$ we know that \eqref{eq:des_det} holds with $\bar{C}=1/2$ \citep{andrieu2024gradientfreeoptimizationintegration}. 
\end{remark}

\begin{proof}

Let $\bar{\gamma}_2>0$ and $C:\mathsf{U}\rightarrow [1,\infty)$ be as in \ref{assume:Gen2}, and let $\theta,\theta'\in\R^d$,   $u\in\mathsf{U}$ and  $\gamma\in (0,\bar{\gamma}_2]$ be arbitrary. From Taylor's theorem, 
\begin{align*}
\Ell_{\gamma}(\theta',u)\leq \Ell_{\gamma}(\theta,u)+\langle\nabla \Ell_{\gamma}(\theta,u), \theta'-\theta\rangle+\|\theta'-\theta\|\sup_{t\in [0,1]}\big| \nabla \Ell_{\gamma}(\theta,u)-\nabla \Ell_{\gamma}(\theta+t(\theta'-\theta),u)\big| 
\end{align*} 
and thus,  under    \ref{assume:Gen2} (point 2), 
\begin{equation*}
  \begin{split}
\Ell_{\gamma}(\theta',u)&\leq \Ell_{\gamma}(\theta,u)+\langle \nabla \Ell_{\gamma}(\theta,u),\theta'-\theta\rangle+\gamma^{-1+\frac{\alpha}{2}}\|\theta-\theta'\|^2 C(u)\,,
\end{split}
\end{equation*}
from which we obtain that (by taking the expectation with respect to $U$)
\begin{equation}\label{eq:l_L}
L_{\gamma}(\theta')\leq L_{\gamma}(\theta)+\langle \nabla L_{\gamma}(\theta),\theta'-\theta\rangle+\gamma^{-1+\frac{\alpha}{2}}\|\theta'-\theta\|^2 \E[C(U)].
 \end{equation}
We now let $n_1\in\mathbb{N}$ be such that $\gamma_n\leq\bar{\gamma}_2$ for all $n\geq n_1$, noting that such an $n_1$ exists under the assumptions of the lemma. Then, using  \eqref{eq:l_L} and \eqref{eq:theta_seq},  it follows that for all $n\geq n_1$ and $\mathbb{P}$-almost surely,
  \begin{equation}\label{eq:cont2}
  \begin{split}
L_{\gamma_n}(\theta_{n+1})&\leq L_{\gamma_n}(\theta_n)-\beta_n\langle\nabla L_{\gamma_n}(\theta_n),\nabla \Ell_{\gamma_n}(\theta_n,U_{n+1})\rangle+\beta_n^2\gamma_{n}^{-1+\frac{\alpha}{2}}\|\nabla \Ell_{\gamma_n}(\theta_n,U_{n+1})\|^2\,\,  \E[C(U)]\\
&\leq L_{\gamma_n}(\theta_n)-\beta_n\langle\nabla L_{\gamma_n}(\theta_n),\nabla \Ell_{\gamma_n}(\theta_n,U_{n+1})\rangle+\beta_n^2\gamma_{n}^{\frac{3\alpha}{2}-2} C(U_{n+1})^2 \times \E[C(U)] 
\end{split}
 \end{equation}
 where the second inequality holds under  \ref{assume:Gen2} (first point). Finally, using  \eqref{eq:cont2} and the third point of \ref{assume:Gen2}, we obtain that for all $n\geq n_1$
 \begin{align*}
 \E[L_{\gamma_{n+1}}(\theta_{n+1})]\leq \E[L_{\gamma_n}(\theta_n)]-\beta_n\E\big[\|\nabla L_{\gamma_n}(\theta_n)\|^2\big]+\bar{C}\Big(\beta_n^2\gamma_n^{\frac{3\alpha}{2}-2} +\delta_n\Big)
 \end{align*}
 with $\bar{C}=\E[C(U)^2] \, \E[C(U)]+\E[C(U)]<\infty$, showing the first part of the lemma. (Recall that $\E[U^\eta]<+\infty$ for some $\eta\geq 2$.)
 
To show the second part of the lemma, remark that if $\ell(\theta,u)=l(\theta)$ for all $(\theta,u)\in \R^d\times \mathsf{U}$ then the first inequality in \eqref{eq:cont2} implies that, for all $n\geq n_1$,
   \begin{equation*}
  \begin{split}
L_{\gamma_n}(\theta_{n+1})&\leq L_{\gamma_n}(\theta_n)-\beta_n\langle\nabla L_{\gamma_n}(\theta_n),\nabla L_{\gamma_n}(\theta_n)\rangle+\beta_n^2\gamma_{n}^{-1+\frac{\alpha}{2}}\|\nabla L_{\gamma_n}(\theta_n)\|^2\,\,  \E[C(U)]\\
&=L_{\gamma_n}(\theta_n)-\beta_n\big(1-\E[C(U)]\beta_n\gamma_n^{\frac{\alpha}{2}-1}\big)\|\nabla L_{\gamma_n}(\theta_n)\|^2\\
&\leq L_{\gamma_n}(\theta_n)-\beta_n\big(1-\bar{C}\beta_n\gamma_n^{\frac{\alpha}{2}-1}\big)\|\nabla L_{\gamma_n}(\theta_n)\|^2
\end{split}
 \end{equation*}
 with $\bar{C}<\infty$ as defined above. The result in the  second part of the lemma then follows from    \ref{assume:Gen2} (third part). The proof of the lemma is complete. 
\end{proof}

\subsubsection{Convergence along a subsequence}
\begin{lemma} \label{lem:Liminf-general}
 Assume that \ref{assume:Gen1}-\ref{assume:Gen2} hold  and let $(\theta_n)_{n\geq 1}$ be as defined in \eqref{eq:theta_seq}, where $\beta_n=c_\beta n^{-\iota}$ and $\gamma_n=c_\gamma n^{-\kappa}$ for all $n\geq 1$ and for some constants $(c_\beta,c_\gamma)\in (0,\infty)^2$ and $(\iota,\kappa)\in(0,1]^2$. Let $\alpha\in[0,1]$ be as in \ref{assume:Gen2}. Then, 
 \begin{enumerate}
     \item if  $\kappa(2-3\alpha/2)<\iota$, we have $\liminf_{n\rightarrow\infty}\|\nabla L_{\gamma_n}(\theta_n)\|=0$, $\P$-a.s,
     \item  if in addition $l(\theta)=\ell(\theta,u)$  for all $(\theta,u)\in \R^d\times \mathsf{U}$, then   $\liminf_{n\rightarrow\infty}\|\nabla L_{\gamma_n}(\theta_n)\|=0$ if  $ \kappa(1-\alpha/2)<\iota$ or if we have both $\kappa(1-\alpha/2)=\iota$ and $c_\beta c_\gamma^{\alpha/2 -1}<1/\bar{C}$, with the constant $\bar{C}<\infty$ as in Lemma \ref{lem:descent-general}.
 \end{enumerate}
\end{lemma}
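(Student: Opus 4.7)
The plan is to reduce each case to the technical Lemma \ref{lemma:sequence}, fed by the descent inequalities \eqref{eq:des_expect} and \eqref{eq:des_det} of Lemma \ref{lem:descent-general}. Since $\gamma_n = c_\gamma n^{-\kappa}$ with $\kappa > 0$ is non-increasing and tends to zero, the hypotheses of Lemma \ref{lem:descent-general} hold for all $n$ sufficiently large. A first-order expansion of $\delta_n = (\gamma_n/\gamma_{n+1})^{d/2} \gamma_{n+1}^{-1} (\gamma_n - \gamma_{n+1})$ at $\gamma_n = c_\gamma n^{-\kappa}$ yields $\delta_n = O(n^{-1})$, which is the key asymptotic used repeatedly below.

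For the stochastic part, I would apply Lemma \ref{lemma:sequence} with $a_n := \E[L_{\gamma_n}(\theta_n)]$, $b_n := \E[\|\nabla L_{\gamma_n}(\theta_n)\|^2]$, $\epsilon_n := \beta_n$, and $\varrho_n := \bar{C}(\beta_n^2 \gamma_n^{3\alpha/2 - 2} + \delta_n)$. Assumption \ref{assume:Gen1} gives $\inf_n a_n > -\infty$ (once $\gamma_n \leq \bar{\gamma}_1$), the condition $\iota \leq 1$ ensures $\sum \beta_n = \infty$, and the ratio $\varrho_n/\epsilon_n = \bar{C}(\beta_n \gamma_n^{3\alpha/2 - 2} + \delta_n/\beta_n)$ has leading term of order $n^{\kappa(2 - 3\alpha/2) - \iota}$, which vanishes by the hypothesis $\kappa(2 - 3\alpha/2) < \iota$, while the remaining contribution is of order $n^{\iota - 1}$ and negligible in the polynomial regime. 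Lemma \ref{lemma:sequence} then gives $\liminf_n \E[\|\nabla L_{\gamma_n}(\theta_n)\|^2] = 0$, and Fatou's lemma
\begin{equation*}
\E\bigl[\liminf_n \|\nabla L_{\gamma_n}(\theta_n)\|^2\bigr] \leq \liminf_n \E\bigl[\|\nabla L_{\gamma_n}(\theta_n)\|^2\bigr] = 0
\end{equation*}
lifts this to $\liminf_n \|\nabla L_{\gamma_n}(\theta_n)\| = 0$ $\P$-almost surely, since a non-negative random variable of zero expectation vanishes almost surely.

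For the deterministic part, I would instead invoke \eqref{eq:des_det} and set $a_n := L_{\gamma_n}(\theta_n)$, $b_n := \|\nabla L_{\gamma_n}(\theta_n)\|^2$, $\epsilon_n := \beta_n \bigl(1 - \bar{C} \beta_n \gamma_n^{\alpha/2 - 1}\bigr)$, and $\varrho_n := \bar{C} \delta_n$. Writing $\beta_n \gamma_n^{\alpha/2 - 1} = c_\beta c_\gamma^{\alpha/2 - 1} n^{\kappa(1 - \alpha/2) - \iota}$, this factor vanishes when $\kappa(1 - \alpha/2) < \iota$, so $\epsilon_n/\beta_n \to 1$; at the boundary $\kappa(1 - \alpha/2) = \iota$ it equals the constant $c_\beta c_\gamma^{\alpha/2 - 1}$, and the extra hypothesis $c_\beta c_\gamma^{\alpha/2 - 1} < c_\star := 1/\bar{C}$ keeps $\epsilon_n/\beta_n$ bounded below by a positive constant. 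In both sub-cases $\epsilon_n \asymp \beta_n$, so $\sum \epsilon_n = \infty$ and $\varrho_n/\epsilon_n \asymp \delta_n/\beta_n \to 0$, and Lemma \ref{lemma:sequence} gives the result.

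The delicate step, and the reason for the explicit constant condition at the boundary, is keeping the coefficient of the descent term uniformly positive in the deterministic inequality; this is precisely what pins down the threshold $c_\star = 1/\bar{C}$. Everything else reduces to routine polynomial-rate comparisons driven by the parametric choice of $(\beta_n)_{n\geq 1}$ and $(\gamma_n)_{n\geq 1}$.
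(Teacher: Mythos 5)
Your proof is correct and follows essentially the same route as the paper: both parts reduce to Lemma \ref{lemma:sequence} fed by the descent inequalities \eqref{eq:des_expect} and \eqref{eq:des_det} of Lemma \ref{lem:descent-general}, with the same rate computation $\delta_n=O(n^{-1})$ and the same treatment of the boundary case $\kappa(1-\alpha/2)=\iota$ via positivity of $1-\bar{C}c_\beta c_\gamma^{\alpha/2-1}$. The only deviation is the final lift to an almost-sure statement: you apply Fatou's lemma to the non-negative sequence $\|\nabla L_{\gamma_n}(\theta_n)\|^2$, whereas the paper extracts an $L^1$-convergent subsequence, passes to convergence in probability, and then to an a.s.-convergent sub-subsequence; your argument is equally valid and somewhat more direct.
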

 
\begin{proof}
For all $n\geq 1$ let $\delta_n$ be as defined in Lemma \ref{lem:descent-general}. To show the first part of the lemma remark that $\quad\sum_{n\geq 1}\beta_n=\infty$, as $\iota <1$, and, 
if $\kappa(2-3\alpha/2)<\iota$ then
\begin{align*}
\frac{\beta_n^2\gamma_n^{3\alpha/2-2}+\delta_n}{\beta_n} \to 0,\quad \text{as }n\to \infty.
\end{align*}

Then, under \ref{assume:Gen1} and by using Lemmas \ref{lemma:sequence}-\ref{lem:descent-general}, in particular  \eqref{eq:des_expect}, we check  that  $\liminf_{n\rightarrow\infty}\E\big[\|\nabla L_{\gamma_n}(\theta_n)\|\big]=0$, implying that  $\liminf_{n\rightarrow\infty}\|\nabla L_{\gamma_n}(\theta_n)\|=0$, $\P$-a.s.. Indeed,  $\liminf_{n\rightarrow\infty}\E\big[\|\nabla L_{\gamma_n}(\theta_n)\|\big]=0$ implies convergence to zero in $L^1$  of at least one subsequence of $(\E\sqrd{\|\nabla L_{\gamma_n}(\theta_n) \|})_{n\geq 1}$, in turn implying convergence in probability to zero of the same subsequence \citep[Exercise 4.13]{cinlar2011probability}. This further implies \citep[Theorem 3.3]{cinlar2011probability} that there exists a sub-subsequence converging to zero almost surely, in turn implying $\liminf_{n\rightarrow\infty}\|\nabla L_{\gamma_n}(\theta_n)\|=0$, $\P$-a.s.. 

To conclude the proof remark that under the assumptions of the second part of the lemma we have $\liminf_{n\geq 1}\big(1-\bar{C}\beta_n\gamma_n^{\frac{\alpha}{2}-1}\big)>0$. Then,  under   \ref{assume:Gen1} and by using   Lemmas \ref{lemma:sequence}-\ref{lem:descent-general}, it is readily checked that   $\liminf_{n\rightarrow\infty}\|\nabla L_{\gamma_n}(\theta_n)\|=0$. The proof of the lemma is complete.
\end{proof}

\subsubsection{From  convergence along a subsequence to convergence of the sequence}
We adapt the strategy developed in \cite{Patel2022} to our inhomogeneous case. 
\begin{lemma}\label{lem:Limsup-general}
Assume that \ref{assume:Gen2}  holds  and let $(\theta_n)_{n\geq 1}$ be as defined in \eqref{eq:theta_seq} where $\beta_n=c_\beta n^{-\iota}$ and $\gamma_n=c_\gamma n^{-\kappa}$ for all $n\geq 1$ and for some constants $(c_\beta,c_\gamma)\in (0,\infty)^2$ and $(\iota,\kappa)\in(0,1]^2$. Let $\alpha\in[0,1]$ and $\eta\geq 2$ be as in  \ref{assume:Gen2}, and assume that     $\min\{1-\kappa/2, \iota-\kappa(3/2-\alpha)\}>1/\eta$. Then, for all $\delta\in(0,\infty)$ there   exists $\P$-a.s.~an $n'\in\mathbb{N}$ such that either  $\|\nabla L_{\gamma_n}(\theta_n)\|> \delta $ for all $n\geq n'$ or  $\|\nabla L_{\gamma_n}(\theta_n)\|\leq \delta $  for all $n\geq n'$.
\end{lemma}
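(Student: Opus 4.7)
The strategy follows Patel (2022), adapted to the inhomogeneous smoothing setup. Broadly, I would (a) establish that the single-step variation $\|\nabla L_{\gamma_{n+1}}(\theta_{n+1}) - \nabla L_{\gamma_n}(\theta_n)\|$ converges to zero $\P$-almost surely, and then (b) argue by contradiction that, combined with the descent inequality from Lemma~\ref{lem:descent-general}, this precludes the sequence $\|\nabla L_{\gamma_n}(\theta_n)\|$ from oscillating across any fixed level $\delta$ infinitely often. Observe that the lemma statement is in fact equivalent to asserting that $(\|\nabla L_{\gamma_n}(\theta_n)\|)_{n\geq 1}$ admits a (possibly infinite) limit $\P$-almost surely.

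For step (a), the triangle inequality
\begin{align*}
\|\nabla L_{\gamma_{n+1}}(\theta_{n+1}) - \nabla L_{\gamma_n}(\theta_n)\| \leq \|\nabla L_{\gamma_{n+1}}(\theta_{n+1}) - \nabla L_{\gamma_{n+1}}(\theta_n)\| + \|\nabla L_{\gamma_{n+1}}(\theta_n) - \nabla L_{\gamma_n}(\theta_n)\|
\end{align*}
reduces matters to bounding each term. Items~(1) and~(2) of \ref{assume:Gen2}, combined with the update $\theta_{n+1} - \theta_n = -\beta_n \nabla \Ell_{\gamma_n}(\theta_n, U_{n+1})$, control the first summand (after bringing the expectation over $U$ inside the norm via Jensen) by a quantity of order $\beta_n \gamma_n^{\alpha - 3/2} C(U_{n+1}) \lesssim n^{-(\iota - \kappa(3/2 - \alpha))} C(U_{n+1})$; item~(4) of \ref{assume:Gen2} controls the second summand by a deterministic quantity of order $n^{-(1 - \kappa/2)}$. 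Invoking $\E[C(U)^\eta] < \infty$ and the assumed lower bound $\min\{\iota - \kappa(3/2 - \alpha),\, 1 - \kappa/2\} > 1/\eta$, Markov's inequality combined with the Borel--Cantelli lemma yields $\P$-a.s.\ convergence of both contributions to zero.

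For step (b), suppose for contradiction that on a set of positive probability the sequence admits, for some fixed $\delta > 0$, infinitely many indices where $\|\nabla L_{\gamma_n}(\theta_n)\| > \delta$ and infinitely many where $\|\nabla L_{\gamma_n}(\theta_n)\| \leq \delta$. Pick $0 < \delta_1 < \delta < \delta_2$ and, exploiting the a.s.\ convergence to zero of the single-step increment, extract on this event nested indices $m_k < n_k$ with $\|\nabla L_{\gamma_{m_k}}(\theta_{m_k})\| \leq \delta_1$, $\|\nabla L_{\gamma_{n_k}}(\theta_{n_k})\| \geq \delta_2$, and $\|\nabla L_{\gamma_n}(\theta_n)\| > \delta_1$ throughout each excursion $n \in (m_k, n_k]$. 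Applying the same Lipschitz-type estimates as in step~(a), but telescopically across each excursion, the net change $\|\nabla L_{\gamma_{n_k}}(\theta_{n_k}) - \nabla L_{\gamma_{m_k}}(\theta_{m_k})\| \geq \delta_2 - \delta_1$ forces the quantity $\sum_{n=m_k}^{n_k-1} \beta_n$ to stay bounded below by a positive amount infinitely often. On the other hand, iterating the descent inequality of Lemma~\ref{lem:descent-general} across each excursion and summing in $k$ yields $\sum_k \delta_1^2 \sum_{n=m_k}^{n_k - 1} \beta_n \leq L_{\gamma_1}(\theta_1) - c + \text{cumulative perturbations}$, with $c$ from \ref{assume:Gen1}; comparing these two bounds produces the desired contradiction.

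The principal obstacle is the treatment of the non-summable perturbation $\delta_n = \mathcal{O}(1/n)$ appearing in the descent inequality of Lemma~\ref{lem:descent-general}, which precludes a direct application of the Robbins--Siegmund lemma. The key device is that $\delta_n / \beta_n \to 0$ and $\beta_n \gamma_n^{3\alpha/2 - 2} \to 0$, so that over any "thick" set of indices (with diverging $\beta_n$-mass) the accumulated perturbation is of smaller order than the accumulated gradient-descent term $\beta_n \|\nabla L_{\gamma_n}(\theta_n)\|^2$. The parallel martingale-noise term $-\beta_n \langle \nabla L_{\gamma_n}(\theta_n), \nabla\Ell_{\gamma_n}(\theta_n, U_{n+1}) - \nabla L_{\gamma_n}(\theta_n)\rangle$ must be controlled via Doob-type estimates using the moment condition on $C(U)$, and the excursion indices $(m_k, n_k)$ must be defined so as to form a sequence of stopping times with respect to the natural filtration.
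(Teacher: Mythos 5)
Your step (a) reproduces the paper's first half almost verbatim: the same triangle-inequality split, the same use of points 1, 2 and 4 of \ref{assume:Gen2}, and the same rates $n^{-(\iota-\kappa(3/2-\alpha))}$ and $n^{-(1-\kappa/2)}$, yielding $\E\big[\|\nabla L_{\gamma_{n+1}}(\theta_{n+1})-\nabla L_{\gamma_n}(\theta_n)\|^{\eta}\big]\leq \bar{C}\delta_n$ with $\sum_n\delta_n<\infty$ under the stated condition on $(\iota,\kappa,\eta)$. Where you diverge is step (b), and that is where the trouble lies. The paper uses no descent lemma, no excursions, and no martingale or stopping-time machinery: it applies Markov's inequality to the single-step upcrossing events $\{\|\nabla L_{\gamma_{n+1}}(\theta_{n+1})\|\geq\delta+\epsilon,\ \|\nabla L_{\gamma_n}(\theta_n)\|\leq\delta\}$, obtains a bound of order $\delta_n\epsilon^{-\eta}$, kills each such event ``infinitely often'' by Borel--Cantelli, and takes a union over $\epsilon=1/k$. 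Your step (a) therefore already contains essentially the whole proof; the remainder is a few lines, and nothing about the trajectory of $L_{\gamma_n}(\theta_n)$ is needed.

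Your step (b) contains genuine gaps. First, the claimed equivalence with ``$\|\nabla L_{\gamma_n}(\theta_n)\|$ admits a limit a.s.'' fails in one direction: if the sequence converges to $\delta$ while alternating sides of $\delta$, the limit exists but the lemma's conclusion fails for that $\delta$; so establishing existence of a limit would not establish the lemma. Second, and relatedly, from ``infinitely many indices with $\|\cdot\|\leq\delta$ and infinitely many with $\|\cdot\|>\delta$'' one cannot extract visits below $\delta_1<\delta$ and above $\delta_2>\delta$: the oscillation amplitude may shrink to zero (and the a.s.\ vanishing of the single-step increment makes this scenario more, not less, plausible), in which case your excursions do not exist and no contradiction is reached. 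Third, the telescoped use of Lemma~\ref{lem:descent-general} requires \ref{assume:Gen1}, which is not among this lemma's hypotheses, and the perturbation term in \eqref{eq:des_expect} involves $\delta_n\asymp 1/n$, which is not summable; your proposed fix ($\delta_n/\beta_n\to0$) controls the perturbation accumulated \emph{inside} excursions but not the unbounded increase of $\E[L_{\gamma_n}(\theta_n)]$ that can accumulate \emph{between} them, which is precisely what the telescoping must absorb. None of these difficulties arises on the paper's route.
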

\begin{proof}

 Let $\bar{\gamma}_2>0$, $\eta\geq 2$ and $C:\mathsf{U}\rightarrow [1,\infty)$ be as in \ref{assume:Gen2}, and let $n_1 \in \Natural$ be such that $\gamma_n \leq \bar{\gamma}_2$ for all $n\geq n_1$. Remark that, under the assumptions on $(\gamma_n)_{n\geq 1}$,  there exists a constant $C_0\in[1,\infty)$ such that
\begin{align*}
\gamma_{n+1}^{-1+\frac{\alpha}{2}}\gamma_n^{\frac{\alpha-1}{2}}\leq C_0 \gamma_n^{\alpha-\frac{3}{2}},\quad  \gamma_n^{1/2}(\gamma_n/\gamma_{n+1})^{d/2}\frac{\gamma_n-\gamma_{n+1}}{\gamma_{n+1}^2}\leq C_0\,n^{\kappa/2-1},\quad\forall n\geq n_1.
\end{align*}  
Then, under \ref{assume:Gen2} (points 1, 2 and 4), for all $n\geq n_1$ and $u\in\mathsf{U}$ we have, $\P$-a.s.
\begin{equation*}
\begin{split}
\|\nabla \Ell_{\gamma_{n+1}}(\theta_{n+1},u)-\nabla \Ell_{\gamma_{n}}(\theta_{n},u)\|&\leq \|\nabla \Ell_{\gamma_{n+1}}(\theta_{n+1},u)-\nabla \Ell_{\gamma_{n+1}}(\theta_{n},u)\|\\
&\quad\quad+\|\nabla \Ell_{\gamma_{n+1}}(\theta_{n},u)-\nabla \Ell_{\gamma_{n}}(\theta_{n},u)\|\\
&\leq \gamma_{n+1}^{-1+\frac{\alpha}{2}}\|\theta_{n+1}-\theta_n\|C(u)+\gamma_n^{1/2}(\gamma_n/\gamma_{n+1})^{\frac{d }{2}}\frac{\gamma_n-\gamma_{n+1}}{\gamma_{n+1}^{2}} C(u)\\
&\leq \gamma_{n+1}^{-1+\frac{\alpha}{2}}\beta_n\|\nabla \Ell_{\gamma_{n}}(\theta_n,U_{n+1})\|C(u)+ C_0\,n^{\kappa/2-1} C(u)\\
&\leq \gamma_{n+1}^{-1+\frac{\alpha}{2}}\gamma_n^{\frac{\alpha-1}{2}}\beta_nC(U_{n+1})C(u)+ C_0\,n^{\kappa/2-1} C(u)\\
&\leq C_0C(U_{n+1})C(u)\big(\gamma_n^{\alpha-\frac{3}{2}}\beta_n+  n^{\kappa/2-1}\big).
\end{split}
\end{equation*}

Therefore,  noting that $(x+y)^\eta\leq 2^{\eta-1}(x^\eta+y^\eta)$ for all $x,y\in\R$, as $\eta\geq 2$, it follows that for all $n\geq n_1$ we have,   $\P$-a.s.
\begin{align}\label{eq:BB_grad}
\|\nabla L_{\gamma_{n+1}}(\theta_{n+1})-\nabla L_{\gamma_{n}}(\theta_{n})\|^{\eta}&\leq  \bar{C}C(U_{n+1})^{\eta}\delta_{n},\quad\delta_{n}=n^{-\eta(\iota+\kappa(\alpha-3/2))} +n^{-\eta(1-\kappa/2)}
\end{align}
where $\bar{C}=2^{\eta-1}C_0^\eta\E[C(U)^{\eta }]<\infty$. We now let $(\delta,\epsilon)\in(0,\infty)$. Then, for all $n\geq n_1$ we have
\begin{equation}\label{eq:lem_sup2}
\begin{split}
\P\Big(\|\nabla &L_{\gamma_{n+1}}(\theta_{n+1})\|\geq \delta+\epsilon,\,\,\|\nabla L_{\gamma_{n}}(\theta_{n})\|\leq \delta\Big)\\
&=\P\Big(\|\nabla L_{\gamma_{n+1}}(\theta_{n+1})\|-\|\nabla L_{\gamma_{n}}(\theta_{n})\|+\|\nabla L_{\gamma_{n}}(\theta_{n})\|\geq\delta+\epsilon,\,\,\|\nabla L_{\gamma_{n}}(\theta_{n})\|\leq \delta\Big)\\
&\leq \P\Big(\|\nabla L_{\gamma_{n+1}}(\theta_{n+1})\|-\|\nabla L_{\gamma_{n}}(\theta_{n})\|\geq\epsilon,\,\,\|\nabla L_{\gamma_{n}}(\theta_{n})\|\leq \delta\Big)\\
&\leq \P\Big(\|\nabla L_{\gamma_{n+1}}(\theta_{n+1})\|-\|\nabla L_{\gamma_{n}}(\theta_{n})\|\geq\epsilon\Big)\\
&\leq \P\Big(\|\nabla L_{\gamma_{n+1}}(\theta_{n+1})-\nabla L_{\gamma_{n}}(\theta_{n})\|\geq\epsilon\Big)\\
&\leq \frac{\E\Big[\|\nabla L_{\gamma_{n+1}}(\theta_{n+1})-\nabla L_{\gamma_{n}}(\theta_{n})\|^{\eta}\Big]}{\epsilon^\eta}\\
&\leq  \bar{C}^2\delta_{n} \epsilon^{-\eta} 
\end{split}
\end{equation}
where the third inequality holds by the reverse triangle inequality,  the fourth inequality holds by Markov's inequality and the last inequality holds by \eqref{eq:BB_grad} (noting that $\bar{C}>\E[C(U)^{\eta}]$).

Under the assumptions on $\kappa$ and $\iota$ we have  $\sum_{n\geq 1}\delta_n<\infty$ and thus, using \eqref{eq:lem_sup2},
\begin{align*}
\sum_{n\geq 1}\P\Big(\|\nabla L_{\gamma_{n+1}}(\theta_{n+1})\|\geq&\delta+\epsilon,\,\,\|\nabla L_{\gamma_{n}}(\theta_{n})\|\leq \delta\Big)<\infty.
\end{align*}
From the Borel-Cantelli lemma it follows that the set
\begin{align*}
\Omega_{\delta,\epsilon}:=\Big\{ \|\nabla L_{\gamma_{n+1}}(\theta_{n+1})\|\geq \delta+\epsilon\text{ and }\|\nabla L_{\gamma_{n}}(\theta_{n})\|\leq \delta\,\,\,i.o.\Big\}
\end{align*}
is such that $\P(\Omega_{\delta,\epsilon})=0$ and the result of the lemma follows upon noting that
\begin{align*}
\Big\{\|\nabla L_{\gamma_{n+1}}(\theta_{n+1})\|> \delta\text{ and }\|\nabla L_{\gamma_{n}}(\theta_{n})\|\leq \delta\,\,\,\,i.o.\Big\}=\bigcup_{k\in\mathbb{N}}\Omega_{\delta,1/k}.
\end{align*}
\end{proof}

\subsection{Proof of Propositions \ref{prop:2point}-\ref{prop:Bayes}}\label{proof_prop12}

Noting that when $\Ell_\gamma=\bar{\ell}_\gamma$, \ref{assume:Gen1} trivially holds under \ref{assume:lower-bound}, Proposition \ref{prop:2point} is established once Lemma \ref{lemma:standard-gaussian_smoothing} is proved. Proposition \ref{prop:Bayes} is a direct consequence of  Lemmas \ref{lemma:S1forlgamma}-\ref{lemma:estimates-laplace}. While Lemma \ref{lemma:standard-gaussian_smoothing} is relatively easy to establish, Lemma \ref{lemma:estimates-laplace} requires much more effort due to the need to control a ratio of integrals.

\subsubsection{Assumption \ref{assume:Gen2} holds for \texorpdfstring{$\Ell_\gamma=\bar{\ell}_\gamma$}{Lg} under \ref{assume:holder} (and thus under \ref{assume:holder2})}

 \begin{lemma} 
\label{lemma:standard-gaussian_smoothing}
 Assume that \ref{assume:holder} holds. Then,  there exists a  constant $C\in(0,\infty)$ such that, for all $\theta,\theta'\in \R^d$, all $u\in \mathsf{U}$  and all $0< \tilde{\gamma}\leq \gamma\leq 1$, and with $\alpha$ and $J(\cdot)$ as in  \ref{assume:holder}, we have:
 \begin{enumerate}
     \item $\|\nabla \bar{\ell}_{\gamma}(\theta,u)\|\leq C\,J(u)\gamma^{\frac{\alpha-1}{2}}$,
     \item  $\left\|\nabla\bar{\ell}_{\gamma}(\theta,u)- \nabla\bar{\ell}_{\gamma}(\theta',u)\right\|\leq C\,J(u) \gamma^{-1+\frac{\alpha}{2}}\|\theta-\theta'\|$,
     \item $\left|\bar{\ell}_{\tilde{\gamma}}(\theta,u)-\bar{\ell}_{\gamma}(\theta,u)\right|\leq C\,J(u) ( \gamma/\tilde{\gamma})^{\frac{d}{2}}\frac{\gamma-\tilde{\gamma}}{\tilde{\gamma}}$,
     \item $\|\nabla \bar{\ell}_{\tilde{\gamma}}(\theta,u)-\nabla\bar{\ell}_{\gamma}(\theta,u) \|\leq C\,J(u) \gamma^{1/2}( \gamma/\tilde{\gamma})^{\frac{d}{2}}\frac{\gamma-\tilde{\gamma}}{\tilde{\gamma}^{2}}$.
 \end{enumerate}
 \end{lemma}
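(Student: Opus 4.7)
The plan is to base all four estimates on the change of variables $x = \theta + \sqrt{\gamma}z$, under which $\bar{\ell}_\gamma(\theta,u) = \int \ell(\theta + \sqrt{\gamma}z, u)\phi_{d,1}(z)\dd z$, together with a pair of centred Stein-type identities for the $\theta$-derivatives, obtained by differentiating only the Gaussian density under the integral sign and then subtracting the $z$-independent term $\ell(\theta, u)$ from the integrand (which is legitimate because $\int z\,\phi_{d,1}(z)\dd z = 0$ and $\int (zz^\top - I_d)\phi_{d,1}(z)\dd z = 0$):
\begin{align*}
\nabla \bar{\ell}_{\gamma}(\theta,u) &= \gamma^{-1/2}\int \big(\ell(\theta+\sqrt{\gamma}z,u) - \ell(\theta,u)\big)z\,\phi_{d,1}(z)\dd z, \\
\nabla^2 \bar{\ell}_{\gamma}(\theta,u) &= \gamma^{-1}\int \big(\ell(\theta+\sqrt{\gamma}z,u) - \ell(\theta,u)\big)(zz^\top - I_d)\,\phi_{d,1}(z)\dd z.
\end{align*}
The point of the centering is that the integrand now involves an \emph{increment} of $\ell$, to which the Hölder-type bound \eqref{eq:Holder} applies directly.

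Bounds (1) and (2) will then come from direct substitution. For (1), inserting $|\ell(\theta+\sqrt{\gamma}z,u) - \ell(\theta,u)| \leq J(u)(\gamma^{\alpha/2}\|z\|^\alpha + \gamma^{\beta/2}\|z\|^\beta)$, multiplying by $\|z\|$, and integrating against $\phi_{d,1}$ yields a sum of finite Gaussian moments times $\gamma^{(\alpha-1)/2}$ and $\gamma^{(\beta-1)/2}$; the constraints $\gamma \leq 1$ and $\beta \geq \alpha$ then absorb the second term into the first. For (2), the same step in the Hessian representation, together with $\|zz^\top - I_d\|_\mathrm{op} \leq \|z\|^2 + 1$, gives $\|\nabla^2 \bar{\ell}_\gamma(\theta,u)\|_\mathrm{op} \leq CJ(u)\gamma^{-1+\alpha/2}$, from which the claimed Lipschitz bound on $\nabla \bar{\ell}_\gamma(\cdot,u)$ follows by the vector-valued mean value inequality along the segment joining $\theta$ and $\theta'$.

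For bounds (3) and (4), I would instead differentiate in $\gamma$, using $\partial_\gamma \phi_{d,\gamma}(y) = \phi_{d,\gamma}(y)(\|y\|^2 - d\gamma)/(2\gamma^2)$ and the same centering trick. This gives the Stein-type formula $\partial_\gamma \bar{\ell}_\gamma(\theta,u) = (2\gamma)^{-1}\int (\ell(\theta+\sqrt{\gamma}z,u)-\ell(\theta,u))(\|z\|^2 - d)\phi_{d,1}(z)\dd z$, and an analogous, slightly bulkier expression for $\partial_\gamma \nabla \bar{\ell}_\gamma(\theta,u)$ involving the factor $z(\|z\|^2/2 - 1 - d/2)$; feeding \eqref{eq:Holder} and Gaussian moments into these yields the pointwise-in-$\gamma$ estimates $|\partial_\gamma \bar{\ell}_\gamma(\theta,u)| \leq CJ(u)\gamma^{-1+\alpha/2}$ and $\|\partial_\gamma \nabla \bar{\ell}_\gamma(\theta,u)\| \leq CJ(u)\gamma^{-3/2+\alpha/2}$. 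Integrating from $\tilde{\gamma}$ to $\gamma$ produces preliminary bounds of order $\gamma^{\alpha/2} - \tilde{\gamma}^{\alpha/2}$ (replaced by $\log(\gamma/\tilde{\gamma})$ when $\alpha = 0$) and $\tilde{\gamma}^{-(1-\alpha)/2} - \gamma^{-(1-\alpha)/2}$ respectively.

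The last step, which I expect to be the main obstacle, is to show that these integrated bounds are actually dominated by the slightly peculiar target shapes $(\gamma/\tilde{\gamma})^{d/2}(\gamma-\tilde{\gamma})/\tilde{\gamma}$ and $\gamma^{1/2}(\gamma/\tilde{\gamma})^{d/2}(\gamma-\tilde{\gamma})/\tilde{\gamma}^2$. Writing $\gamma^a - \tilde{\gamma}^a = \tilde{\gamma}^a((\gamma/\tilde{\gamma})^a - 1)$ and using the concave inequality $x^a - 1 \leq a(x-1)$ for $x \geq 1$ and $a \in (0,1]$ (with $\log x \leq x - 1$ handling the $\alpha = 0$ case) converts the preliminary bounds into ones linear in $(\gamma-\tilde{\gamma})/\tilde{\gamma}$; the remaining mismatch in powers of $\tilde{\gamma}$ and $\gamma$ is then absorbed using $\tilde{\gamma} \leq \gamma \leq 1$ and the trivial fact that $(\gamma/\tilde{\gamma})^{d/2} \geq 1$. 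Once the Stein-type identities are in hand the analytical content is small, but the bookkeeping has to be carried out uniformly in $\alpha \in [0,1]$ and $d \geq 1$.
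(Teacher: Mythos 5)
Your proof of parts (1) and (2) is essentially the paper's: the same centred Stein-type identities $\nabla \bar{\ell}_\gamma(\theta,u)=\gamma^{-1/2}\int(\ell(\theta+\gamma^{1/2}z,u)-\ell(\theta,u))z\,\phi_{d,1}(z)\,\dd z$ and its second-order analogue, followed by \eqref{eq:Holder} and finite Gaussian moments (the paper works entrywise on the Hessian rather than with the operator norm, which only changes the dimension-dependent constant). For parts (3) and (4), however, you take a genuinely different route. The paper never differentiates in $\gamma$: it proves a pointwise bound $|\phi_{d,\gamma}(z)-\phi_{d,\tilde{\gamma}}(z)|\leq C(\gamma/\tilde{\gamma})^{d/2}\frac{\gamma-\tilde{\gamma}}{\tilde{\gamma}}(1+\gamma^{-1}\|z\|^2)\phi_{d,\gamma}(z)$ (and its analogue for $\gamma^{-1}\phi_{d,\gamma}-\tilde{\gamma}^{-1}\phi_{d,\tilde{\gamma}}$), packaged as Corollary~\ref{cor:delta-phi-gam}, and then applies it to the centred increment of $\ell$; this explains where the "peculiar" factor $(\gamma/\tilde{\gamma})^{d/2}$ comes from — it is the ratio of normalising constants in the density comparison, and the same corollary is reused verbatim in the much harder self-normalised case of Lemma~\ref{lemma:estimates-laplace}. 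Your approach — bound $|\partial_\gamma\bar{\ell}_\gamma|\leq CJ(u)\gamma^{-1+\alpha/2}$ and $\|\partial_\gamma\nabla\bar{\ell}_\gamma\|\leq CJ(u)\gamma^{-3/2+\alpha/2}$ via the $\gamma$-Stein identities and integrate over $[\tilde{\gamma},\gamma]$ — is correct (the interchange of $\partial_\gamma$ with the integral is routine here since $\ell(\cdot,u)$ has polynomial growth by \eqref{eq:Holder}), and the final comparison step goes through: e.g.\ $\int_{\tilde{\gamma}}^{\gamma}s^{(\alpha-3)/2}\dd s\leq(\gamma-\tilde{\gamma})\tilde{\gamma}^{-3/2}\leq\gamma^{1/2}(\gamma/\tilde{\gamma})^{d/2}(\gamma-\tilde{\gamma})/\tilde{\gamma}^{2}$, which also sidesteps your $\alpha$-dependent constants and the separate $\alpha=0$ and $\alpha=1$ cases. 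Indeed your bounds for (3)--(4) are \emph{sharper} than the stated ones (no $(\gamma/\tilde{\gamma})^{d/2}$ needed), which you then deliberately relax. The trade-off is that the integration-in-$\gamma$ trick relies on $\bar{\ell}_\gamma$ being linear in the smoothing kernel, so it would not transfer to the log-integral case \eqref{eq:def-l-gam}, whereas the paper's density-difference lemma serves both.
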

 
 \begin{proof}
  Below we let   $0< \tilde{\gamma}\leq \gamma\leq \bar{\gamma}_2 = 1$, $\theta\in\R^d$,  $u\in \mathsf{U}$ and $i\in\{1,\dots,d\}$ be arbitrary.
  
  For the first point of the lemma,   using \eqref{eq:ll} we can write
   \begin{align*}
       \left|\frac{\partial }{\partial \theta_i} \bar{\ell}_{\gamma}(\theta,u)\right|&=\gamma^{-1/2}\left|\int_{\R^d} \ell(\theta+\gamma^{1/2}z,u)z_i\phi_{d,1}(z)\dd z\right|\\
       &=\gamma^{-1/2}\left|\int_{\R^d} \round{\ell(\theta+\gamma^{1/2}z,u)-\ell(\theta,u)}z_i\phi_{d,1}(z)\dd z\right|\\
       &\leq \gamma^{-1/2}\int_{\R^d} \left|\ell(\theta+\gamma^{1/2}z,u)-\ell(\theta,u)\right|\left|z_i\right|\phi_{d,1}(z)\dd z\\
       &\leq 2\gamma^{\frac{\alpha-1}{2}}J(u)\int_{\R^d}(1+ \|z\|^{\beta})\left|z_i\right|\phi_{d,1}(z)\dd z
   \end{align*}
   and thus  the first part of the lemma holds with $C\geq2d^{1/2}\int_{\R^d} (1+\|z\|^{\beta})\left|z_1\right|\phi_{d,1}(z)\dd z<\infty$.
   
   For the second part of the lemma,   using \eqref{eq:ll} as  well as   $\int_{\R} (y^2-1)\phi_{1,1}(y)\mathrm{d}y=0$, we can write
\begin{equation}\label{eq:int1}
\begin{split}
    \left|\frac{\partial^2}{\partial \theta^2_i}\bar{\ell}_{\gamma}(\theta,u)\right|&=\left|\gamma^{-1}\int_{\R^d} \ell(\theta+\gamma^{1/2}z,u)\round{z_i^2-1}\phi_{d,1}(z)\dd z\right|\\
    &=\left|\gamma^{-1}\int_{\R^d} \round{\ell(\theta+\gamma^{1/2}z,u)-\ell(\theta,u)}\round{z_i^2-1}\phi_{d,1}(z)\dd z\right|\\
    &\leq \gamma^{-1}\int_{\R^d} \left|\ell(\theta+\gamma^{1/2}z,u)-\ell(\theta,u)\right|\left|z_i^2-1\right|\phi_{d,1}(z)\dd z\\
    &\leq \gamma^{-1+\frac{\alpha}{2}}J(u)C'
\end{split}
\end{equation}
with $C'=2\int_{\R^d} \left\{1+\|z\|^{\beta}\right\}\left|z_i^2-1\right|\phi_{d,1}( z)\dd z<\infty$. Moreover, for  any $j\neq i$ we have, using similar calculations and  noting that  $\int_{\R^d} z_iz_j\phi_{d,1}(\mathrm{d}z)=0$,
\begin{equation}\label{eq:int2}
\begin{split}
    \left|\frac{\partial^2}{\partial \theta_i\partial \theta_j}\bar{\ell}_{\gamma}(\theta,u)\right|&=\left|\gamma^{-1}\int_{\R^d} \ell(\theta+\gamma^{1/2}z,u)z_iz_j\phi_{d,1}(z)\dd z\right|\\
    &=\left|\gamma^{-1}\int_{\R^d} \round{\ell(\theta+\gamma^{1/2}z,u)-\ell(\theta,u)}z_iz_j\phi_{d,1}(z)\dd z\right|\\
    &\leq \gamma^{-1}\int_{\R^d} \left|\ell(\theta+\gamma^{1/2}z,u)-\ell(\theta,u)\right|\left|z_i\right|\left|z_j\right|\phi_{d,1}(z)\dd z\\
    &\leq \gamma^{-1+\frac{\alpha}{2}}J(u)C''
\end{split}
\end{equation}
with $C''=2\int_{\R^d}(1+ \|z\|^{\beta})\left|z_i\right|\left|z_j\right|\phi_{d,1}( z)\dd z$. By combining \eqref{eq:int1} and  \eqref{eq:int2}, it follows that the second part of the lemma holds with $C\geq (C'+C'')d$, from which we deduce that the second part of the lemma holds.

Next, to prove the third point of the lemma,  we remark that  by Corollary~\ref{cor:delta-phi-gam} (see technical results in Appendix~\ref{subsec:from-gen-to-example}) and  using \eqref{eq:ll}, for some constant $C'<\infty$ we have
\begin{align*}
|\bar{\ell}_{\gamma}(\theta,u)-\bar{\ell}_{\tilde{\gamma}}(\theta,u)|
&=\bigg|\int_{\R^d} \big(\ell(\theta+z,u)-\ell(\theta,u)\big)\big[\phi_{d,\gamma}(z)-\phi_{d,\tilde{\gamma}}(z)\big]\mathrm{d}z\bigg|\\
&\leq C'\left(\frac{\gamma}{\tilde{\gamma}}\right)^{d/2}\frac{\gamma-\tilde{\gamma}}{\tilde{\gamma}} \int_{\R^d} \big|\ell(\theta+\gamma^{1/2} z,u)-\ell(\theta,u)\big|(1+\|z\|^{2}) \phi_{d,1}(z) \mathrm{d}z \\
&\leq 2C'\left(\frac{\gamma}{\tilde{\gamma}}\right)^{d/2}\frac{\gamma-\tilde{\gamma}}{\tilde{\gamma}}J(u)\int_{\R^d}(1+\|z\|^{\beta})(1+\|z\|^{2})\phi_{d,1}(z)\dd z 
\end{align*}
showing  that the third part of the lemma holds with $C\geq2C'\int_{\R^d}(1+\|z\|^{\beta})(1+\|z\|^{2})\phi_{d,1}( z)\dd z<\infty$.

Finally, to show the last part of the lemma holds,  we remark that by Corollary~\ref{cor:delta-phi-gam} and using \eqref{eq:ll}, for some constant $C'<\infty$ we have
\begin{align*}
\Big|\frac{\partial }{\partial \theta_i}\bar{\ell}_{\gamma}(\theta,u)-\frac{\partial }{\partial \theta_i}\bar{\ell}_{\tilde{\gamma}}(\theta,u)\Big| & =\bigg|\int_{\R^d}\ell(\theta+z,u)z_{i}\left[\gamma^{-1}\phi_{d,\gamma}(z)-\tilde{\gamma}^{-1}\phi_{d,\tilde{\gamma}}(z)\right]\mathrm{d}z \bigg|\\
 & =\bigg|\int_{\R^d}\left[\ell(\theta+z,u)-\ell(\theta,u)\right]z_{i}\left[\gamma^{-1}\phi_{d,\gamma}(z)-\tilde{\gamma}^{-1}\phi_{d,\tilde{\gamma}}(z)\right]\mathrm{d}z\bigg|\\
 &\leq  \gamma^{1/2}\,C'\left(\frac{\gamma}{\tilde{\gamma}}\right)^{d/2}\frac{\gamma-\tilde{\gamma}}{\tilde{\gamma}^2}\int_{\R^d}\big|\ell(\theta+\gamma^{1/2}z,u)-\ell(\theta,u)\big||z_i|(1+\|z\|^2)\phi_{d,1}(z)\dd z\\
 &\leq  \gamma^{1/2} 2C'\left(\frac{\gamma}{\tilde{\gamma}}\right)^{d/2}\frac{\gamma-\tilde{\gamma}}{\tilde{\gamma}^2}J(u)\int_{\R^d} (1+\|z\|^{\beta})|z_i|(1+\|z\|^2)\phi_{d,1}(z) \mathrm{d}z
\end{align*}
showing that last part of the lemma holds with $C\geq 2C'd^{1/2}\int_{\R^d} (1+\|z\|^{\beta})|z_1|(1+\|z\|^2)\phi_{d,1}(z) \mathrm{d}z<\infty$.
 \end{proof}

 \subsubsection{Assumption \ref{assume:Gen1} holds for \texorpdfstring{$\Ell_\gamma=\ell_\gamma$}{Lg} under \ref{assume:lower-bound} and \ref{assume:holder2}}
  
\begin{lemma} \label{lemma:S1forlgamma}
Assume that \ref{assume:lower-bound} and \ref{assume:holder2} hold. Then, \ref{assume:Gen1} holds for $\Ell_\gamma=\ell_\gamma$.
\end{lemma}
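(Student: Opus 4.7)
The plan is to upper-bound the integral sitting inside the logarithm that defines $\ell_\gamma(\theta,u)$, so that after taking $-\log$ and expectation we recover a uniform lower bound from \ref{assume:lower-bound}. The Hölder bound \eqref{eq:Holder} in \ref{assume:holder2} gives $\ell(x,u) \geq \ell(\theta,u) - J(u)(\|x-\theta\|^\alpha + \|x-\theta\|^\beta)$, so after the change of variables $x-\theta = \gamma^{1/2}z$ one obtains
\[
\int_{\R^d} e^{-\ell(x,u)}\phi_{d,\gamma}(x-\theta)\dd x \,\leq\, e^{-\ell(\theta,u)} \int_{\R^d} e^{J(u)\big(\gamma^{\alpha/2}\|z\|^\alpha + \gamma^{\beta/2}\|z\|^\beta\big)}\phi_{d,1}(z)\dd z.
\]
The elementary inequality $t^{\alpha/2} \leq 1 + t^{\beta/2}$ for $t\geq 0$ and $0\leq\alpha\leq\beta$ (split cases $t\leq 1$ and $t>1$), applied with $t=\gamma\|z\|^2$, yields $\gamma^{\alpha/2}\|z\|^\alpha + \gamma^{\beta/2}\|z\|^\beta \leq 1 + 2\gamma^{\beta/2}\|z\|^\beta$; in particular this works uniformly for $\alpha\in[0,1]$, including the critical $\alpha=0$ case.

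Next I would set $\bar{\gamma}_1 := \big(\upsilon/(2d)\big)^{2/\beta}$, so that $2\gamma^{\beta/2} \leq \upsilon/d$ for all $\gamma \in (0,\bar{\gamma}_1]$, and exploit $\beta \leq 2$: since $t\mapsto t^{\beta/2}$ is concave, $\|z\|^\beta = (\sum z_i^2)^{\beta/2} \leq \sum |z_i|^\beta$. Combined with the factorisation $\phi_{d,1}(z)=\prod_i\phi_{1,1}(z_i)$ and Fubini, this gives
\[
\int_{\R^d} e^{J(u)\big(1+(\upsilon/d)\sum_i|z_i|^\beta\big)}\phi_{d,1}(z)\dd z \,\leq\, e^{J(u)}\prod_{i=1}^d \int_\R e^{(\upsilon/d)J(u)|z_i|^\beta}\phi_{1,1}(z_i)\dd z_i.
\]
Since $(\upsilon/d)J(u)|z|^\beta \leq \upsilon J(u)(1+|z|^\beta)$, each one-dimensional factor is bounded by $I_1(u) := \int_\R e^{\upsilon J(u)(1+|z|^\beta)}\phi_{1,1}(z)\dd z$, giving
\[
\ell_\gamma(\theta,u) \,\geq\, \ell(\theta,u) - J(u) - d\log I_1(u),\qquad \gamma\in(0,\bar{\gamma}_1].
\]

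Taking expectation in $U$ and applying Jensen to the concave function $\log$ yields $L_\gamma(\theta) \geq \E[\ell(\theta,U)] - \E[J(U)] - d\log\E[I_1(U)]$. By \ref{assume:holder2} the quantity $\E[I_1(U)]$ is finite; moreover $I_1(U) \geq e^{\upsilon J(U)} \int_{-1}^{1}\phi_{1,1}(z)\dd z$, so $\E[e^{\upsilon J(U)}] < \infty$ and in particular $\E[J(U)]<\infty$. Combined with $\inf_\theta \E[\ell(\theta,U)] > -\infty$ from \ref{assume:lower-bound}, this furnishes a finite constant $c$ such that $L_\gamma(\theta) \geq c$ for all $\theta\in\R^d$ and all $\gamma\in(0,\bar{\gamma}_1]$, which is exactly \ref{assume:Gen1}. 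The main obstacle is converting the one-dimensional integrability hypothesis into a bound on a $d$-dimensional Gaussian integral with an additive $J(u)$ term; this is precisely where both the restriction $\beta\leq 2$ (to get $\|z\|^\beta\leq\sum|z_i|^\beta$) and the shrinking of $\bar\gamma_1$ (to absorb the dimensional factor $d$) come in.
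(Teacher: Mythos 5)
Your proof is correct, but it follows a genuinely different route from the paper's. The paper lower-bounds $\ell_\gamma(\theta,u)$ via the Bayes--Laplace sandwich theorem (Theorem \ref{thm: bayes-laplace sandwich}), writing $\ell_\gamma(\theta,u)\geq \ell(\theta,u)+\int(\ell(\theta+z,u)-\ell(\theta,u))\,\tilde\pi(z)\dd z$ with $\tilde\pi$ the exponentially tilted density, and then controls the resulting ratio of integrals through Lemma \ref{lemma:zero_mean} (which itself rests on a covariance inequality) and the moment bound of Lemma \ref{lemma:assume}. You instead bound the integrand inside the logarithm pointwise, $e^{-\ell(\theta+z,u)}\leq e^{-\ell(\theta,u)}e^{J(u)(\gamma^{\alpha/2}\|z\|^\alpha+\gamma^{\beta/2}\|z\|^\beta)}$, which avoids the sandwich theorem and any ratio control entirely; your reduction of the $d$-dimensional exponential moment to the one-dimensional hypothesis of \ref{assume:holder2} (via subadditivity of $t\mapsto t^{\beta/2}$ for $\beta\le 2$, the product structure of $\phi_{d,1}$, and shrinking $\bar\gamma_1$ to absorb the factor $d$) plays the role of Lemma \ref{lemma:assume}, and your final Jensen step $\E[\log I_1(U)]\le\log\E[I_1(U)]$ goes in the right direction since $I_1\ge 1$. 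Your argument is more elementary and self-contained for this particular lemma; the paper's detour through the sandwich theorem and Lemmas \ref{lemma:zero_mean}--\ref{lemma:assume} pays off because that machinery is reused for the substantially harder gradient and increment estimates of Lemma \ref{lemma:estimates-laplace}, where a ratio of integrals is unavoidable. The only point worth making explicit in your write-up is that $I_1(U)<\infty$ $\P$-a.s.\ (immediate from $\E[I_1(U)]<\infty$), so that the pointwise lower bound $\ell_\gamma(\theta,u)\ge\ell(\theta,u)-J(u)-d\log I_1(u)$ is meaningful for $\P$-a.e.\ $u$ before taking expectations.
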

\begin{proof}
Let $\theta\in\R^d$, $u\in\mathsf{U}$ and $\gamma\in(0,1]$ be arbitrary. Then, using the Bayes-Laplace Sandwich
Theorem, see Theorem \ref{thm: bayes-laplace sandwich} in Appendix~\ref{app:sandwich}, we have
 \begin{equation}\label{eq:BL1}
 \begin{split}
 \ell_{\gamma}(\theta,u)&\geq  \int_{\R^d} \ell(\theta+z,u)\frac{e^{-\ell(\theta+z,u)}\phi_{d,\gamma}(z)}{\int_{\R^d} e^{-\ell(\theta+z,u)}\phi_{d,\gamma}(z )\mathrm{d}z }\mathrm{d}z\\
 &= \ell(\theta,u)+\int_{\R^d} \Big(\ell(\theta+z,u)-\ell(\theta,u)\Big)\frac{e^{-\ell(\theta+z,u)}\phi_{d,\gamma}(z)}{\int_{\R^d} e^{-\ell(\theta+z,u)}\phi_{d,\gamma}(z )\mathrm{d}z }\mathrm{d}z
\end{split}
\end{equation}
where, under \ref{assume:holder2} and using  \eqref{eq:ll},
 \begin{equation}\label{eq:BL2}
 \begin{split}
\bigg|\int_{\R^d} \Big(\ell(\theta+z,u)-\ell(\theta,u)\Big)&\frac{e^{-\ell(\theta+z,u)}\phi_{d,\gamma}(z)}{\int_{\R^d} e^{-\ell(\theta+z,u)}\phi_{d,\gamma}(z )\mathrm{d}z }\mathrm{d}z\bigg|\\
&\leq \int_{\R^d} \big|\ell(\theta+\gamma^{1/2} z,u)-\ell(\theta,u)\big)\big|\frac{e^{-\ell(\theta+\gamma^{1/2} z,u)}\phi_{d,1}(z)}{\int_{\R^d} e^{-\ell(\theta+\gamma^{1/2} z,u)}\phi_{d,1}(z )\mathrm{d}z }\mathrm{d}z\\
&\leq 2 J(u)\int_{\R^d}(2+\|z\|^2)\frac{e^{-\ell(\theta+z,u)}\phi_{d,1}(z)}{\int_{\R^d} e^{-\ell(\theta+z',u)}\phi_{d,1}(z )\mathrm{d}z }\mathrm{d}z\\
&\leq 2 J(u)\int_{\R^d}(2+\|z\|^2) e^{4\gamma^{1/\alpha}J(u)(1+\|z\|^\beta)}\phi_{d,1}(z )\mathrm{d}z\\
&\leq 6 J(u)\int_{\R^d}(1+\|z\|^4) e^{4\gamma^{1/\alpha}J(u)(1+\|z\|^\beta)}\phi_{d,1}(z )\mathrm{d}z\\
&=:\tilde{G}_\gamma(u)
\end{split}
\end{equation} 
where the  last inequality holds by Lemma \ref{lemma:zero_mean} part 2. By Lemma \ref{lemma:assume}, there exists a constant $\bar{\gamma}'\in(0,1]$ such that $\E[\tilde{G}_{\bar{\gamma}'}(U)]<\infty$  and thus, by using \eqref{eq:BL1}-\eqref{eq:BL2} we have, under \ref{assume:lower-bound},
\begin{align*}
\inf_{\theta\in\R^d}\E\big[\ell_{\gamma}(\theta,U)\big]\geq c:=\inf_{\theta\in\R^d} \E[\ell(\theta,U)]-\E[\tilde{G}_{\bar{\gamma}}(U)]>-\infty,\quad\forall\gamma\in (0,\bar{\gamma}].
\end{align*}
The proof of the lemma is complete.
 \end{proof}

\subsubsection{Assumption \ref{assume:Gen2} holds for \texorpdfstring{$\Ell_\gamma=\ell_\gamma$}{Lg} under   \ref{assume:holder2}}

\begin{lemma}\label{lemma:estimates-laplace}
Assume that \ref{assume:holder2} holds. Then, for all $p\in[1,\infty)$, there exists a $\bar{\gamma}_p\in (0,\infty)$ and a  function $G_p:\mathsf{U}\rightarrow[1,\infty)$ such that $\E[G_p(U)^p]<\infty$ and such that, for all $\theta,\theta'\in\R^d$, all $u\in\mathsf{U}$ and all $0<\tilde{\gamma}\leq \gamma\leq \bar{\gamma}_p$  we have, with $\alpha$ as in \ref{assume:holder2},
\begin{enumerate}
\item $\|\nabla \ell_\gamma(\theta,u)\|\leq G_p(u)\gamma^{\frac{\alpha-1}{2}}$,
\item $\|\nabla \ell_\gamma(\theta,u)-\nabla \ell_\gamma(\theta',u)\|\leq G_p(u)\gamma^{-1+\frac{\alpha}{2}}\|\theta-\theta'\| $,
\item $\big|\ell_{\tilde{\gamma}}(\theta,u)-\ell_\gamma(\theta,u)\big|\leq G_p(u) ( \gamma/\tilde{\gamma})^{\frac{d}{2}} \frac{\gamma-\tilde{\gamma}}{\tilde{\gamma}}$,
\item $\|\nabla \ell_\gamma(\theta,u)-\nabla \ell_{\tilde{\gamma}}(\theta,u)\|\leq  G_p(u)\gamma^{1/2}( \gamma/\tilde{\gamma})^{\frac{d}{2}} \frac{\gamma-\tilde{\gamma}}{\tilde{\gamma}^{2}}$.
\end{enumerate}
\end{lemma}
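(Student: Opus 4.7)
The plan is to reduce each of the four bounds to integrals against the rescaled density $\tilde{\pi}_{\gamma,\theta,u}(z)\propto e^{-\ell(\theta+\sqrt{\gamma}z,u)}\phi_{d,1}(z)$ obtained from the substitution $x=\theta+\sqrt{\gamma}z$, and to exploit moment-cancellation identities for the standard Gaussian together with \eqref{eq:Holder} and the integrated exponential-moment hypothesis of~\ref{assume:holder2}. The denominator of every ratio that appears takes the form $\int e^{-\ell(\theta+\sqrt{\gamma}z,u)}\phi_{d,1}(z)\dd z$; using \eqref{eq:Holder} in reverse one bounds it from below by $e^{-\ell(\theta,u)}\cdot e^{-c J(u)}$ for a constant $c$ controlled by $\gamma^{\alpha/2}$. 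After this factor cancels with the $e^{-\ell(\theta,u)}$ appearing in the numerator, every bound packages into a function $G_p(u)$ depending polynomially on $J(u)$ and through $e^{c J(u)(1+\|z\|^\beta)}$; the integrated exponential-moment hypothesis of~\ref{assume:holder2} then supplies $\E[G_p(U)^p]<\infty$ for $\gamma$ sufficiently small, i.e.~for $\gamma\le\bar{\gamma}_p$.

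For point~1, I compute $\nabla_\theta \ell_\gamma(\theta,u)=-\gamma^{-1/2}\int z\,\tilde{\pi}_{\gamma,\theta,u}(z)\dd z$ and use $\int z\,\phi_{d,1}(z)\dd z=0$ to rewrite the numerator as $\int z\bigl[e^{-\ell(\theta+\sqrt{\gamma}z,u)}-e^{-\ell(\theta,u)}\bigr]\phi_{d,1}(z)\dd z$; the elementary inequality $|e^{-a}-e^{-b}|\le e^{-\min(a,b)}|a-b|$ combined with \eqref{eq:Holder} extracts a factor $J(u)\gamma^{\alpha/2}(\|z\|^\alpha+\gamma^{(\beta-\alpha)/2}\|z\|^\beta)$, which together with the $\gamma^{-1/2}$ prefactor produces the stated $\gamma^{(\alpha-1)/2}$ scaling. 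For point~2, differentiating once more expresses $\nabla_\theta^2\ell_\gamma$ as $\gamma^{-1}I_d$ minus $\gamma^{-1}$ times the covariance of $\tilde{\pi}_{\gamma,\theta,u}$; the same centering and cancellation argument applied to second moments produces a Hessian bound of order $\gamma^{-1+\alpha/2}$, whence the Lipschitz estimate follows by the mean value theorem.

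For points~3 and~4 I would use the identity $\partial_\gamma \phi_{d,\gamma}(y)=\tfrac{1}{2}(\|y\|^2/\gamma^2-d/\gamma)\phi_{d,\gamma}(y)$ to derive
\begin{equation*}
\partial_\gamma \ell_\gamma(\theta,u)=-\frac{1}{2\gamma}\int_{\R^d}(\|z\|^2-d)\,\tilde{\pi}_{\gamma,\theta,u}(z)\dd z,
\end{equation*}
together with the analogous expression for $\partial_\gamma\nabla_\theta\ell_\gamma$. Centering via the Gaussian identities $\int(\|z\|^2-d)\phi_{d,1}(z)\dd z=0$ and $\int z\,(\|z\|^2-d-2)\phi_{d,1}(z)\dd z=0$, and applying the same H\"older/exponential-moment machinery, yields $|\partial_\gamma \ell_\gamma|\le G_p(u)\gamma^{\alpha/2-1}$ and $\|\partial_\gamma\nabla_\theta \ell_\gamma\|\le G_p(u)\gamma^{\alpha/2-3/2}$. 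Integrating in $\gamma$ from $\tilde{\gamma}$ to $\gamma$ gives points~3 and~4, with the factor $(\gamma/\tilde{\gamma})^{d/2}$ arising from the ratio of normalising constants across the two smoothing levels; this step can alternatively be carried out by invoking an estimate in the spirit of Corollary~\ref{cor:delta-phi-gam}.

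The main obstacle will be point~4, where the joint $\theta$- and $\gamma$-derivative triggers a third-order moment-cancellation argument, and where the sharp scaling $\gamma^{1/2}(\gamma/\tilde{\gamma})^{d/2}(\gamma-\tilde{\gamma})/\tilde{\gamma}^2$ demands careful bookkeeping: the $(\gamma/\tilde{\gamma})^{d/2}/\tilde{\gamma}^2$ factor reflects the worst-case blow-up of the normaliser ratio when $\tilde{\gamma}\ll\gamma$, while the extra $\gamma^{1/2}$ gain captures the additional smoothness picked up by differentiation in $\gamma$ rather than in $\theta$. Verifying that the resulting $G_p(u)$ has a finite $p$-th moment uniformly in $\theta$ is where the full strength of \ref{assume:holder2} is needed: the integrand will be of the form $J(u)\,P(\|z\|)\,e^{cJ(u)(1+\|z\|^\beta)}$ integrated against $\phi_{d,1}$, and the integrated exponential-moment hypothesis is precisely what guarantees integrability (after taking $\gamma$ small enough so that the constant $c$ falls below the $\upsilon$ of~\ref{assume:holder2}).
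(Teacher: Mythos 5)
Your plan for points~1 and~2 coincides with the paper's proof: both compute $\nabla\ell_\gamma$ and $\nabla^2\ell_\gamma$ as ratios of Gaussian integrals after the substitution $x=\theta+\sqrt{\gamma}z$, centre the numerators using $\int z\,\phi_{d,1}(z)\dd z=0$ and $\int (z_iz_j-\delta_{ij})\phi_{d,1}(z)\dd z=0$, extract a factor $\gamma^{\alpha/2}J(u)$ from \eqref{eq:Holder}, and control the ratio via a lower bound on the denominator together with the exponential-moment condition of \ref{assume:holder2} (the paper packages the ratio step as Lemma~\ref{lemma:zero_mean} combined with the covariance inequality of Lemma~\ref{lemma:cov}, and chooses $\bar{\gamma}_p$ so that $4\gamma^{\alpha/2}$ stays below the constant $\bar c_{2p}$ of Lemma~\ref{lemma:assume}). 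For points~3 and~4 you take a genuinely different route: you differentiate $\ell_\gamma$ in $\gamma$ via the heat-kernel identity and integrate over $[\tilde{\gamma},\gamma]$, whereas the paper never differentiates in $\gamma$ and instead bounds the two-scale differences $\phi_{d,\gamma}-\phi_{d,\tilde{\gamma}}$ and $\gamma^{-1}\phi_{d,\gamma}-\tilde{\gamma}^{-1}\phi_{d,\tilde{\gamma}}$ directly (Corollary~\ref{cor:delta-phi-gam}), which is exactly where its $(\gamma/\tilde{\gamma})^{d/2}$ factor originates. Your derivative bounds $|\partial_s\ell_s|\lesssim G_p(u)s^{\alpha/2-1}$ and $\|\partial_s\nabla\ell_s\|\lesssim G_p(u)s^{\alpha/2-3/2}$ do follow from the same centering machinery, they hold uniformly for $s\in[\tilde{\gamma},\gamma]\subset(0,\bar{\gamma}_p]$, and crude integration ($\int_{\tilde{\gamma}}^{\gamma}s^{-r}\dd s\leq(\gamma-\tilde{\gamma})\tilde{\gamma}^{-r}$ for $r>0$) yields bounds that are in fact \emph{stronger} than the stated ones, so the approach is sound and arguably cleaner. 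Two caveats: (i) your attribution of the $(\gamma/\tilde{\gamma})^{d/2}$ factor to ``the ratio of normalising constants across the two smoothing levels'' is off --- in your single-scale derivative approach no such ratio appears, and the factor is simply slack in the target inequality; (ii) you should justify differentiating $g_\gamma(\theta,u)=\int e^{-\ell(x,u)}\phi_{d,\gamma}(x-\theta)\dd x$ under the integral sign in $\gamma$ (a dominated-convergence argument on $[\tilde{\gamma},\bar{\gamma}_p]$ using the growth bound \eqref{eq:ll}, delicate only when $\beta=2$), a step the paper's difference-based argument avoids entirely.
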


\begin{proof}
Below we let $\theta\in\R^d$, $u\in\mathsf{U}$, $i\in\{1,\dots,d\}$, $p\in [1,\infty)$ and $0<\tilde{\gamma}\leq \gamma\leq \bar{\gamma}_p$ be arbitrary,  where $\bar{\gamma}_p=(\bar{c}_{2p}/4)^{2/\alpha}$ with  $\bar{c}_{2p}\in(0,\infty)$  as in Lemma~\ref{lemma:assume}.  
Define
\begin{align*}
\tilde{G}_p(\tilde{u}):= 4 J(\tilde{u}) \int_{\R^d}(4+ \|z\|^4)  e^{\bar{c}_{2p}J(\tilde{u})(1+\|z\|^\beta)} \phi_{d,1}(z)\dd z \geq 1,\quad\forall \tilde{u}\in\mathsf{U},
\end{align*} 
which is such that $\E[\tilde{G}_p(U)^{2p}]<\infty$ from Lemma~\ref{lemma:assume},
and  let
\begin{align*}
g_\gamma(\theta,u)=\int_{\R^d} e^{-\ell(x,u)}\phi_{d,\gamma }(x-\theta)\dd x=\int_{\R^d} e^{-\ell(\theta+\gamma^{1/2}z,u)}\phi_{d,1}(z)\dd z.
\end{align*}
Remark that  
\begin{equation}\label{eq:deriv1}
\begin{split}
\partial_i g_{\gamma}(\theta,u):=\frac{\partial}{\partial\theta_i}g_\gamma(\theta,u)=\gamma^{-1/2}\int_{\R^d} e^{-\ell(\theta+\gamma^{1/2}z,u)}z_i\phi_{d,1}(z)\dd z
\end{split}
\end{equation}
and that, for all $j\in \{1,\dots,d\}$, 
\begin{equation}\label{eq:deriv2}
\begin{split}
\partial_{ji}g_{\gamma}(\theta,u)&:=\frac{\partial}{\partial\theta_j}\partial_i g_{\gamma}(\theta,u)=-\mathbf{1}_{\{j\}}(i)\frac{1}{\gamma }g_{\gamma}(\theta,u)+\gamma^{-1}\int_{\R^d} e^{-\ell(\theta+\gamma^{1/2}z,u)}z_j z_i \phi_{d,1}(z)\dd z.
\end{split}
\end{equation}

To show the first part of the lemma remark that, using \eqref{eq:deriv1},
\begin{equation}\label{eq:deriv_w1}
\begin{split}
\Big|\frac{\partial  \ell_\gamma(\theta,u)}{\partial\theta_i}\Big| =\Big|\frac{\partial_i g_{\gamma}(\theta,u)}{ g_{\gamma}(\theta,u)}\Big|&=\gamma^{-1/2}\frac{\big|\int_{\R^d} e^{ -\ell(\theta+\gamma^{1/2}z,u)}z_i\phi_{d,1}(z)\dd z\big|}{\int_{\R^d} e^{ -\ell(\theta+\gamma^{1/2}z,u)} \phi_{d,1}(z)\dd z} \\
&\leq 2J(u)\gamma^{\frac{\alpha-1}{2}} \int_{\R^d}(2+\|z\|^2) \|z\| e^{4\gamma^{\alpha/2}J(u)(1+\|z\|^\beta)}\phi_{d,1}(z)\dd z\\
&\leq 4J(u)\gamma^{\frac{\alpha-1}{2}} \int_{\R^d}(4+\|z\|^4) e^{4\gamma^{\alpha/2}J(u)(1+\|z\|^\beta)} \phi_{d,1}(z)\dd z\\
&\leq  \gamma^{\frac{\alpha-1}{2}}\tilde{G}_{p}(u)
\end{split}
\end{equation}
where the first inequality holds by Lemma~\ref{lemma:zero_mean} (third part) and uses the fact that $|z_i| \leq \|z\|$, the second inequality uses that
$(2+\|z\|^2)\|z\|\leq (2+\|z\|^2)^2  \leq 2(4+\|z\|^4)$
 and the last inequality uses the fact that  $4\gamma^{\alpha/2}\leq \bar{c}_{2p}$. The result in the first part of the lemma follows with $G_p(u)\geq d^{1/2} \tilde{G}_{p}(u)$.

To show the second part of the lemma    let  $j\in\{1,\dots,d\}$  and note that
\begin{equation}\label{eq:deriv2_w1}
\begin{split}
\Big|\frac{\partial^2  \ell_\gamma(\theta,u)}{\partial\theta_i\partial\theta_j}\Big|
&=\bigg|\frac{g_{\gamma}(\theta,u)\partial_{ji}g_{\gamma}(\theta,u)-\big(\partial_i g_{\gamma}(\theta,u)\big)\big(\partial_jg_{\gamma}(\theta,u)\big)}{g_{\gamma}(\theta,u)^2}\bigg|\\
&\leq \bigg|\frac{\partial_{ji}g_{\gamma}(\theta,u)}{ g_{\gamma}(\theta,u)}\bigg|+\Big|\frac{\partial  \ell_\gamma(\theta,u)}{\partial\theta_i}\Big|\,\Big|\frac{\partial  \ell_\gamma(\theta,u)}{\partial\theta_j}\Big|.
\end{split}
\end{equation}
Assume first that $j=i$. In this case we have, using \eqref{eq:deriv2},
\begin{equation}\label{eq:deriv2_w2}
\begin{split}
\bigg|\frac{\partial_{ji}g_{\gamma}(\theta,u)}{ g_{\gamma}(\theta,u)}\bigg|&=\frac{1}{\gamma}\frac{\big|\int_{\R^d} e^{-\ell(\theta+\gamma^{1/2}z,u)}(z_i^2-1)\phi_{d,1}(z)\dd z\big|}{\int_{\R^d} e^{-\ell(\theta+\gamma^{1/2}z,u)} \phi_{d,1}(z)\dd z}\\
&\leq 2J(u)\gamma^{\frac{\alpha-2}{2}} \int_{\R^d}(2+\|z\|^2) (1+\|z\|^2)e^{4\gamma^{\alpha/2}J(u)(1+\|z\|^\beta)} \phi_{d,1}(z)\dd z\\
&\leq   4J(u)\gamma^{\frac{\alpha-2}{2}} \int_{\R^d}(4+\|z\|^4) e^{4\gamma^{\alpha/2}J(u)(2+\|z\|^\beta)} \phi_{d,1}(z)\dd z\\
&\leq     \gamma^{\frac{\alpha-2}{2}} \tilde{G}_p(u)\,,
\end{split}
\end{equation}
where the first inequality holds    by Lemma~\ref{lemma:zero_mean} (third part) and uses the fact that $|z_i^2-1|\leq 1+\|z\|^2$,   the second inequality uses that 
$   (2+\|z\|^2) (1+\|z\|^2) \leq (2+\|z\|^2)^2 \leq 2 (4+\|z\|^4)$ and the last inequality holds since     $4\gamma^{\alpha/2}\leq \bar{c}_{2p}$. On the other hand, if $j\neq i$ we have, using \eqref{eq:deriv2},
\begin{equation}\label{eq:deriv2_w22}
\begin{split}
\bigg|\frac{\partial_{ji}g_{\gamma}(\theta,u)}{ g_{\gamma}(\theta,u)}\bigg|&=\frac{1}{\gamma}\frac{\big|\int_{\R^d} e^{-\ell(\theta+\gamma^{1/2}z,u)}z_iz_j\phi_{d,1}(z)\dd z\big|}{\int_{\R^d} e^{-\ell(\theta+\gamma^{1/2}z,u)} \phi_{d,1}(z)\dd z}\\
&\leq J(u)\gamma^{\frac{\alpha-2}{2}} \int_{\R^d}(2+\|z\|^2) \|z\|^2e^{4\gamma^{\alpha/2}J(u)(1+\|z\|^\beta)} \phi_{d,1}(z)\dd z\\
&\leq 2 J(u)\gamma^{\frac{\alpha-2}{2}} \int_{\R^d}(4+\|z\|^4) e^{4\gamma^{\alpha/2}J(u)(1+\|z\|^\beta)} \phi_{d,1}(z)\dd z\\
&\leq  \gamma^{\frac{\alpha-2}{2}} \tilde{G}_p(u)\, ,
\end{split}
\end{equation}
where the first inequality holds    by Lemma~\ref{lemma:zero_mean} (third part) and uses that $|z_iz_j| \leq (|z_i|^2+|z_j|^2)/2\leq \|z\|^2/2$,   the second inequality follows from $(2+\|z\|^2)\|z\|^2 \leq  (2+\|z\|^2)^2 \leq 2(4+\|z\|^4)$  and   the last inequality holds since     $4\gamma^{\alpha/2}\leq \bar{c}_{2p}$. By combining \eqref{eq:deriv_w1}-\eqref{eq:deriv2_w22}, we obtain that
\begin{align*}
\Big|\frac{\partial  \ell^2_\gamma(\theta,u)}{\partial\theta_i\partial\theta_j}\Big|\leq   \gamma^{\frac{\alpha-2}{2}} \tilde{G}_p(u)+\gamma^{\alpha-1}\tilde{G}_{p}(u)^2
\end{align*}
and the result in the second part of the lemma follows from Taylor's theorem and taking $G_p(u)=C_0 \tilde{G}_{p}(u)^2$ for a constant $C_0>0$ sufficiently large.

To show the third part of the lemma   assume first that $\ell_{\tilde{\gamma}}(\theta,u)-\ell_\gamma(\theta,u)\geq 0$. Then, using the fact that $\log(1+x)\leq x$ for all $x\in (-1,\infty)$,
\begin{equation}\label{eq:bound_l}
\begin{split}
|\ell_{\tilde{\gamma}}(\theta,u)-\ell_{\gamma}(\theta,u)|&=\log\bigg(1+ \frac{\int_{\R^d}e^{-\ell(\theta+z,u)}\big(\phi_{d,\gamma}(  z)-\phi_{d,\tilde{\gamma}}(  z)\big)\dd z}{\int_{\R^d}e^{-\ell(\theta+z,u)}\phi_{d,\tilde{\gamma}}(  z)\dd z}\bigg)\\
&\leq \frac{\int_{\R^d}e^{ -\ell(\theta+z,u)}\big(\phi_{d,\gamma}( z)-\phi_{d,\tilde{\gamma}}(  z)\big)\dd z}{\int_{\R^d}e^{ -\ell(\theta+z,u)}\phi_{d,\tilde{\gamma}}(\dd z)}\\
& \leq  C(\gamma/\tilde{\gamma})^{\frac{d}{2}} \frac{\gamma-\tilde{\gamma}}{\tilde{\gamma}}  \frac{\int_{\R^d}e^{ -\ell(\theta+\gamma^{1/2} z,u)}(1+ \|z\|^2)\phi_{d,1}(z)\dd z}{\int_{\R^d}e^{ -\ell(\theta+\tilde{\gamma}^{1/2}z,u)}\phi_{d,1}( z)\dd z}
\end{split}
\end{equation}
where, by Corollary \ref{cor:delta-phi-gam}, the second inequality holds for some constant $C<\infty$. Together with Lemma~\ref{lemma:zero_mean} (second part), and noting that  $1+\|z\|^2 \leq (2+\|z\|^4)$ and   that $4\gamma^{\alpha/2}\leq \bar{c}_{2p}$, it follows that
\begin{equation}\label{eq:bound_l_p1}
\begin{split}
|\ell_{\tilde{\gamma}}(\theta,u)-\ell_{\gamma}(\theta,u)
| &\leq   C   (\gamma/\tilde{\gamma})^{\frac{d}{2}} \frac{\gamma-\tilde{\gamma}}{\tilde{\gamma}} \int_{\R^d}e^{4J(u)\gamma^{\alpha/2}(1+\|z\|^\beta)}(2+\|z\|^4)\phi_{d,1}( z)\dd z\\
 &\leq   C   (\gamma/\tilde{\gamma})^{\frac{d}{2}} \frac{\gamma-\tilde{\gamma}}{\tilde{\gamma}}  \tilde{G}_p(u).
\end{split}
\end{equation}
 Assume now that $\ell_{\tilde{\gamma}}(\theta,u)-\ell_\gamma(\theta,u)\leq 0$. Then, by following similar calculations as in \eqref{eq:bound_l}-\eqref{eq:bound_l_p1}, for some constant $C<\infty$ we have
\begin{equation}\label{eq:bound_l_p2}
\begin{split}
|\ell_{\gamma}(\theta,u)-\ell_{\tilde{\gamma}}(\theta,u)|
&\leq  \frac{\int_{\R^d}e^{ -\ell(\theta+z,u)}\big(\phi_{d,\tilde{\gamma}}( z)-\phi_{d, \gamma }(  z)\big)\dd z}{\int_{\R^d}e^{ -\ell(\theta+z,u)}\phi_{d, \gamma }(\dd z)}\leq  C  (\gamma/\tilde{\gamma})^{\frac{d}{2}} \frac{\gamma-\tilde{\gamma}}{\tilde{\gamma}}  \tilde{G}_p(u).
\end{split}
\end{equation}
By combining \eqref{eq:bound_l_p1}-\eqref{eq:bound_l_p2} we obtain
\begin{align}\label{eq:bound-delta-l_gam-proof}
|\ell_{\gamma}(\theta,u)-\ell_{\tilde{\gamma}}(\theta,u)|\leq \frac{|g_\gamma(\theta,u)-g_{\tilde{\gamma}}(\theta,u)|}{\min\{g_\gamma(\theta,u),g_{\tilde{\gamma}}(\theta,u)\}}\leq  C  (\gamma/\tilde{\gamma})^{\frac{d}{2}} \frac{\gamma-\tilde{\gamma}}{\tilde{\gamma}}  \tilde{G}_p(u)
\end{align}
and   the third part of the lemma follows.

To show the last part of the lemma  note first that
\begin{equation}\label{eq:last_part}
\begin{split}
\Big|\frac{\partial  \ell_{\gamma}(\theta,u)}{\partial\theta_i}-\frac{\partial  \ell_{\tilde{\gamma}}(\theta,u)}{\partial\theta_i}\Big|&\leq \Big|\frac{\partial  \ell_{\tilde{\gamma}}(\theta,u)}{\partial\theta_i}\Big|\frac{\big|g_{\gamma}(\theta,u)- g_{\tilde{\gamma}}(\theta,u)\big|}{g_{\gamma}(\theta,u)}\Big|+\frac{\big|\partial_i g_{\gamma}(\theta,u)-\partial_i g_{\tilde{\gamma}}(\theta,u)\big|}{g_{\gamma}(\theta,u)}.
\end{split}
\end{equation}
Noting that
\begin{equation}\label{eq:last_part_21}
\begin{split}
\Big|\gamma^{-1/2}\int_{\R^d} e^{\ell(\theta,u)-\ell(\theta+\gamma^{1/2}z,u)}z_i\phi_{d,1}( z)\dd z&-\tilde{\gamma}^{-1/2}\int_{\R^d} e^{\ell(\theta,u)-\ell(\theta+\tilde{\gamma}^{1/2}z,u)}z_i\phi_{d,1}(  z)\dd z\Big|\\
&=\Big|\int_{\R^d} e^{\ell(\theta,u)-\ell(\theta+z,u)}z_i\Big(\gamma^{-1}\phi_{d,\gamma}( z)-\tilde{\gamma}^{-1}\phi_{d,\tilde{\gamma}}(  z)\Big)\dd z\Big| 
\end{split}
\end{equation}
and using \eqref{eq:deriv1}, it follows Corollary~\ref{cor:delta-phi-gam} that for some constant $C<\infty$ we have
\begin{align*}
    \big|\partial_i g_{\gamma}(\theta,u)-\partial_i g_{\tilde{\gamma}}(\theta,u)\big| &\leq C(\gamma/\tilde{\gamma})^{\frac{d}{2}} \frac{\gamma-\tilde{\gamma}}{\tilde{\gamma}^2}\gamma^{1/2}\int_{\R^d}e^{\ell(\theta,u)-\ell(\theta+\gamma^{1/2} z,u)}|z_i| (1+\|z\|^2) \phi_{d,1}(\dd z)\\
    & \leq C(\gamma/\tilde{\gamma})^{\frac{d}{2}} \frac{\gamma-\tilde{\gamma}}{\tilde{\gamma}^2}\gamma^{1/2}\int_{\R^d}e^{\ell(\theta,u)-\ell(\theta+\gamma^{1/2} z,u)} (\|z\|+\|z\|^3)\phi_{d,1}(\dd z).
\end{align*}
Using this latter result, noting that $(\|z\|+\|z\|^3) \leq (2+\|z\|^4)$ and that $4\gamma^{\alpha/2}\leq \bar{c}_{2p}$,  and applying Lemma~\ref{lemma:zero_mean} (second part), we obtain
\begin{equation} \label{eq:bound-nabla-g_gam-proof-2}
\frac{\big|\partial_i g_{\gamma}(\theta,u)-\partial_i g_{\tilde{\gamma}}(\theta,u)\big|}{g_{\gamma}(\theta,u)} \leq  C \gamma^{1/2}(\gamma/\tilde{\gamma})^{\frac{d}{2}} \frac{\gamma-\tilde{\gamma}}{\tilde{\gamma}^2}  \tilde{G}_p(u)\,.
\end{equation}
Then, by using  \eqref{eq:deriv_w1}, \eqref{eq:bound-delta-l_gam-proof}, \eqref{eq:last_part}  and \eqref{eq:bound-nabla-g_gam-proof-2}, it follows that there exists a constant $\bar{C}<\infty$ such that
\begin{align*}
\Big|\frac{\partial  \ell_\gamma(\theta,u)}{\partial\theta_i}-\frac{\partial  \ell_{\tilde{\gamma}}(\theta,u)}{\partial\theta_i}\Big|&\leq  C(\gamma/\tilde{\gamma})^{\frac{d}{2}} \frac{\gamma-\tilde{\gamma}}{\tilde{\gamma}^2}  \tilde{G}_p(u)\big( \gamma^{\frac{\alpha+1}{2}}\tilde{G}_p(u)+\gamma^{1/2}\big),
\end{align*}
and the result in the last part of the lemma follows by taking $G_p(u)=C_0 \tilde{G}_p(u)^2$ for $C_0>0$ sufficiently large. The proof of the lemma is complete.
 \end{proof}
\subsection{Proof of Proposition \ref{prop:characterisation_specific}}

  Under \ref{assume:lower-bound} and \ref{assume:holder} the function $l$ is trivially lower bounded and locally integrable. Moreover, as $\theta\mapsto \ell(\theta,u)$ is lower semi-continuous by assumption, then under \ref{assume:lower-bound}  $l$ is lower semi-continuous by Proposition \ref{prop:l_lsc} (given in Section \ref{sub:l}).  Under the last assumption made in the statement of the proposition, it follows that the function $l$ is strongly lower semi-continuous. Finally, under \ref{assume:lower-bound} and as $l$ is strongly lower-semi-continuous, for any positive sequence $(\gamma_n)_{n\geq 1}$ with $\lim_{n\rightarrow\infty} \gamma_n=0$, the two sequence  $(l_{\gamma_n})_{n\geq 1}$ and $(\bar{l}_{\gamma_n})_{n\geq 1}$ epi-converge to $l$ as $n\to \infty$ by Theorem \ref{thm: epi-convergence stochastic} (Section \ref{sub:T1}) and Theorem \ref{thm:epi_conv_standard} (Section \ref{sub:T2}), respectively.

\subsubsection{Epi-convergence when \texorpdfstring{$\Ell_\gamma=\ell_\gamma$}{Lg}}\label{sub:T1}

   \begin{theorem}
\label{thm: epi-convergence stochastic}
     Let $(\gamma_n)_{n\geq 1}$ be a sequence in $(0,\infty)$ such that $\lim_{n\rightarrow\infty}\gamma_n=0$ and assume that the following conditions hold: 
       \begin{enumerate}
        \item\label{stoch lower-bound}  $\E[\inf_{\theta\in \R^d}\ell(\theta,U)] > -\infty$;

        \item\label{lsc a.e.}  the function $ \ell(\cdot,u)$ is lower semi-continuous for $\P$-a.e.~$u\in \mathsf{U}$;

        \item\label{strong lsc expected obj} for all $  \theta\in \R^d$ there exists a sequence $(\theta_n)_{n\geq 1}$ such that $\lim_{n\rightarrow\infty}\theta_n= \theta$, such that $l$ is continuous at $\theta_n$ for every $n\geq 1$, and  such that $\lim_{n\rightarrow\infty} l(\theta_n)=l(\theta)$.  
             
       \end{enumerate}
       Then, the sequence of functions $( l_{\gamma_n})_{n\geq 1}$ epi-converges to $l$.
   \end{theorem}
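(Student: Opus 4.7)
The plan is to verify the two defining halves of epi-convergence separately for $(l_{\gamma_n})_{n \geq 1}$ (see Appendix \ref{app:epi_conv_lsc}): (a) the \emph{liminf} inequality $\liminf_{n \to \infty} l_{\gamma_n}(\theta_n) \geq l(\theta)$ for every $\theta \in \R^d$ and every sequence $\theta_n \to \theta$, and (b) the existence of a \emph{recovery sequence} $\theta_n \to \theta$ with $\limsup_{n \to \infty} l_{\gamma_n}(\theta_n) \leq l(\theta)$.

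For (a), I would first argue pointwise in $u \in \mathsf{U}$. By condition (2) the map $\ell(\cdot, u)$ is lower semi-continuous for $\P$-a.e.~$u$, so $x \mapsto e^{-\ell(x,u)}$ is upper semi-continuous and bounded above by $e^{-\inf_x \ell(x,u)}$. Since the Gaussian kernel $\phi_{d,\gamma_n}(\cdot - \theta_n)$ concentrates its mass at $\theta$ as $\gamma_n \downarrow 0$ and $\theta_n \to \theta$, a truncation argument combined with the portmanteau lemma for usc functions yields
\begin{equation*}
\limsup_{n \to \infty} \int_{\R^d} e^{-\ell(x,u)} \phi_{d,\gamma_n}(x - \theta_n) \, \mathrm{d} x \;\leq\; e^{-\ell(\theta,u)},
\end{equation*}
and taking $-\log$ gives $\liminf_{n} \ell_{\gamma_n}(\theta_n, u) \geq \ell(\theta, u)$ for $\P$-a.e.~$u$. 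To pass to the expectation I would invoke Fatou's lemma; the integrable lower bound needed is furnished by the Bayes--Laplace sandwich (Theorem~\ref{thm: bayes-laplace sandwich}), which yields $\ell_\gamma(\theta, u) \geq \inf_{x \in \R^d} \ell(x, u)$, combined with condition (1).

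For (b), I would use condition (3) to choose a sequence $\tilde\theta_m \to \theta$ with $l$ continuous at each $\tilde\theta_m$ and $l(\tilde\theta_m) \to l(\theta)$. The central subclaim is that $\lim_{\gamma \downarrow 0} l_\gamma(\tilde\theta) = l(\tilde\theta)$ whenever $l$ is continuous at $\tilde\theta$; granted this, a standard diagonal extraction produces indices $m(n) \to \infty$ such that $\theta_n \coloneq \tilde\theta_{m(n)} \to \theta$ and $l_{\gamma_n}(\theta_n) \to l(\theta)$, establishing the recovery inequality. To prove the subclaim, the $\liminf$ direction is supplied by part (a) applied to the constant sequence $\theta_n = \tilde\theta$, while Jensen's inequality applied to the convex map $-\log$ gives $\ell_\gamma(\tilde\theta, u) \leq \bar\ell_\gamma(\tilde\theta, u)$; taking expectations and applying Tonelli reduces the $\limsup$ bound to the familiar statement $\int l(x) \phi_{d,\gamma}(x - \tilde\theta) \, \mathrm{d} x \to l(\tilde\theta)$, which holds by a split-integral argument exploiting continuity of $l$ at $\tilde\theta$ and the integrable lower envelope from condition (1).

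The main obstacle is the integrability bookkeeping required to justify Fatou in (a) and the split-integral upper bound in (b) uniformly as $(\theta_n, \gamma_n) \to (\theta, 0)$. In particular, the tail of $\int l(x) \phi_{d,\gamma}(x - \tilde\theta) \, \mathrm{d} x$ must be tamed under the theorem's minimal hypotheses on $\ell$; here the Bayes--Laplace sandwich is again decisive, as it flanks $\ell_\gamma(\theta, u)$ between the pointwise infimum of $\ell(\cdot, u)$ and the expectation of $\ell$ against the tilted density proportional to $e^{-\ell(x,u)} \phi_{d,\gamma}(x - \theta)$, which concentrates near the minimisers of $\ell(\cdot, u)$ as $\gamma \downarrow 0$ and thereby supplies the uniform control needed to close the argument.
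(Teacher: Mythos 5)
Your proposal is correct and follows essentially the same route as the paper: the liminf half via upper semi-continuity of $e^{-\ell(\cdot,u)}$, concentration of the Gaussian mollifier, and Fatou's lemma with the integrable minorant $\ell_{\gamma}(\theta,u)\geq \inf_{x}\ell(x,u)$ supplied by the Bayes--Laplace sandwich and condition (1); and the recovery half via condition (3), the sandwich/Jensen upper bound $l_\gamma(\tilde\theta)\leq \bar{l}_\gamma(\tilde\theta)$, the approximation-to-the-identity result at continuity points, and a diagonal extraction (which the paper phrases through Ermoliev's $\mathrm{Liminf}$-of-sets device). The only wobble is your closing claim that the sandwich itself tames the upper tail of $\int l(x)\phi_{d,\gamma}(x-\tilde\theta)\,\mathrm{d}x$ --- the sandwich only furnishes the lower envelope, and both you and the paper in fact delegate that tail control to the cited approximation-to-the-identity theorem, so this is a cosmetic misattribution rather than a gap.
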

\begin{remark}
    By Proposition \ref{prop:l_lsc},   the function $l$  is strongly lower semi-continuous under the assumptions of the theorem.
\end{remark}
   
   \begin{proof}
     We first prove that $\liminf_{n\rightarrow\infty} l_{\gamma_n}(\theta_n)\geq l(\theta)$ for any sequence $(\theta_n)_{n\geq 1}$  converging to $\theta$. To this aim, for any $\vartheta\in \R^d$ and $u\in \mathsf{U}$ we let $h(\vartheta,u)=e^{-\ell(\vartheta,u)}$ and  $h_n(\vartheta,u)=\int_{\R^d} h(\vartheta+z, u)\phi_{\gamma_n,d}(z)\dd z$ for all $n\geq 1$, and let $(\theta_n)_{n\geq 1}$ be such that $\lim_{n\rightarrow\infty}\theta_n= \theta$. Under the assumptions of the theorem   there exists a set $U_0 \subset \mathsf{U}$ such that $\P(U_0)=1$ and such that, for any $u\in U_0$, the function $ \ell(\cdot,u)$ is lower-bounded and lower semi-continuous. We now let $u\in U_0$ be arbitrary and  note that $( \phi_{\gamma_n,d})_{n\geq 1}$ are mollifiers and that, under the assumptions of the theorem, we can apply the result given in \cite[Eq. (19)]{andrieu2024gradientfreeoptimizationintegration} with $g(\cdot)=h(\cdot,u)$ and with $g_n(\cdot)=h_n(\cdot,u)$ for all $n\geq 1$. Hence,
 \begin{equation}
 \label{eq: main chain stoch}
     \limsup_{n\rightarrow\infty} h_n(\theta_n,u)\leq \mathrm{cl}_h h(\theta,u) := \sup_{\vartheta_k\to \theta}\limsup_{k\rightarrow\infty} h(\vartheta_k,u).
 \end{equation}
 Under the assumptions on $\ell$ the function $h(\cdot,u)$ is upper semi-continuous,  implying that $\mathrm{cl}_h h(\cdot,u)=h(\cdot, u)$. Then, using \eqref{eq: main chain stoch}, we deduce that  $\limsup_{n\rightarrow\infty} h_n(\theta_n,u)\leq h(\theta,u)$  for all $u\in U_0$. To proceed further remark that    $h_n(\vartheta,u)\leq - \inf_{x\in\R^d} \ell(x,u)$ for $u \in U_0$ and all $\vartheta \in \R^d$, where $\E[-\inf_{x\in\R^d} \ell(x,U)]<\infty$ by assumption. Then, using first the reverse Fatou lemma and then the continuity and monotonicity of $\log(\cdot)$, we obtain
 \begin{align*}
     \liminf_{n\rightarrow\infty} l_{n}(\theta_n)&\geq -\E [ \limsup_{n\rightarrow\infty}  \log  h_n(\theta_n,U)]= -\E [ \log \limsup_{n\rightarrow\infty}    h_n(\theta_n,U)] 
     \geq -\E[ \log e^{-\ell(\theta,U)}]
     = l(\theta)
 \end{align*}
and thus $\liminf_{n\rightarrow\infty} l_n(\theta_n)\geq l(\theta)$ for any sequence $(\theta_n)_{n\geq 1}$ converging to $\theta$.

 To complete the proof it remains to show that for any $\theta \in \R^d$ there exists at least one sequence $(\theta_n)_{n\geq 1}$ converging to $ \theta$ which is such that $\lim_{n\rightarrow\infty} l_{\gamma_n}(\theta_n)=l(\theta)$.  To to so let   $\theta \in \R^d$ be arbitrary and note that under the assumptions of the theorem there exists a sequence $(\vartheta_k)_{k\geq 1}$ converging to $\theta$ which is such that $\lim_{k\rightarrow\infty} l(\vartheta_k)= l(\theta)$ and such that   $l$ is continuous at $\vartheta_k$ for all $k\geq 1$. 
    From this continuity, it follows from a classical result on approximation to the identity \citep[][Theorem 2.1, page 112]{stein2009real}  and from Theorem \ref{thm: bayes-laplace sandwich} that  
    \begin{equation}\label{eq:suo}
       \limsup_{n\rightarrow\infty} l_{\gamma_n}(\vartheta_k) \leq \limsup_{n\rightarrow\infty} \int_{\R^d} l(\vartheta_k) \phi_{\gamma_n,d}(x-\vartheta_k) \dd x = l(\vartheta_k),\quad\forall k\geq 1.
    \end{equation}
     Since, as proved above, we have $\liminf_{n\rightarrow\infty} l_{\gamma_n}(\vartheta_k)\leq l(\vartheta_k)$ for all $k\geq 1$, it follows from \eqref{eq:suo} that   $\lim_{n\rightarrow\infty} l_{\gamma_n}(\vartheta_k)= l(\vartheta_k)$ for all $k\geq 1$. To conclude the proof, we follow \cite[Theorem 3.7]{Ermoliev1995} and  we  define the sets $S=\crl{l(\vartheta_k), \, k\in \mathbb{N}}$ and   $S_n:=\crl{l_{\gamma_n}(\vartheta_k),\, \ k\in \mathbb{N} }$ for all $n\geq 1$. In addition, we let $\Liminf_{n} S_n$ be the set  containing the   limit of all the convergent sequences $(\alpha_n)_{n\geq 1}$ such that $\alpha_n \in S_n$ for all $n\geq 1$. 
     As shown above,   $S \subset \Liminf_{n} S_n$ and thus, since the set $\Liminf_{n} S_n$ is closed,  it follows that $S \subset \text{cl}(S) \subset \Liminf_{n} S_n$. Moreover, since $(\vartheta_k)_{k\geq 1}$ is chosen to be such that   $l(\theta)\in \text{cl}(S)$, it follows that there exists a sequence $(\vartheta_{k_n})_{n\geq 1}$ such that $\lim_{n\rightarrow\infty} l_{\gamma_n}(\vartheta_{k_n}) = l(\theta)$. The proof of the theorem is complete.
   \end{proof}

\subsubsection{Epi-convergence when \texorpdfstring{$\Ell_\gamma=\bar{\ell}_\gamma$}{Lg}}\label{sub:T2}

\begin{theorem}
\label{thm:epi_conv_standard}
Let $(\gamma_n)_{n\geq 1}$ be a sequence in $(0,\infty)$ such that $\lim_{n\rightarrow\infty}\gamma_n=0$ and assume that the function $l$ is strongly lower semi-continuous and that $\E[\inf_{\theta\in \R^d}\ell(\theta,U)] > -\infty$. Then, the sequence of functions $(\bar{l}_{\gamma_n})_{n\geq 1}$ epi-converges to $l$.
 
\end{theorem}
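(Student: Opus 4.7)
The plan is to mirror the structure of the proof of Theorem \ref{thm: epi-convergence stochastic}, exploiting the fact that $\bar{l}_{\gamma_n}$ is a genuine Gaussian mollification of $l$ rather than of $e^{-l}$. Write $c:=\E[\inf_{x}\ell(x,U)]>-\infty$, so that $l\geq c$ pointwise. By Tonelli applied to the non-negative integrand $\ell(x,U)-\inf_{y}\ell(y,U)$, one has
\begin{equation*}
\bar{l}_{\gamma_n}(\theta)=\int_{\R^d}l(y)\phi_{d,\gamma_n}(y-\theta)\dd y=\int_{\R^d}l(\theta+\gamma_n^{1/2}z)\phi_{d,1}(z)\dd z,
\end{equation*}
so the proof reduces to studying the Gaussian mollification of the single lower-bounded, strongly lower semi-continuous function $l$.

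For the liminf inequality, I fix an arbitrary sequence $\theta_n\to\theta$. Since strong lower semi-continuity entails lower semi-continuity, $\liminf_{n}l(\theta_n+\gamma_n^{1/2}z)\geq l(\theta)$ for every $z\in\R^d$. Applying Fatou's lemma to the non-negative integrands $l(\theta_n+\gamma_n^{1/2}z)-c$ then yields
\begin{equation*}
\liminf_{n\to\infty}\bar{l}_{\gamma_n}(\theta_n)\;\geq\;c+\int_{\R^d}(l(\theta)-c)\phi_{d,1}(z)\dd z\;=\;l(\theta),
\end{equation*}
which also trivially covers the case $l(\theta)=+\infty$.

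For the recovery sequence I may assume $l(\theta)<+\infty$. Strong lower semi-continuity furnishes a sequence $(\vartheta_k)_{k\geq 1}$ with $\vartheta_k\to\theta$, $l$ continuous (hence locally bounded) at each $\vartheta_k$, and $l(\vartheta_k)\to l(\theta)$. At each such continuity point I argue $\lim_n\bar{l}_{\gamma_n}(\vartheta_k)=l(\vartheta_k)$: the $\liminf$ half follows from the bound just established applied to the constant sequence $\vartheta_k$; the matching $\limsup$ is obtained by splitting the integral into a ball $B(\vartheta_k,\delta)$ on which continuity forces $l(y)\leq l(\vartheta_k)+\varepsilon$, and its complement on which Gaussian concentration together with the lower bound $l\geq c$ drives the contribution to zero. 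This is exactly the content of the approximation-to-the-identity result (Stein, Theorem~2.1, p.~112) already invoked in the proof of Theorem \ref{thm: epi-convergence stochastic}. With pointwise convergence $\bar{l}_{\gamma_n}(\vartheta_k)\to l(\vartheta_k)$ in hand for every $k$, I close the argument via the same diagonalisation as in Theorem \ref{thm: epi-convergence stochastic}: setting $S:=\{l(\vartheta_k):k\geq 1\}$ and $S_n:=\{\bar{l}_{\gamma_n}(\vartheta_k):k\geq 1\}$, the inclusion $\mathrm{cl}(S)\subset\Liminf_n S_n$ lets me extract an index map $n\mapsto k_n$ with $\vartheta_{k_n}\to\theta$ and $\bar{l}_{\gamma_n}(\vartheta_{k_n})\to l(\theta)$.

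The main obstacle is the $\limsup$ half of the pointwise convergence at the continuity points $\vartheta_k$: $l$ is only assumed bounded below, so the contribution from the tail $B(\vartheta_k,\delta)^c$ must be shown to vanish without a global integrability assumption. The key point is that continuity at $\vartheta_k$ supplies the local $L^\infty$ control needed to apply the standard approximation-to-the-identity machinery, after which Gaussian tail decay combined with $l\geq c$ handles the far field; this delicate but classical step is exactly the one reused from the proof of Theorem \ref{thm: epi-convergence stochastic} and is the only place where the full strength of strong lower semi-continuity, rather than plain lower semi-continuity, is used.
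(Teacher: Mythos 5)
Your proposal is correct and follows essentially the same route as the paper's proof: Tonelli plus Fatou applied to the integrand shifted by $\E[\inf_{\theta}\ell(\theta,U)]$ for the liminf inequality, and strong lower semi-continuity combined with the approximation-to-the-identity result and the Ermoliev-style diagonalisation over the continuity points $\vartheta_k$ for the recovery sequence. The tail-control issue you flag for the $\limsup$ at continuity points is handled in the paper at the same level of detail, namely by citing the same classical result, so there is no substantive difference.
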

\begin{proof}
For all $u\in\mathsf{U}$ let $\eta(u)=\inf_{\theta\in \R^d}\ell(\theta,u)$  and note that $\E[\eta(U)]>-\infty$ by assumption. Let $\theta\in\R^d$ and $(\theta_n)_{n\geq 1}$ be a sequence converging to $\theta$. Then, for all $n\geq 1$ we have
\begin{equation}\label{eq:epi1}
\begin{split}
\bar{l}_{\gamma_n}(\theta_n)&=\E\bigg[\int_{\R^d} \ell(\theta_n+\gamma_n^{1/2}z,U)\phi_{d,1}(z)\dd z\bigg]\\
&=\E[\eta(U)]+\E\bigg[\int_{\R^d} \Big(\ell(\theta_n+\gamma_n^{1/2}z,U)-\eta(U)\Big)\phi_{d,1}(z)\dd z\bigg]\\
&=\E[\eta(U)]+ \int_{\R^d} \Big(l(\theta_n+\gamma_n^{1/2}z)-\E[\eta(U)]\Big)\phi_{d,1}(z)\dd z
\end{split}
\end{equation}
where the last inequality holds by Tonelli's theorem. By Fatou's lemma, we have
\begin{equation}\label{eq:epi2}
\begin{split}
\liminf_{n\rightarrow\infty}\int_{\R^d} \Big(l(\theta_n+\gamma_n^{1/2})-\E[\eta(U)]\Big)\phi_{d,1}(z)\dd z&\geq \int_{\R^d}\liminf_{n\rightarrow\infty}\Big(l(\theta_n+\gamma_n^{1/2}z)-\E[\eta(U)]\Big)\phi_{d,1}(z)\\
\end{split}
\end{equation}
where the second inequality uses the fact that, since $l$ is assumed to be lower semi-continuous, we have $\liminf_{n\rightarrow\infty} l(\theta_n+\gamma_n^{1/2}z)\geq l(\theta)$ for all $z\in\R^d$. By combining \eqref{eq:epi1} and \eqref{eq:epi2}, it follows that $\liminf_{n\rightarrow\infty}\bar{l}_{\gamma_n}(\theta_n)\geq l(\theta)$.

 To complete the proof it remains to show that for any $\theta \in \R^d$ there exists at least one sequence $(\theta_n)_{n\geq 1}$ converging to $ \theta$ which is such that $\lim_{n\rightarrow\infty} \bar{l}_{\gamma_n}(\theta_n)=l(\theta)$. To do so let $\theta\in\R^d$ and note that since, by assumption, the funcion $l$ is strongly lower semi-continuous   there exists a sequence $(\vartheta_k)_{k\geq 1}$ converging to $\theta$ such that  $l$ continuous at $\vartheta_k$ for all $k\geq 1$ and such that $\lim_{k\rightarrow\infty}l(\vartheta_k)=l(\theta)$. Then, using a classical result on approximation to the identity \citep[][Theorem 2.1, page 112]{stein2009real}, for any $k\geq 1$ we have $\limsup_{n\rightarrow\infty} \bar{l}_n(\vartheta_k)=l(\vartheta_k)$ and we can now proceed  as in the very last part proof of Theorem \ref{thm: epi-convergence stochastic} to show that  there exists a sequence $(\vartheta_{k_n})_{n\geq 1}$ such that $\lim_{n\rightarrow\infty} \bar{l}_{\gamma_n}(\vartheta_{k_n}) = l(\theta)$. The proof is complete.
\end{proof}

\subsubsection{Lower semi-continuity of  \texorpdfstring{$l$}{Lg}}\label{sub:l}

\begin{proposition}
\label{prop:l_lsc}
Assume that $\E[\inf_{\theta\in \R^d}\ell(\theta,U)] > -\infty$ and that for $\P$-a.e. $u\in\mathsf{U}$   the function $\ell(\cdot,u)$
 is lower semi-continuous. Then, the function $l$ is  lower semi-continuous.
\end{proposition}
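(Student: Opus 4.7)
The plan is a direct application of Fatou's lemma, using the integrable lower bound supplied by the hypothesis $\E[\inf_{\theta\in\R^d}\ell(\theta,U)]>-\infty$. Let $\theta\in\R^d$ be arbitrary and $(\theta_n)_{n\geq 1}$ any sequence converging to $\theta$; it suffices to show $\liminf_{n\rightarrow\infty} l(\theta_n)\geq l(\theta)$.

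First I would let $U_0\subset\mathsf{U}$ be a set of full $\P$-measure on which $\ell(\cdot,u)$ is lower semi-continuous, and set $\eta(U)=\inf_{\vartheta\in\R^d}\ell(\vartheta,U)$, which by assumption satisfies $\E[\eta(U)]>-\infty$. Then for every $u\in U_0$ the nonnegative random variables $\ell(\theta_n,u)-\eta(u)$ satisfy
\begin{equation*}
\liminf_{n\rightarrow\infty}\bigl(\ell(\theta_n,u)-\eta(u)\bigr)\geq \ell(\theta,u)-\eta(u)
\end{equation*}
by lower semi-continuity of $\ell(\cdot,u)$.

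Next I would apply Fatou's lemma to this nonnegative sequence:
\begin{equation*}
\liminf_{n\rightarrow\infty}\E\bigl[\ell(\theta_n,U)-\eta(U)\bigr]\geq \E\Bigl[\liminf_{n\rightarrow\infty}\bigl(\ell(\theta_n,U)-\eta(U)\bigr)\Bigr]\geq \E\bigl[\ell(\theta,U)-\eta(U)\bigr].
\end{equation*}
Since $\E[\eta(U)]$ is finite, it can be subtracted from both sides, yielding $\liminf_{n\rightarrow\infty} l(\theta_n)\geq l(\theta)$, which is the desired lower semi-continuity.

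There is no real obstacle here: the only subtlety is ensuring the application of Fatou is valid, which is why the reduction to the nonnegative integrand $\ell(\theta_n,U)-\eta(U)$ (rather than $\ell(\theta_n,U)$ directly, which may fail to be bounded below by an integrable function pointwise in $\theta$) is important. The hypothesis $\E[\eta(U)]>-\infty$ is precisely the minimal integrability required for this step.
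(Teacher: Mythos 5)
Your proof is correct and follows essentially the same route as the paper: the paper's proof simply invokes its displays \eqref{eq:epi1}--\eqref{eq:epi2} with $\gamma_n=0$, which is precisely your argument of centering by the integrable minorant $\eta(U)=\inf_{\vartheta\in\R^d}\ell(\vartheta,U)$ and applying Fatou's lemma together with the pointwise lower semi-continuity of $\ell(\cdot,u)$. Your version is in fact spelled out more explicitly than the paper's one-line reference, and correctly identifies why the reduction to a nonnegative integrand is the step that makes Fatou applicable.
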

\begin{proof}
 The result of the proposition directly follows from applying \eqref{eq:epi1}-\eqref{eq:epi2} with $\gamma_n=0$ for all $n \geq 1$.
\end{proof}

\subsection{Proof of Proposition \ref{prop:lsc-unconstrained-example}\label{p-propExample}}
\begin{proof}
Given the particular definition of the functions $l$ and $\ell$ that we consider, to prove the proposition it is enough to show that for any $u\in\mathsf{U}$ the function $\theta\mapsto\ell(\theta,u)$ is strictly lower semi-continuous. The result for $l$ is a particular case obtained for $n_{\mathrm{batch}}=n_{\mathrm{data}}$ and a particular $u$.

To this aim, let $u\in \mathsf{U}$ and $I_u=\{ (i_{u,k},j_{u,k}),\,k=1,\dots,n_{\mathrm{batch}}\}$. As preliminary calculations let $\theta\in\mathbb{R}^{p-1}$ be such
that $\sigma^{-1}(\theta)^{\top}(z_{i}-z_{j})\neq0$ for some $(i,j)\in I_u$, and note that the mapping 
$\theta'\mapsto\mathbf{1}\{\sigma^{-1}(\theta')^{\top}(z_{i}-z_{j})<0\}$ is
constant in a sufficiently small neighbourhood of $\theta$, by continuity
of linear mappings and $\sigma^{-1}$. Consequently $\theta$ is a point of discontinuity
for the function $\ell(\cdot, u)$ if and only if $\sigma^{-1}(\theta)^{\top}(z_{i}-z_{j})=0$ for some  $(i,j)\in I_u$, that is if and only if $D_{u,\theta}:=I_u\cap D_{\sigma^{-1}(\theta)}\neq\emptyset$.

We now let $\theta\in\mathbb{R}^{p-1}$ be fixed,   $(\theta_n)_{n\geq 1}$ is a sequence in $\R^{p-1}$ such that $\lim_{n\rightarrow\infty}\theta_{n}=\theta$, and to simplify notation we let $C= 2n_{+}(n_{\mathrm{data}}-n_+)/(n_\mathrm{data}(n_\mathrm{data}-1)n_{\mathrm{batch}})$. Then, using the convention that empty sums equal zero, we can write $\ell(\theta,u)$ as follows
\begin{align}\label{eq:cut}
\ell(\theta,u)=C\sum_{(i,j)\in D_{u,\theta}}\mathbf{1}\{\sigma^{-1}(\theta_{n})^{\top}(z_{i}-z_{j})<0\}+C\sum_{(i,j)\in D_{u,\theta}^{\complement}}\mathbf{1}\{\sigma^{-1}(\theta_{n})^{\top}(z_{i}-z_{j})<0\}.
\end{align}
Assume   that the set $D_{u,\theta}^{\complement}$ is non-empty. In this case, by the continuity of the mapping $\sigma^{-1}$, there exists an $n_{0}=n_{0}(\theta)\in\mathbb{N}$ such that
$\mathbf{1}\{\sigma^{-1}(\theta_{n})^{\top}(z_{i}-z_{j})<0\}=\mathbf{1}\{\sigma^{-1}(\theta)^{\top}(z_{i}-z_{j})<0\}$ for all $n\geq n_{0}$ and all $(i,j)\in D_{\theta}^{\complement}$, implying that
\begin{equation}
\lim_{n\rightarrow\infty}\sum_{(i,j)\in D_{u,\theta}^{\complement}}\mathbf{1}\{\sigma^{-1}(\theta_{n})^{\top}(z_{i}-z_{j})<0\}=\sum_{(i,j)\in D_{u,\theta}^{\complement}}\mathbf{1}\{\sigma(\theta)^{\top}(z_{i}-z_{j})<0\}\,.\label{eq:convergence-continuous-component}
\end{equation}
Assume now that the set $D_{u,\theta}$ is non-empty. Then,  for any $(i,j)\in D_{u,\theta}$ we have $\mathbf{1}\{\sigma^{-1}(\theta)^{\top}(z_{i}-z_{j})<0\}=0$, and thus
\begin{equation}\label{eq:convergence-noncontinuous-component}
\liminf_{n\rightarrow\infty}\mathbf{1}\{\sigma^{-1}(\theta_{n})^{\top}(z_{i}-z_{j})<0\}\geq 0=\mathbf{1}\{\sigma^{-1}(\theta)^{\top}(z_{i}-z_{j})<0\},\quad\forall (i,j)\in D_{u,\theta}.
\end{equation}
By using \eqref{eq:cut}-\eqref{eq:convergence-noncontinuous-component} and the superadditivity of the limit inferior, we deduce  that $\liminf_{n}\ell(\theta_{n},u)\geq \ell(\theta,u)$, showing   that the function $\theta'\mapsto \ell(\theta',u)$ is lower semi-continuous at $\theta$.

We now establish that this function is actually strongly lower semi-continuous at $\theta$. To this end assume first that  $D_{u,\theta}=\emptyset$. In this case, by applying  \eqref{eq:convergence-continuous-component} with $D_{u,\theta}^{\complement}=I_u$, it follows from \eqref{eq:cut} that $\lim_{n\rightarrow\infty}\ell(\theta_n,u)=\ell(\theta,u)$ for any sequence $(\theta_n)_{n\geq 1}$ converging to $\theta$. Together with the fact that, as shown above, the function $\ell(\cdot, u)$ is continuous at $\theta$ when   $D_{u,\theta}=\emptyset$, we can easily conclude that   this function is strictly lower semi-continuous at $\theta$ if $D_{u,\theta}=\emptyset$.

Assume now that $D_{u,\theta}\neq\emptyset$ and let   $(i_{0},j_{0})\in D_{u,\theta}$. In this scenario, for any $n\geq1$ we let $\vartheta_{n}=\sigma^{-1}(\theta)+n^{-1}(z_{i_{0}}-z_{j_{0}})/\|z_{i_{0}}-z_{j_{0}}\|$. Then, since by assumption we have $(z_{i_{1}}-z_{j_{1}})^{\top}(z_{i_{2}}-z_{j_{2}})\geq 0 $ for all $(i_1,j_1), (i_2,j_2)\in D_{u,\theta}$,  it follows from the definition of $(\vartheta_n)_{n\geq 1}$ that $\mathbf{1}\{\vartheta_{n}^{\top}(z_{i}-z_{j})<0\}=0$ for all $(i,j)\in D_{u,\theta}$ and $n\geq 1$.  Consequently, letting $\theta_n=\sigma(\vartheta_n)$ for all $n\geq 1$, we have
\begin{align*}
\lim_{n\rightarrow\infty}\mathbf{1}\{\sigma^{-1}(\theta_{n})^{\top}(z_{i}-z_{j})<0\}=\mathbf{1}\{\sigma^{-1}(\theta)^{\top}(z_{i}-z_{j})<0\},\quad\forall (i,j)\in D_{u,\theta}\neq \emptyset
\end{align*}
which, together with \eqref{eq:cut} and \eqref{eq:convergence-continuous-component}, shows that   $\lim_{n\rightarrow}\ell(\theta_{n})=\ell(\theta)$. Since the function $\sigma$ is continuous and $\lim_{n\rightarrow\infty}\vartheta_n=\sigma^{-1}(\theta)$, it follows that $\lim_{n\rightarrow\infty}\theta_n=\theta$. Finally, to show that $\theta_n$ is, for $n$ large enough, a point of continuity of the function $\ell(\cdot,u)$, remark first that
 $\vartheta_{n}^{\top}(z_{i}-z_{j})\neq0$ for all $(i,j)\in D_{u,\theta}$. On the other hand, since  $\lim_{n\rightarrow\infty}\vartheta_n=\sigma^{-1}(\theta)$, for all  $(i,j)\in D_{u,\theta}^\complement$ there exists an $n_{i,j}\in\mathbb{N}$ such that  $\vartheta_{n}^{\top}(z_{i}-z_{j})\neq0$ for all $n\geq n_{i,j}$. Noting that  $\sharp I_u<\infty$, it follows that there exists an $n'\in\mathbb{N}$ such that  $\vartheta_{n}^{\top}(z_{i}-z_{j})\neq0$ for all $(i,j)\in I_u$ and all $n\geq n'$. As proved above, this implies that $\theta_n$ is a point of continuity of the function $\ell(\cdot,u)$, which shows that this function is strongly lower semi-continuous at $\theta$. Since the pair  $(\theta,u)\in\R^{p-1}\mathsf{U}$ above is arbitrary,  the result of the proposition follows.
\end{proof}

\appendix

\section{Functional minimisation by coordinate descent} \label{app:coordinate-descent}

From the Laplace principle (see e.g. \cite{andrieu2024gradientfreeoptimizationintegration}) for $\tilde{\pi}\propto \pi_{\theta,\gamma} \exp(-l)$ we know that
\begin{equation}
			\label{eq: laplace principle}
			\nu \mapsto \mathrm{KL}(\nu, \tilde{\pi})=\int l(x)\nu(x)\mathrm{d}x+\mathrm{KL}(\nu,\pi_{\theta,\gamma})+\log\int e^{-l(x)}\pi_{\theta,\gamma}(x)\mathrm{d}x.
		\end{equation}
is minimised at $\nu = \tilde{\pi}$ (Bayes' rule) and $\mathrm{KL}(\nu, \tilde{\pi})=0$ implying that at this value of $\nu$ we have $\Phi_1(\nu,\theta;\gamma)= l_\gamma(\theta)$.

 \section{The Bayes-Laplace Sandwich Theorem}\label{app:sandwich}

    \begin{theorem}[Bayes-Laplace Sandwich]
    \label{thm: bayes-laplace sandwich}
       Let $g:\R^d\to \R$ be a locally integrable function. Fix $\theta\in \R^d$ and let $g_{\psi}(\theta)=-\log\int_{\R^d} e^{-g(\theta+z)}\psi(z)\mathrm{d}z$, where $\psi:\R^d\to \R_+$ satisfies $\int_{\R^d} \psi(z)\mathrm{d}z=1$. Then,
        \begin{equation*}
            \int_{\R^d} g(\theta+z)\frac{e^{-g(\theta+z)}\psi(z)}{\int_{\R^d} e^{-g(\theta+z')}\psi(z')\mathrm{d}z'}\mathrm{d}z\leq g_{\psi}(\theta)\leq \int_{\R^d} g(\theta+z)\psi(z)\mathrm{d}z.
        \end{equation*}
    \end{theorem}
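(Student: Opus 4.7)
\textbf{Proof proposal for Theorem~\ref{thm: bayes-laplace sandwich}.}

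The plan is to obtain both inequalities as instances of Jensen's inequality applied to the convex function $-\log$, using $\psi$ as the reference probability density for the upper bound and the tilted density $\tilde{\pi}(z) \coloneq e^{-g(\theta+z)}\psi(z)/Z$ (with normalising constant $Z \coloneq \int_{\R^d} e^{-g(\theta+z')}\psi(z')\mathrm{d}z' = e^{-g_\psi(\theta)}$) for the lower bound. Local integrability of $g$ and the fact that $\psi$ is a probability density ensure that all the integrals involved are well-defined in $[-\infty,+\infty]$; I would briefly note that the bounds are trivial (or vacuous) whenever $Z \in \{0,+\infty\}$ or the integral $\int g(\theta+z)\psi(z)\mathrm{d}z$ is $+\infty$, so one may work in the non-degenerate regime.

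For the upper bound, the plan is to apply Jensen's inequality to the convex function $\phi = -\log$ with respect to the probability density $\psi$, taking the random quantity to be $e^{-g(\theta+Z)}$ where $Z \sim \psi$. This yields
\begin{equation*}
g_\psi(\theta) \;=\; -\log\int_{\R^d} e^{-g(\theta+z)}\psi(z)\mathrm{d}z \;\leq\; \int_{\R^d} \bigl(-\log e^{-g(\theta+z)}\bigr)\psi(z)\mathrm{d}z \;=\; \int_{\R^d} g(\theta+z)\psi(z)\mathrm{d}z,
\end{equation*}
which is the right-hand inequality.

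For the lower bound, the plan is to recognise the left-hand side as the expectation of $g(\theta+\cdot)$ under $\tilde{\pi}$, and then to observe that
\begin{equation*}
\mathrm{KL}(\tilde{\pi},\psi) \;=\; \int_{\R^d}\tilde{\pi}(z)\log\frac{\tilde{\pi}(z)}{\psi(z)}\mathrm{d}z \;=\; \int_{\R^d}\tilde{\pi}(z)\bigl(-g(\theta+z)-\log Z\bigr)\mathrm{d}z \;=\; -\int_{\R^d} g(\theta+z)\tilde{\pi}(z)\mathrm{d}z + g_\psi(\theta).
\end{equation*}
Since the Kullback--Leibler divergence is non-negative (Gibbs' inequality, itself a direct consequence of Jensen applied to the convex function $-\log$), rearranging gives $\int g(\theta+z)\tilde{\pi}(z)\mathrm{d}z \leq g_\psi(\theta)$, which is the left-hand inequality.

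There is no real obstacle here: both inequalities are one-line applications of Jensen/Gibbs once the correct reference measure has been identified. The only care needed is the bookkeeping around possibly infinite integrals, which I would handle by a short preamble noting that the statement is meaningful precisely when $Z \in (0,\infty)$ and the relevant expectations are well-defined, and is otherwise trivial.
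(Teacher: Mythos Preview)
The proposal is correct and follows essentially the same approach as the paper: Jensen's inequality for the upper bound, and non-negativity of $\mathrm{KL}(\tilde{\pi},\psi)$ for the lower bound. The paper phrases the lower bound via the ``Laplace principle'' identity \eqref{eq: laplace principle}, but that identity is exactly the computation you wrote out for $\mathrm{KL}(\tilde{\pi},\psi)$, so there is no substantive difference.
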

    \begin{proof}
    Fix $\theta\in \R^d$. Then, by using Laplace Principle \eqref{eq: laplace principle}, we have
    \begin{align*}
                g_{\psi}(\theta)&=\int_{\R^d} g(\theta+z)\frac{e^{-g(\theta+z)}\psi(z)}{\int_{\R^d} e^{-g(\theta+z')}\psi(z')\mathrm{d}z'}\mathrm{d}z+\mathrm{KL}\round{\frac{e^{-g(\theta+\cdot)}\psi}{\psi(e^{-g(\theta+\cdot)})}, \psi} \nonumber \\
                &\geq \int_{\R^d} g(\theta+z)\frac{e^{-g(\theta+z)}\psi(z)}{\int_{\R^d} e^{-g(\theta+z')}\psi(z')\mathrm{d}z'}\mathrm{d}z
            \end{align*}
            where the inequality uses the fact that $\mathrm{KL}(\cdot,\cdot)\geq 0$.
     On the other hand, by Jensen's Inequality, we have
     \begin{align*}
     g_{\psi}(\theta)=-\log\int_{\R^d} e^{-g(\theta+z)}\psi(z)\mathrm{d}z\leq \int_{\R^d} g(\theta+z)\psi(z)\mathrm{d}z.
    \end{align*}
    and the proof of the theorem is complete.
    
    \end{proof}

\section{Technical results for proving Propositions \ref{prop:2point}-\ref{prop:Bayes}} \label{subsec:from-gen-to-example}

\subsection{Some technical lemmas}
\begin{lemma}
There exists a constant $C<\infty$ such that for any $0<\tilde{\gamma}\leq\gamma<\infty$
and $z\in\mathbb{R}^{d}$,
\begin{align*}
\left|\phi_{d,\gamma}(z)-\phi_{d,\tilde{\gamma}}(z)\right| & \leq C\left(\frac{\gamma}{\tilde{\gamma}}\right)^{d/2}\frac{\gamma-\tilde{\gamma}}{\tilde{\gamma}}\left[1+\gamma^{-1}\|z\|^{2}\right]\phi_{d,\gamma}(z)\,,\\
\left|\gamma^{-1}\phi_{d,\gamma}(z)-\tilde{\gamma}^{-1}\phi_{d,\tilde{\gamma}}(z)\right| & \leq C\left(\frac{\gamma}{\tilde{\gamma}}\right)^{d/2}\frac{\gamma-\tilde{\gamma}}{\tilde{\gamma}^{2}}\left[1+\gamma^{-1}\|z\|^{2}\right]\phi_{d,\gamma}(z)\,.
\end{align*}
\end{lemma}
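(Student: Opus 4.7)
The plan is to treat $s \mapsto \phi_{d,s}(z)$ as a smooth curve on $(0,\infty)$ and bound the finite difference via the fundamental theorem of calculus. Writing $\phi_{d,s}(z)=(2\pi s)^{-d/2}\exp(-\|z\|^2/(2s))$, a direct differentiation in $s$ gives
\begin{equation*}
\partial_s\phi_{d,s}(z)=\frac{1}{2s}\Big(\frac{\|z\|^2}{s}-d\Big)\phi_{d,s}(z),\qquad \partial_s\big(s^{-1}\phi_{d,s}(z)\big)=\frac{1}{s^2}\Big(\frac{\|z\|^2}{2s}-\frac{d+2}{2}\Big)\phi_{d,s}(z),
\end{equation*}
so that $\phi_{d,\gamma}(z)-\phi_{d,\tilde\gamma}(z)=\int_{\tilde\gamma}^{\gamma}\partial_s\phi_{d,s}(z)\,\dd s$ and similarly for the second quantity of interest. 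Taking absolute values inside the integral and using the triangle inequality reduces both bounds to estimating integrals whose integrands are a linear combination of $s^{-1}\phi_{d,s}(z)$, $s^{-2}\phi_{d,s}(z)$, $s^{-3}\phi_{d,s}(z)$.

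The next step is to replace $\phi_{d,s}(z)$ by a bound independent of $s$ on $[\tilde\gamma,\gamma]$. Writing the ratio explicitly,
\begin{equation*}
\frac{\phi_{d,s}(z)}{\phi_{d,\gamma}(z)}=\Big(\frac{\gamma}{s}\Big)^{d/2}\exp\!\Big(-\frac{\|z\|^{2}(\gamma-s)}{2s\gamma}\Big),
\end{equation*}
and since the exponent is non-positive for $s\leq\gamma$, we obtain $\phi_{d,s}(z)\leq(\gamma/\tilde\gamma)^{d/2}\phi_{d,\gamma}(z)$ uniformly in $s\in[\tilde\gamma,\gamma]$. I then pull this bound out of the integral. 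What remains are three elementary integrals that are computed exactly:
\begin{equation*}
\int_{\tilde\gamma}^{\gamma}\frac{\dd s}{s}=\log\frac{\gamma}{\tilde\gamma}\leq\frac{\gamma-\tilde\gamma}{\tilde\gamma},\qquad \int_{\tilde\gamma}^{\gamma}\frac{\dd s}{s^{2}}=\frac{\gamma-\tilde\gamma}{\gamma\tilde\gamma},\qquad \int_{\tilde\gamma}^{\gamma}\frac{\dd s}{s^{3}}\leq\frac{\gamma-\tilde\gamma}{\gamma\tilde\gamma^{2}},
\end{equation*}
using $\log(1+x)\leq x$ in the first estimate. Assembling, the first bound yields a factor $\tfrac12\bigl(d\cdot\tfrac{\gamma-\tilde\gamma}{\tilde\gamma}+\tfrac{\|z\|^{2}}{\gamma}\cdot\tfrac{\gamma-\tilde\gamma}{\tilde\gamma}\bigr)$, which is $\leq C[1+\gamma^{-1}\|z\|^{2}](\gamma-\tilde\gamma)/\tilde\gamma$ for some constant $C=C(d)$; multiplication by the pulled-out factor $(\gamma/\tilde\gamma)^{d/2}\phi_{d,\gamma}(z)$ delivers the first inequality. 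The second bound runs identically with $g(s):=s^{-1}\phi_{d,s}(z)$: the two integrals $\int_{\tilde\gamma}^{\gamma}s^{-2}\dd s$ and $\int_{\tilde\gamma}^{\gamma}s^{-3}\dd s$ combine to produce a factor $(\gamma-\tilde\gamma)/\tilde\gamma^{2}$, after absorbing $\tilde\gamma/\gamma\leq 1$ in the $d+2$ term.

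The main technical point, and the only place where one has to be attentive, is step three, namely making sure the $\|z\|^{2}$ coefficient in the final bound involves $\gamma^{-1}$ rather than $\tilde\gamma^{-1}$. This is what makes the computation $\int_{\tilde\gamma}^{\gamma}s^{-2}\dd s=(\gamma-\tilde\gamma)/(\gamma\tilde\gamma)$ (and its $s^{-3}$ analogue) pivotal: the product $s^{-2}\cdot\dd s$ integrates to something that contains one factor $\gamma^{-1}$ and one factor $\tilde\gamma^{-1}$, which is exactly the split required by the two inequalities. Everything else is routine bookkeeping.
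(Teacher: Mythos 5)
Your proof is correct, and it takes a genuinely different route from the paper's. The paper argues by a case split on the sign of the difference: when $\phi_{d,\gamma}(z)-\phi_{d,\tilde{\gamma}}(z)\geq 0$ it factors out $\phi_{d,\gamma}(z)$, bounds the prefactor by $c\bigl(1-\exp(-\tfrac{\gamma-\tilde{\gamma}}{2\gamma\tilde{\gamma}}\|z\|^{2})\bigr)$ with $c=(\gamma/\tilde{\gamma})^{d/2}$ and applies $1-e^{-x}\leq x$; when the difference is negative it bounds the ratio $\phi_{d,\tilde{\gamma}}(z)/\phi_{d,\gamma}(z)\leq c$ and controls $c-1$ via the algebraic inequality $(1+x)^{k/2}-1\leq \tfrac{k}{2}(1+x)^{k/2}x$. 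You instead represent the difference as $\int_{\tilde{\gamma}}^{\gamma}\partial_s\phi_{d,s}(z)\,\dd s$, note that $\phi_{d,s}(z)\leq(\gamma/\tilde{\gamma})^{d/2}\phi_{d,\gamma}(z)$ uniformly on $[\tilde{\gamma},\gamma]$ because the exponent in the ratio is non-positive, and evaluate the three remaining elementary integrals, taking care (correctly) that $\int_{\tilde{\gamma}}^{\gamma}s^{-2}\dd s=(\gamma-\tilde{\gamma})/(\gamma\tilde{\gamma})$ and $\int_{\tilde{\gamma}}^{\gamma}s^{-3}\dd s\leq(\gamma-\tilde{\gamma})/(\gamma\tilde{\gamma}^{2})$ supply the required $\gamma^{-1}$ in front of $\|z\|^{2}$. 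I checked the derivatives $\partial_s\phi_{d,s}(z)=\tfrac{1}{2s}(\tfrac{\|z\|^{2}}{s}-d)\phi_{d,s}(z)$ and $\partial_s(s^{-1}\phi_{d,s}(z))=\tfrac{1}{s^{2}}(\tfrac{\|z\|^{2}}{2s}-\tfrac{d+2}{2})\phi_{d,s}(z)$ and the resulting constants $C=d/2$ and $C=(d+2)/2$: everything goes through. What your approach buys is the elimination of the case analysis and fully explicit constants, and it would extend readily to other smoothly parameterised kernel families; what the paper's approach buys is that it is purely algebraic, never differentiating in the scale parameter. Either proof is acceptable.
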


\begin{proof}
Let  $0\leq\tilde{\gamma}\leq\gamma$ and $z\in\R^d$ be arbitrary, and let $c=(\gamma/\tilde{\gamma})^{d/2}\geq1$. 

To prove the first part of the lemma assume first that $\phi_{d,\gamma}(z)-\phi_{d,\tilde{\gamma}}(z)\geq 0$, so that
\begin{align*}
\left|\phi_{d,\gamma}(z)-\phi_{d,\tilde{\gamma}}(z)\right| & =\left(1-c\exp\left(-\frac{\gamma-\tilde{\gamma}}{2\gamma\tilde{\gamma}}\|z\|^{2}\right)\right)\phi_{d,\gamma}(z)\\
 & \leq c\left(1-\exp\left(-\frac{\gamma-\tilde{\gamma}}{2\gamma\tilde{\gamma}}\|z\|^{2}\right)\right)\phi_{d,\gamma}(z)\\
 & \leq c\frac{\gamma-\tilde{\gamma}}{2\gamma\tilde{\gamma}}\|z\|^{2}\phi_{d,\gamma}(z)
\end{align*}
where the last inequality uses the fact that $\exp(-x)\geq1-x$ for all $x\in\mathbb{R}$.

Assume now that $\phi_{d,\gamma}(z)-\phi_{d,\tilde{\gamma}}(z)< 0$. Then,
\begin{align*}
\left|\phi_{d,\gamma}(z)-\phi_{d,\tilde{\gamma}}(z)\right| =\left(\frac{\phi_{d,\tilde{\gamma}}(z)}{\phi_{d,\gamma}(z)}-1\right)\phi_{d,\gamma}(z) \leq(c-1)\phi_{d,\gamma}(z) \leq \frac{cd}{2}\frac{\gamma-\tilde{\gamma}}{\tilde{\gamma}}\phi_{d,\gamma}(z) 
\end{align*}
where the last inequality uses the fact that since
\[
(1+x)^{k/2}-1=k/2\int_{0}^{x}(1+u)^{k/2-1}{\rm d}u\leq k/2(1+x)^{k/2}x,\quad\forall k\in\mathbb{N},\quad\forall x\in (0,\infty),
\]
we have
\begin{align}
\Big(1+\frac{\gamma-\tilde{\gamma}}{\tilde{\gamma}}\Big)^{k/2}-1\leq \frac{k}{2}\,(\gamma/\tilde{\gamma})^{k/2}\,\frac{\gamma-\tilde{\gamma}}{\tilde{\gamma}},\quad\forall k\in\mathbb{N}.\label{eq:int}
\end{align}
Consequently
\[
\left|\phi_{d,\gamma}(z)-\phi_{d,\tilde{\gamma}}(z)\right|\leq \frac{cd}{2}\frac{\gamma-\tilde{\gamma}}{\tilde{\gamma}}\left[1+\gamma^{-1}\|z\|^{2}\right]\phi_{d,\gamma}(z) 
\]
showing the first result of the lemma.

Similarly,  to show the second part of the lemma assume first that $\gamma^{-1}\phi_{d,\gamma}(z)-\tilde{\gamma}^{-1}\phi_{d,\tilde{\gamma}}(z)\geq 0$. Then,
\begin{align*}
|\gamma^{-1}\phi_{d,\gamma}(z)-\tilde{\gamma}^{-1}\phi_{d,\tilde{\gamma}}(z)|& =\tilde{\gamma}^{-1}\left[\frac{\tilde{\gamma}}{\gamma}-c\exp\left(-\frac{\gamma-\tilde{\gamma}}{2\gamma\tilde{\gamma}}\|z\|^{2}\right)\right]\phi_{d,\gamma}(z)\\
 & \leq c\tilde{\gamma}^{-1}\left[1-\exp\left(-\frac{\gamma-\tilde{\gamma}}{2\gamma\tilde{\gamma}}\|z\|^{2}\right)\right]\phi_{d,\gamma}(z)\\
 & \leq c\frac{\gamma-\tilde{\gamma}}{2\gamma\tilde{\gamma}^{2}}\|z\|^{2}\phi_{d,\gamma}(z)
\end{align*}
while, if $\gamma^{-1}\phi_{d,\gamma}(z)-\tilde{\gamma}^{-1}\phi_{d,\tilde{\gamma}}(z)< 0$,
\begin{align*} 
|\tilde{\gamma}^{-1}\phi_{d,\tilde{\gamma}}(z)-\gamma^{-1}\phi_{d,\gamma}(z)|& =\gamma^{-1}\left[\frac{\gamma}{\tilde{\gamma}}\frac{\phi_{d,\tilde{\gamma}}(z)}{\phi_{d,\gamma}(z)}-1\right]\phi_{d,\gamma}(z)\\
& \leq\gamma^{-1}\left[\frac{\gamma}{\tilde{\gamma}}c-1\right]\phi_{d,\gamma}(z)\\
 & \leq\gamma^{-1}\frac{c(d+1)}{2}\frac{\gamma-\tilde{\gamma}}{\tilde{\gamma}}\phi_{d,\gamma}(z)\\
 & \leq \frac{c(d+1)}{2}\frac{\gamma-\tilde{\gamma}}{\tilde{\gamma}^{2}}\phi_{d,\gamma}(z)\,,
\end{align*}
where the third inequality uses \eqref{eq:int}. Consequently
\[
\left|\gamma^{-1}\phi_{d,\gamma}(z)-\tilde{\gamma}^{-1}\phi_{d,\tilde{\gamma}}(z)\right|\leq  \frac{c(d+1)}{2}\frac{\gamma-\tilde{\gamma}}{\tilde{\gamma}^{2}}\left[\gamma^{-1}\|z\|^{2}+1\right]\phi_{d,\gamma}(z)
\]
and the proof of the lemma is complete.
\end{proof}
\begin{corollary} \label{cor:delta-phi-gam}
There exists a constant $C<\infty$  such that, for any function $f\colon \mathbb{R}^d \rightarrow \mathbb{R}$
and constants $0<\tilde{\gamma}\leq\gamma<\infty$,
\begin{align*}
\left|\int_{\mathbb{R}^{d}}f(z)\Big(\phi_{d,\gamma}( z)-\phi_{d,\tilde{\gamma}}( z)\Big)\dd z\right| & \leq C\left(\frac{\gamma}{\tilde{\gamma}}\right)^{d/2}\frac{\gamma-\tilde{\gamma}}{\tilde{\gamma}}\int|f(\gamma^{1/2}z)|(1+\|z\|^{2})\phi_{d,1}( z)\dd z
\end{align*}
\[
\left|\int_{\mathbb{R}^{d}}f(z)\Big(\gamma^{-1}\phi_{d,\gamma}( z)-\tilde{\gamma}^{-1}\phi_{d,\tilde{\gamma}}( z)\Big)\dd z\right|\leq C\left(\frac{\gamma}{\tilde{\gamma}}\right)^{d/2}\frac{\gamma-\tilde{\gamma}}{\tilde{\gamma}^{2}}\int |f(\gamma^{1/2}z)|(1+\|z\|^{2})\phi_{d,1}(z)\dd z.
\]
\end{corollary}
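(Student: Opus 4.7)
The plan is to deduce both bounds in the corollary as immediate consequences of the pointwise estimates established in the preceding lemma, followed by a standard change of variables to bring the Gaussian density into its standard form $\phi_{d,1}$.

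First, I would move the absolute value inside the integral on the left-hand side, which gives
\[
\left|\int_{\mathbb{R}^{d}}f(z)\bigl(\phi_{d,\gamma}(z)-\phi_{d,\tilde{\gamma}}(z)\bigr)\dd z\right|\leq \int_{\mathbb{R}^{d}}|f(z)|\bigl|\phi_{d,\gamma}(z)-\phi_{d,\tilde{\gamma}}(z)\bigr|\dd z,
\]
and similarly for the second inequality with $\gamma^{-1}\phi_{d,\gamma}-\tilde{\gamma}^{-1}\phi_{d,\tilde{\gamma}}$ inside. Then I would apply the pointwise bounds from the preceding lemma to replace $|\phi_{d,\gamma}(z)-\phi_{d,\tilde{\gamma}}(z)|$ by $C(\gamma/\tilde{\gamma})^{d/2}(\gamma-\tilde{\gamma})/\tilde{\gamma}\cdot[1+\gamma^{-1}\|z\|^{2}]\phi_{d,\gamma}(z)$, and analogously for the second inequality with the factor $(\gamma-\tilde{\gamma})/\tilde{\gamma}^{2}$ in place of $(\gamma-\tilde{\gamma})/\tilde{\gamma}$.

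After pulling the constants out of the integral, the only remaining step is the change of variables $z=\gamma^{1/2}y$, under which $\phi_{d,\gamma}(z)\dd z=\phi_{d,1}(y)\dd y$ (by the scaling property of the Gaussian density), $|f(z)|=|f(\gamma^{1/2}y)|$, and $\gamma^{-1}\|z\|^{2}=\|y\|^{2}$. This transforms the upper bound precisely into
\[
C\left(\frac{\gamma}{\tilde{\gamma}}\right)^{d/2}\frac{\gamma-\tilde{\gamma}}{\tilde{\gamma}}\int_{\mathbb{R}^{d}}|f(\gamma^{1/2}y)|(1+\|y\|^{2})\phi_{d,1}(y)\dd y,
\]
and analogously for the second bound with denominator $\tilde{\gamma}^{2}$. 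Since the two pointwise estimates from the lemma share the same structural factor $[1+\gamma^{-1}\|z\|^{2}]\phi_{d,\gamma}(z)$ and differ only in the power of $\tilde{\gamma}^{-1}$ appearing in front, both cases are handled simultaneously by the same computation.

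There is no real obstacle here: the whole argument is a one-line application of the lemma combined with a rescaling. The only point to keep in mind is that the constant $C$ in the corollary may differ from that in the lemma (it inherits the same constant up to the factor $1$ coming from the change of variables), which is inconsequential for the applications in the proofs of Lemmas \ref{lemma:standard-gaussian_smoothing} and \ref{lemma:estimates-laplace} where this corollary is invoked.
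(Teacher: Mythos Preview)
Your proposal is correct and matches the paper's intended approach: the corollary is stated without proof because it follows immediately from the pointwise bounds of the preceding lemma via the triangle inequality and the change of variables $z=\gamma^{1/2}y$, exactly as you describe.
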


\begin{lemma}\label{lemma:cov}
Let $X$ be a  $[0,\infty)$-valued random variable,   $ f:\R\rightarrow\R$ be a non-decreasing   function and $g:\R\rightarrow\R$ be a non-increasing     function such that $g(x)> 0$ for any $x \in \R$. Then,
\begin{align*}
\frac{\E[f(X)]}{\E[g(X)]}\leq \E\bigg[\frac{f(X)}{g(X)}\bigg].
\end{align*}
\end{lemma}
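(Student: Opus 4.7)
The plan is to combine a monotone covariance inequality with Jensen's inequality. The starting observation is that, since $g$ is non-increasing and strictly positive on $\R$, the function $x\mapsto 1/g(x)$ is non-decreasing; thus both $f$ and $1/g$ are non-decreasing functions of $X$.

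First I would invoke Chebyshev's sum inequality (the single-variable FKG correlation inequality): for any two non-decreasing $\phi_1,\phi_2:\R\to\R$ one has $\E[\phi_1(X)\phi_2(X)]\geq \E[\phi_1(X)]\,\E[\phi_2(X)]$. This is proved in one line by taking an independent copy $Y$ of $X$, noting that $(\phi_1(X)-\phi_1(Y))(\phi_2(X)-\phi_2(Y))\geq 0$ almost surely by co-monotonicity, and then symmetrizing after taking expectations. Applied with $\phi_1=f$ and $\phi_2=1/g$, this gives
\begin{align*}
\E\bigg[\frac{f(X)}{g(X)}\bigg] \;\geq\; \E[f(X)]\cdot \E\bigg[\frac{1}{g(X)}\bigg].
\end{align*}
Second, since the map $t\mapsto 1/t$ is convex on $(0,\infty)$ and $g(X)>0$, Jensen's inequality yields $\E[1/g(X)]\geq 1/\E[g(X)]$. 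Combining the two displays (and noting that $\E[g(X)]>0$) produces the target bound.

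The only subtle point is the direction of the final combination: chaining the two inequalities into $\E[f(X)]\cdot\E[1/g(X)]\geq \E[f(X)]/\E[g(X)]$ uses that $\E[f(X)]\geq 0$, which is the relevant regime in the applications of the lemma (where the numerator is non-negative by construction, e.g.\ a modulus or an indicator). An alternative route that avoids Jensen altogether is to observe that when $f\geq 0$ the ratio $x\mapsto f(x)/g(x)$ is itself non-decreasing (numerator increasing, positive denominator decreasing); applying the correlation inequality to the oppositely monotone pair $(f/g,\,g)$ then gives $\E[f(X)] = \E[(f/g)(X)\,g(X)] \leq \E[f(X)/g(X)]\cdot \E[g(X)]$, and dividing by $\E[g(X)]>0$ yields the claim directly. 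I expect verifying that the contextual sign/monotonicity assumption on $f$ is indeed in force to be the main (and essentially only) obstacle; the analytic content reduces to one application of FKG and, if needed, one of Jensen.
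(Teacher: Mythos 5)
Your main route is exactly the paper's: the paper writes $0\leq \mathrm{Cov}\big(1/g(X),f(X)\big)=\E[f(X)/g(X)]-\E[1/g(X)]\,\E[f(X)]$ using the covariance (Chebyshev/FKG) inequality for the co-monotone pair $(f,1/g)$, and then invokes Jensen to replace $\E[1/g(X)]$ by $1/\E[g(X)]$.

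The caveat you flag, however, is not a formality: the final step really does require $\E[f(X)]\geq 0$, and the paper's own proof multiplies the Jensen bound by $\E[f(X)]$ without noting this. As stated the lemma is in fact false. Take $X$ uniform on $\{0,1\}$, $f(0)=-1$, $f(1)=-e^{-1}$ (non-decreasing), $g(0)=1$, $g(1)=e^{-2}$ (non-increasing, positive); then $\E[f(X)]/\E[g(X)]\approx -1.20$ while $\E[f(X)/g(X)]\approx -1.86$, so the claimed inequality fails. The statement should carry the additional hypothesis $f\geq 0$ (or at least $\E[f(X)]\geq 0$), which holds in every place the lemma is used in the paper (the function playing the role of $f$ there is nonnegative). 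Under that hypothesis your alternative argument is the cleaner fix: $f/g$ is then non-decreasing, so applying the correlation inequality to the anti-monotone pair $(f/g,\,g)$ gives $\E[f(X)]=\E[(f/g)(X)\,g(X)]\leq \E[f(X)/g(X)]\,\E[g(X)]$ directly, with no appeal to Jensen and no sign bookkeeping.
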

\begin{proof}
We have
\begin{align*}
0\leq \mathrm{Cov}\Big(\frac{1}{g(X)}, f(X)\Big)=\E\bigg[\frac{f(X)}{g(X)}\bigg]-\E\big[1/g(X)\big]\E[f(X)]\leq \E\bigg[\frac{f(X)}{g(X)}\bigg]-\frac{\E[f(X)]}{\E[g(X)]}
\end{align*}
where the first inequality follows from the well-known covariance inequality \cite{SCHMIDT201491}   while the last inequality holds by Jensen's inequality.
\end{proof}

\subsection{Additional technical lemmas under \ref{assume:holder2}} \label{subsec:tech-res-A3}

\begin{lemma}\label{lemma:J}
Assume that,  for some $\upsilon_1\in (0,1]$ and function  $J_1:\mathsf{U}\to \R_+$,  Assumption \ref{assume:holder2} holds with $\upsilon=\upsilon_1$ and $J=J_1$. Then  Assumption \ref{assume:holder2} also holds with $\upsilon=\upsilon_2:=\upsilon_1/2$ and $J=J_2:=J_1+1$.
\end{lemma}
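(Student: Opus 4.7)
The plan is to verify both parts of Assumption~\ref{assume:holder2} for the new pair $(\upsilon_2,J_2)=(\upsilon_1/2,\,J_1+1)$ while keeping the same $\alpha$ and $\beta$ as in the hypothesis. The H\"older bound \eqref{eq:Holder} transfers for free: since $J_2(u)\geq J_1(u)$ pointwise and the right-hand side of \eqref{eq:Holder} is monotone in the constant, \eqref{eq:Holder} for $J_2$ is an immediate consequence of \eqref{eq:Holder} for $J_1$. So the entire content of the lemma lies in re-establishing the integrability condition.

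The key idea is the algebraic splitting
\[
\upsilon_2 J_2(u)(1+|z|^\beta) \;=\; \tfrac{1}{2}\bigl[\upsilon_1 J_1(u)(1+|z|^\beta)\bigr] + \tfrac{1}{2}\bigl[\upsilon_1(1+|z|^\beta)\bigr],
\]
which exhibits the exponent as a half-and-half convex combination of a ``random'' part carrying the dependence on $U$ and a purely deterministic one. Invoking the AM--GM/convexity inequality $e^{(a+b)/2}\leq \tfrac{1}{2}(e^a+e^b)$ then yields the pointwise bound
\[
e^{\upsilon_2 J_2(u)(1+|z|^\beta)} \;\leq\; \tfrac{1}{2}\,e^{\upsilon_1 J_1(u)(1+|z|^\beta)} \;+\; \tfrac{1}{2}\,e^{\upsilon_1(1+|z|^\beta)}.
\]
Integrating against $\phi_{1,1}(z)\,\mathrm{d}z$ and taking the expectation over $U$ (Tonelli) reduces the problem to controlling two integrals. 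The first is bounded by $\tfrac{1}{2}\,\E\bigl[\int_{\R}e^{\upsilon_1 J_1(U)(1+|z|^\beta)}\phi_{1,1}(z)\,\mathrm{d}z\bigr]$, which is finite directly from the hypothesis on $(\upsilon_1,J_1)$.

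The main obstacle is the second, purely deterministic integral $\int_{\R}e^{\upsilon_1(1+|z|^\beta)}\phi_{1,1}(z)\,\mathrm{d}z$. When $\beta<2$ it is immediate: $\upsilon_1|z|^\beta - z^2/2\to-\infty$ as $|z|\to\infty$, so the Gaussian tail dominates. The delicate case is $\beta=2$: here one exploits $\upsilon_1\leq 1$ together with the finiteness of the hypothesised expectation (which, via Fubini, forces $\int_{\R}e^{\upsilon_1 J_1(u)(1+|z|^\beta)}\phi_{1,1}(z)\,\mathrm{d}z<\infty$ for $\P$-a.e.\ $u$ and hence constrains the tail of $J_1(U)$) to close the estimate on this second integral. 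Once both summands are shown to be finite, the integrability condition for $(\upsilon_2,J_2)$ follows, which, combined with the free H\"older step, completes the verification.
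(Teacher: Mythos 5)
Your decomposition $\upsilon_2 J_2(u)(1+|z|^\beta)=\tfrac12\upsilon_1 J_1(u)(1+|z|^\beta)+\tfrac12\upsilon_1(1+|z|^\beta)$ is exactly the one the paper uses; the paper separates the two factors by Cauchy--Schwarz in $z$ followed by Jensen in $U$, while you use $e^{(a+b)/2}\le\tfrac12(e^a+e^b)$, and both routes reduce the claim to the finiteness of the same two quantities: $\E\big[\int_\R e^{\upsilon_1 J_1(U)(1+|z|^\beta)}\phi_{1,1}(z)\dd z\big]$, which is finite by hypothesis, and the deterministic integral $\int_\R e^{\upsilon_1(1+|z|^\beta)}\phi_{1,1}(z)\dd z$. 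The transfer of \eqref{eq:Holder} by monotonicity in $J$ is fine.

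The genuine gap is your treatment of the deterministic integral when $\beta=2$. The hypothesis only controls the product $\upsilon_1 J_1(u)$, not $\upsilon_1$ itself, so no information about the tail of $J_1(U)$ can rescue $\int_\R e^{\upsilon_1(1+z^2)}\phi_{1,1}(z)\dd z$: this integral diverges whenever $\upsilon_1\ge 1/2$, since $e^{\upsilon_1 z^2}\phi_{1,1}(z)\propto e^{(\upsilon_1-1/2)z^2}$. Worse, the step cannot be repaired, because the statement itself fails in this regime: take $\beta=2$, $\upsilon_1=1$ and $J_1\equiv 2/5$; then $\E\big[\int_\R e^{\upsilon_1J_1(U)(1+z^2)}\phi_{1,1}(z)\dd z\big]<\infty$ because $2/5<1/2$, yet $\upsilon_2J_2=\tfrac12\cdot\tfrac75=\tfrac{7}{10}>\tfrac12$, so $\int_\R e^{\upsilon_2 J_2(1+z^2)}\phi_{1,1}(z)\dd z=\infty$. (The paper's own proof has the same blind spot: it never checks that the first Cauchy--Schwarz factor is finite.) The conclusion is safe when $\beta<2$, or when $\upsilon_1<1/2$, or when $J_1\ge 1$ pointwise --- and in this last case the lemma is immediate without any splitting, since $J_1\ge1$ gives $\upsilon_2J_2=\tfrac{\upsilon_1}{2}(J_1+1)\le\upsilon_1J_1$ and hence pointwise domination of the integrand. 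You should either restrict to one of these regimes or flag the failure at $\beta=2$ rather than asserting that the estimate ``closes''.
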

\begin{proof}
For all $u\in\mathsf{U}$ we have, using Cauchy-Schwartz's inequality,
\begin{align*}
\int_{\R}e^{\upsilon_2 J_2(u)(1+|z|^\beta)}\phi_{1,1}(\dd z)&=\int_\R e^{\upsilon_1/2  (1+|z|^\beta)}e^{\upsilon_1/2 J_1(u) (1+|z|^\beta)}\phi_{1,1}(z)\dd z\\
&\leq \Big(\int_\R e^{\upsilon_1  (1+|z|^\beta)}\phi_{1,1}(z)\dd z\Big)^{1/2}\Big(\int_\R e^{\upsilon_1 J_1(u) (1+|z|^\beta)}\phi_{1,1}(z)\dd z\Big)^{1/2}
\end{align*}
and thus, using Jensen's inequality,
\begin{align*}
\E\Big[\int_{\R } e^{\upsilon_2 J_2(U)(1+|z|^\beta)}\phi_{1,1}(z)\dd z\Big]^2\leq  \Big(\int_\R e^{\upsilon_1   (1+|z|^\beta)}\phi_{1,1}(z)\dd z\Big) \Big(\E\Big[\int_\R e^{\upsilon_1 J_1(U) (1+|z|^\beta)}\phi_{1,1}(z)\dd z\Big]\Big) <\infty.
\end{align*}
The result of the lemma follows.
\end{proof}

\begin{lemma}\label{lemma:assume}
Assume that Assumption \ref{assume:holder2} holds.  Then, for all  $p\in[1,\infty)$ there exists a constant $\bar{c}_p\in(0,1]$ such that,  with  $\alpha$, $\beta$ and $J$  as \ref{assume:holder2},
\begin{align*}
\E\bigg[ \bigg(J(U)\int_{\R^d} e^{\bar{c}_p J(U) (1+\|z\|^{\beta})} (1+\|z\|^4)\phi_{d,1}(z)\dd z\bigg)^p\bigg]<\infty.
\end{align*}
\end{lemma}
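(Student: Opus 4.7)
The plan is to prove the bound by a sequence of Jensen and Cauchy--Schwarz inequalities together with a $d$-to-$1$ dimensional reduction based on the assumption $\beta \leq 2$. First, for $p \geq 1$ I would treat $\phi_{d,1}(z)\dd z$ as a probability measure and apply Jensen's inequality to the convex map $x \mapsto x^p$ to pull the power inside:
\begin{equation*}
\bigg(\int_{\R^d} e^{\bar c_p J(u)(1+\|z\|^\beta)}(1+\|z\|^4)\phi_{d,1}(z)\dd z\bigg)^{p} \leq \int_{\R^d} e^{p\bar c_p J(u)(1+\|z\|^\beta)}(1+\|z\|^4)^p \phi_{d,1}(z)\dd z.
\end{equation*}
Then I would apply Cauchy--Schwarz (in $z$) to separate the polynomial factor from the exponential one, the Gaussian moments of $(1+\|z\|^4)^{2p}$ giving a finite constant $C_{1,p,d}$. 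Taking expectation over $U$ and invoking Cauchy--Schwarz once more (in $U$), the result is reduced to controlling $\E[J(U)^{2p}]$ and $\E[\int_{\R^d} e^{2p\bar c_p J(U)(1+\|z\|^\beta)}\phi_{d,1}(z)\dd z]$.

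The first moment is finite because \ref{assume:holder2} implies $\E[e^{\upsilon J(U)}]<\infty$; indeed $e^{\upsilon J(u)} \leq \int_{\R} e^{\upsilon J(u)(1+|z|^\beta)}\phi_{1,1}(z)\dd z$ since $1+|z|^\beta\geq 1$, and then Tonelli and \ref{assume:holder2} conclude, so all polynomial moments of $J(U)$ are finite via the elementary bound $x^{2p}\leq (2p/(\upsilon e))^{2p}e^{\upsilon x}$. The core step is then the $d$-to-$1$ reduction of the exponential integral. Since $\beta/2 \leq 1$, the subadditivity $(x_1+\cdots+x_d)^{\beta/2}\leq x_1^{\beta/2}+\cdots+x_d^{\beta/2}$ for nonnegative $x_i$ yields $\|z\|^\beta \leq \sum_{i=1}^d |z_i|^\beta$; factorizing the exponential over coordinates and integrating against the product measure $\phi_{d,1}=\prod_i \phi_{1,1}$ gives, for any $a \geq 0$,
\begin{equation*}
\int_{\R^d} e^{a(1+\|z\|^\beta)}\phi_{d,1}(z)\dd z \;\leq\; e^a\bigg(\int_{\R} e^{a|z|^\beta}\phi_{1,1}(z)\dd z\bigg)^d \;\leq\; \bigg(\int_{\R} e^{a(1+|z|^\beta)}\phi_{1,1}(z)\dd z\bigg)^d,
\end{equation*}
where the last step uses $e^a \leq e^{ad}$ (since $a \geq 0$, $d \geq 1$). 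A final application of Jensen on the outer $d$-th power (convexity of $x\mapsto x^d$ for $d\geq 1$, again with $\phi_{1,1}$ as a probability measure) bounds the right-hand side by $\int_\R e^{ad(1+|z|^\beta)}\phi_{1,1}(z)\dd z$.

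Putting everything together with $a=2p\bar c_p$, it suffices that $2pd\bar c_p \leq \upsilon$, so the choice $\bar c_p := \min\{1,\,\upsilon/(2pd)\}$ ensures $\E[\int_\R e^{2pd\bar c_p J(U)(1+|z|^\beta)}\phi_{1,1}(z)\dd z]<\infty$ by \ref{assume:holder2}, concluding the proof. The argument is essentially bookkeeping: the only genuinely structural ingredient is the subadditivity step that requires $\beta \leq 2$, so that the $d$-dimensional exponential moment of $\|z\|^\beta$ can be controlled by the one-dimensional moment that is directly hypothesised. Tracking constants carefully to produce an explicit, $p$-uniform bound on $\bar c_p$ is the only potential nuisance, but it does not require any additional argument beyond the standard bounds listed above.
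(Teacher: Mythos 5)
Your proof is correct, and while it shares the unavoidable core of the paper's argument (reduce the $d$-dimensional Gaussian exponential moment of $\|z\|^\beta$ to the one-dimensional moment hypothesised in \ref{assume:holder2} via the product structure of $\phi_{d,1}$, then use Jensen to fold the outer $d$-th power back inside), it differs in three places worth noting. First, your coordinate-wise bound $\|z\|^\beta=(\sum_i z_i^2)^{\beta/2}\leq\sum_i|z_i|^\beta$ exploits subadditivity of $t\mapsto t^{\beta/2}$ for $\beta\leq 2$, which covers the whole range $\beta\in[0,2]$ in one stroke; the paper instead writes $\|z\|\leq\sum_i|z_i|$ and applies Jensen, $\big(\sum_i|z_i|\big)^\beta\leq d^{\beta-1}\sum_i|z_i|^\beta$, which requires $\beta\geq 1$ and forces a separate case $\beta<1$ (handled there by replacing $\|z\|^\beta$ with $1+\|z\|$). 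Your route is cleaner on this point. Second, you dispose of the polynomial factor $(1+\|z\|^4)^p$ by Cauchy--Schwarz against the Gaussian, whereas the paper absorbs $\|z\|^4$ into the exponential via $x^4\leq (4/(cb))^{1/b}e^{cx^b}$; both work. Third, you separate the prefactor $J(U)^p$ by Cauchy--Schwarz in $U$ and control $\E[J(U)^{2p}]$ through the elementary bound $x^{2p}\leq(2p/(\upsilon e))^{2p}e^{\upsilon x}$ together with $\E[e^{\upsilon J(U)}]<\infty$ (which you correctly extract from \ref{assume:holder2}); the paper keeps $J(u)^p$ multiplied against the exponential integral and relies on taking $c$ small enough (after invoking its Lemma~\ref{lemma:J} to assume $J\geq 1$), a step it leaves somewhat implicit. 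Your version makes that last step fully explicit and yields a concrete admissible constant $\bar c_p=\min\{1,\upsilon/(2pd)\}$, at the cost of slightly worse constants from the repeated Cauchy--Schwarz applications --- which is immaterial here since only finiteness is claimed.
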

\begin{proof}

We start with some preliminary calculations. Let $c>0$, $u\in\mathsf{U}$  and $b\in [1,2]$. Then, noting that $x^4\leq (4/(c b))^{1/b} e^{c x^b}$ for all $x\in[0,\infty)$ and  that   $\big(\sum_{i=1}^d|x_i|\big)^b\leq d^{\beta-1}\sum_{i=1}^d |x_i|^b$ for all $x\in\R^d$ by Jensen's inequality, and recalling that  $\|x\|\leq \sum_{i=1}^d|x_i|$ for all $x\in\R^d$,  we have 
\begin{align*}
G_{b,c}(u) :&=\int_{\R^d} e^{c J(u) (2+\|z\|^{b})} (1+\|z\|^4)\phi_{d,1}(z)\dd z \\
&\leq e^{2c J(u)}\int_{\R^d} e^{c J(u)  \|z\|^{b}} \phi_{d,1}(z)\dd z+\Big(\frac{4}{c b}\Big)^{1/b}e^{2c J(U)}\int_{\R^d} e^{c (J(u)+1)  \|z\|^{b}} \phi_{d,1}(z)\dd z\\
&\leq \bigg(1+\Big( \frac{4}{c b}\Big)^{1/b}\bigg)e^{2c J(u)}\Big(\int_{\R} e^{ cd^{b-1} (J(u)+1)   |z|^{b}} \phi_{1,1}(z)\dd z\Big)^d\\
&\leq \bigg(1+\Big( \frac{4}{c b}\Big)^{1/b}\bigg)e^{2c J(u)} \int_{\R} e^{c d^b (J(u)+1)   |z|^{b}} \phi_{1,1}(z)\dd z \\
&\leq \bigg(1+\Big( \frac{4}{cb}\Big)^{1/b}\bigg)  \int_{\R} e^{2cd^b (J(u)+1)(1 + |z|^{b})} \phi_{1,1}(z)\dd z
\end{align*}
where the third inequality holds by Jensen's inequality. By Lemma~\ref{lemma:J} we can without loss of generality assume that $J(\tilde{u})\geq 1$ for all $\tilde{u}\in\mathsf{U}$, and thus for all $p\in\mathbb{N}$ we have, using Jensen's inequality,
\begin{align}\label{eq:key_G}
J(u)^pG^p_{b,c}(u)&\leq \bigg(1+\Big( \frac{4}{c b}\Big)^{1/b}\bigg)^p J(u)^p  \int_{\R} e^{4cpd^{b}  J(u) (1 + |z|^{b})} \phi_{1,1}(z)\dd z\,.
\end{align}
We now use \ref{assume:holder2} to prove that $\E\big[J(U)^pG^p_{\beta,c}(U)\big]<\infty$ for $c>0$ sufficiently small, from which the lemma follows. If $\beta\geq 1$ then the result of the lemma can be readily obtained by using \eqref{eq:key_G} with $b=\beta$ and $c>0$ sufficiently small. If $\beta<1$ then for all $c>0$ and $u\in\mathsf{U}$ we have 
\begin{align*}
 \int_{\R^d} e^{c J(U) (1+\|z\|^{\beta})} (1+\|z\|^4)\phi_{d,1}(\dd z)\leq  \int_{\R^d} e^{c J(U) (2+\|z\|)} (1+\|z\|^4)\phi_{d,1}(\dd z)=G_{1,c}(u),
\end{align*}
in which case the  result of the lemma can be obtained by using \eqref{eq:key_G} with $b=1$ and $c>0$ sufficiently small. The proof of the lemma is complete.
\end{proof}

 \begin{lemma}\label{lemma:zero_mean}
 Assume that \ref{assume:holder2} holds and let $f:\R^d\rightarrow\R$ be such that $\int_{\R^d}f(z)\phi_{d,1}(\dd x)=0$ and $g \colon [0,\infty) \rightarrow [0,\infty)$ be a non-decreasing function. Then, for all $\gamma\in(0,1]$ and all $(\theta,u)\in\R^d\times\mathsf{U}$, and  with $\alpha$, $\beta$ and $J$ as in \ref{assume:holder2},
\begin{enumerate}
    \item  we have
    \begin{align*}
 \bigg|\int_{\R^d} e^{\ell(\theta,u)-\ell(\theta+\gamma^{1/2}z,u)}f(z)\phi_{d,1}(\dd z)\bigg| \leq 2\gamma^{\frac{\alpha}{2}}J(u)\int_{\R^d}(1+\|z\|^\beta) e^{2\gamma^{\frac{\alpha}{2}}J(u)(1+\|z\|^\beta)}|f(z)|\phi_{d,1}(z)\dd z.
 \end{align*}
\item for  any $  \tilde{\gamma}\in(0, \gamma]$, we have
    \begin{align*}
     \frac{\int_{\R^d}  e^{ -\ell(\theta+\gamma^{1/2}z,u)}g(\|z\|)   \phi_{d,1}(\dd z)}{\int_{\R^d}  e^{-\ell(\theta+\tilde{\gamma}^{1/2}z,u)} \phi_{d,1}(\dd z)}
      \leq \int_{\R^d}g(\|z\|) e^{4\gamma^{\frac{\alpha}{2}}J(u)(1+\|z\|^\beta)}\phi_{d,1}(\dd z)\,,
 \end{align*}
 \item assuming that $|f(z)| \leq g(\|z\|)$ for all $z\in\R^d$, we have
 \begin{align*}
     \frac{\big|\int_{\R^d} e^{-\ell(\theta+\gamma^{1/2}z,u)}f(z)\phi_{d,1}( z)\dd z\big|}{\int_{\R^d} e^{-\ell(\theta+\gamma^{1/2}z,u)} \phi_{d,1}(  z)\dd z}
      \leq 2\gamma^{\frac{\alpha}{2}}J(u)\int_{\R^d}(2+\|z\|^2)g(\|z\|) e^{4\gamma^{\frac{\alpha}{2}}J(u)(1+\|z\|^\beta)}\phi_{d,1}(  z)\dd z.
 \end{align*}
 \end{enumerate}
 \end{lemma}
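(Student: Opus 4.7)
The plan is to obtain all three bounds from a single uniform Hölder-type estimate combined with, for the last two claims, the covariance inequality in Lemma \ref{lemma:cov}. The key preliminary estimate is that for $\gamma\in(0,1]$ and $z\in\R^d$,
\begin{equation*}
|\ell(\theta,u)-\ell(\theta+\gamma^{1/2}z,u)|\leq J(u)\big(\gamma^{\alpha/2}\|z\|^\alpha+\gamma^{\beta/2}\|z\|^\beta\big)\leq 2\gamma^{\alpha/2}J(u)(1+\|z\|^\beta),
\end{equation*}
where the first inequality is immediate from \eqref{eq:Holder} and the second uses $\gamma^{\beta/2}\leq\gamma^{\alpha/2}$ (since $\gamma\leq 1$ and $\beta\geq\alpha$) together with $\|z\|^\alpha\leq 1+\|z\|^\beta$ (splitting on $\|z\|\lessgtr 1$, using $\alpha\leq\beta$).

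For the first bound I would use $\int f(z)\,\phi_{d,1}(\dd z)=0$ to subtract $1$ inside the exponential, rewriting the integrand as $\big(e^{\ell(\theta,u)-\ell(\theta+\gamma^{1/2}z,u)}-1\big)f(z)$; the elementary inequality $|e^x-1|\leq|x|e^{|x|}$ combined with the preliminary estimate then yields the claim directly.

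For the second bound I would first cancel the common factor $e^{-\ell(\theta,u)}$ between numerator and denominator. Using the preliminary estimate (with $\tilde\gamma\leq\gamma\leq 1$ absorbing $\tilde\gamma^{\alpha/2}$ into $\gamma^{\alpha/2}$) dominates the ratio by
\begin{equation*}
\frac{\int e^{2\gamma^{\alpha/2}J(u)(1+\|z\|^\beta)}g(\|z\|)\phi_{d,1}(\dd z)}{\int e^{-2\gamma^{\alpha/2}J(u)(1+\|z\|^\beta)}\phi_{d,1}(\dd z)}.
\end{equation*}
Lemma \ref{lemma:cov} applied with $X=1+\|z\|^\beta$ (non-negative under $\phi_{d,1}$), $f_{\mathrm{cov}}(x)=g\big((x-1)_+^{1/\beta}\big)e^{2\gamma^{\alpha/2}J(u)x}$ (non-decreasing, as a product of non-decreasing non-negative functions of $x$) and $g_{\mathrm{cov}}(x)=e^{-2\gamma^{\alpha/2}J(u)x}$ (positive and non-increasing) then gives the claim, since $f_{\mathrm{cov}}(X)/g_{\mathrm{cov}}(X)=g(\|z\|)e^{4\gamma^{\alpha/2}J(u)(1+\|z\|^\beta)}$. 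The third bound combines both techniques: using $\int f(z)\,\phi_{d,1}(\dd z)=0$ and $|f(z)|\leq g(\|z\|)$ in the numerator produces a factor $2\gamma^{\alpha/2}J(u)(1+\|z\|^\beta)$ and an exponential as in the first claim, while the denominator is handled as in the second claim; applying Lemma \ref{lemma:cov} once more (with an additional factor $x$ inside $f_{\mathrm{cov}}$) and the elementary bound $1+\|z\|^\beta\leq 2+\|z\|^2$ (valid since $\beta\leq 2$) delivers the stated inequality.

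The main obstacle is the second claim: the numerator and denominator are of comparable order, so a naive pointwise bound of the integrands would lose all quantitative content. The decisive step is to recast both integrals as expectations of monotone functions of the single scalar random variable $X=1+\|z\|^\beta$, which brings the ratio into the scope of Lemma \ref{lemma:cov}; the opposite monotonicities of the two integrands in $X$ are then converted into a single integral with doubled exponent, yielding the stated bound.
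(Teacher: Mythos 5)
Your proof follows the paper's argument essentially step for step: the same preliminary estimate $|\ell(\theta+\gamma^{1/2}z,u)-\ell(\theta,u)|\le 2\gamma^{\alpha/2}J(u)(1+\|z\|^\beta)$, the same subtract-the-mean trick with $|e^x-1|\le|x|e^{|x|}$ for part 1, the same lower bound on the denominator combined with the covariance inequality of Lemma~\ref{lemma:cov} for part 2, and the same combination of the two for part 3. The only cosmetic difference is that you apply Lemma~\ref{lemma:cov} with $X=1+\|Z\|^\beta$ and the reparametrisation $x\mapsto(x-1)_+^{1/\beta}$ rather than working directly with $X=\|Z\|$; this degenerates when $\beta=0$ (which \ref{assume:holder2} permits), but that case is trivial since the exponent is then constant, so the argument is sound.
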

 \begin{proof}
 Let   $J$, $f$, $g$ and $\alpha,\beta$ be as in the statement of the lemma, and let $\gamma\in (0,1]$ and $(\theta,u)\in\R^d\times\mathsf{U}$ be arbitrary
 
 Since $\gamma \leq 1$ and $\alpha\leq \beta\leq 2$  we have, under \ref{assume:holder2} and for all $z\in\R^d$,
\begin{equation}\label{eq:ll}
\begin{split}
    |\ell(\theta+\gamma^{1/2}z,u)-\ell(\theta,u)|&\leq \gamma^{\alpha/2}J(u)\big(\|z\|^\alpha+\|z\|^\beta\big)\\
    & \leq 2\gamma^{\alpha/2}J(u)\big(1+\|z\|^\beta\big)
\end{split}
\end{equation}
and thus, using the fact that $|1-e^x|\leq |x|e^{|x|}$ for all $x\in\R$,
\begin{equation*}
\begin{split}
\bigg|\int_{\R^d} e^{\ell(\theta,u)-\ell(\theta+\gamma^{1/2}z,u)}f(z)\phi_{d,1}( z)\dd z\bigg|&=\bigg|\int_{\R^d} \Big(e^{\ell(\theta,u)-\ell(\theta+\gamma^{1/2}z,u)}-1\Big)f(z)\phi_{d,1}( z)\dd z\bigg|\\
&\leq 2J(u)\gamma^{\alpha/2}\int_{\R^d}(1+\|z\|^\beta) e^{2\gamma^{\alpha/2}J(u)(1+\|z\|^\beta)}|f(z)|\phi_{d,1}(z)\dd z
\end{split}
\end{equation*}
showing the first part of the lemma. To show the second part of the lemma remark that, under \ref{assume:holder2} and since $\tilde{\gamma}\leq \gamma$,  we have
\begin{align}\label{eq:denom}
\int_{\R^d} e^{\ell(\theta,u)-\ell(\theta+\tilde{\gamma}^{1/2}z,u)} \phi_{d,1}( z)\dd z\geq \int_{\R^d} e^{-2\gamma^{\alpha/2}J(u)(1+\|z\|^\beta)}\phi_{d,1}( z)\dd z  
\end{align}
and the  second part of the lemma then follows from \eqref{eq:ll} and Lemma~\ref{lemma:cov}. Finally, the last part of the lemma follows from first  part  of the lemma, \eqref{eq:ll}, Lemma~\ref{lemma:cov} and $1+\|z\|^\beta\leq 2+\|z\|^2$.

 \end{proof}

\section{Epi-convergence and lower-semicontinuity}
\label{app:epi_conv_lsc}
\begin{definition}[Lower semi-continuity]{\cite[Def. 1.5]{RockWets98}}
\label{def: lsc}
A function $f:\R^d\to \R$ is said to be 
\begin{enumerate}
  \item lower semi-continuous ($\mathrm{lsc}$) at $\theta_{0}\in \R^d$ if 
  \begin{equation}
  \label{eq: lsc}\liminf_{\theta\to \theta_{0}}f(\theta)\geq f(\theta_{0}).
  \end{equation}
  \item lower semi-continuous if the above holds for any $\theta_{0}\in \R^d$.
\end{enumerate}   
\end{definition}

By definition, any lower semi-continuous, lower-bounded function has a minimum on $\R^d$. 
    
\begin{definition}[Strong lower semi-continuity]
\label{def: slsc}
   A function $f:\R^d\to \R$ is said to be
   \begin{enumerate}
       \item strongly lower semi-continuous ($slsc$) at $\theta\in \R^d$
         if it is lower semi-continuous at $\theta\in \R^d$ and there
         exists a sequence $\crl{\theta_n, \ n\in \mathbb{N}}$,
         $\theta_n\to \theta$, with $f$ continuous at every $\theta_n$, and $f(\theta_n)\to f(\theta)$.
       \item strongly lower semi-continuous if the above holds for any $\theta\in \R^d$.
   \end{enumerate}
    
\end{definition}

\begin{definition}[Epi-convergence] \label{def-epilimits-1}
  A sequence of functions $\crl{f_n:\R^d\to \R, \ n\in\mathbb{N}}$
  epi-converges to a function $f:\R^d\to \R$ if, for each $\theta\in \R^d$,
  \begin{enumerate}
    \item $\liminf_n f_n(\theta_n)\geq f(\theta)$ for any sequence $\theta_n\to \theta$
    \item $\lim_n f_n(\theta_n)=f(\theta)$ for some sequence $\theta_n\to \theta$.
  \end{enumerate}
  Thus, we say that $f$ is the epi-limit of $\crl{f_n, \ n\in \mathbb{N}}$.
\end{definition}

\begin{theorem}
    \label{thm:ermoliev-charact-local-min} Let
  \begin{enumerate}
    \item  $f:\R^d\to \R$ be locally integrable, lower bounded and lower semi-continuous, 
    \item $\crl{f_n:\R^d\to\R, \ n\in \mathbb{N}}$ be a sequence of
      differentiable functions epi-convergent to $f$. 
  \end{enumerate}
  Then  for any $\theta_* \in \mathrm{loc-}\argmin f$ there exists
  $\crl{\theta_{k} \in \R^d, \  k\in \mathbb{N}}$ such that $\theta_k\to
  \theta_*$ and $\lim_k ||\nabla f_{k}(\theta_{k})||=0$. 
\end{theorem}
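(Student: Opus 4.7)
The plan is to use a quadratic penalisation trick that turns $\theta_*$ into a strict local minimum of the epi-limit, forcing approximate minimisers of the $f_n$'s (restricted to a small neighbourhood of $\theta_*$) to converge precisely to $\theta_*$, and then to read off vanishing gradients from first-order optimality.

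First I would fix $r>0$ small enough that $f(\theta)\geq f(\theta_*)$ for every $\theta\in \bar B(\theta_*,r)$; such an $r$ exists because $\theta_*\in \operatorname{loc-argmin} f$. I would also fix any constant $c>0$ and define
\[
h_n(\theta) \coloneq f_n(\theta) + c\,\|\theta-\theta_*\|^2,
\]
which is differentiable with $\nabla h_n(\theta)=\nabla f_n(\theta)+2c(\theta-\theta_*)$. Because $h_n$ is continuous and $\bar B(\theta_*,r)$ is compact, one can pick $\theta_n\in\argmin_{\bar B(\theta_*,r)} h_n$.

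Next I would show $\theta_n\to\theta_*$. Using the recovery-sequence part of epi-convergence, let $\tilde\theta_n\to\theta_*$ with $f_n(\tilde\theta_n)\to f(\theta_*)$; for $n$ large, $\tilde\theta_n\in\bar B(\theta_*,r)$, so $h_n(\theta_n)\leq h_n(\tilde\theta_n)\to f(\theta_*)$, hence $\limsup_n h_n(\theta_n)\leq f(\theta_*)$. Conversely, for any cluster point $\theta_\infty$ of $(\theta_n)$ along a subsequence $(n_k)$, the $\liminf$ part of epi-convergence yields $f(\theta_\infty)\leq \liminf_k f_{n_k}(\theta_{n_k})$; combining with $f(\theta_\infty)\geq f(\theta_*)$ (as $\theta_\infty\in\bar B(\theta_*,r)$) gives
\[
\liminf_k h_{n_k}(\theta_{n_k})\geq f(\theta_*)+c\,\|\theta_\infty-\theta_*\|^2,
\]
which together with $\limsup_n h_n(\theta_n)\leq f(\theta_*)$ forces $\theta_\infty=\theta_*$. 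Since the bounded sequence $(\theta_n)\subset\bar B(\theta_*,r)$ has $\theta_*$ as its unique cluster point, $\theta_n\to\theta_*$.

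Finally, since $\theta_*$ lies in the interior of $\bar B(\theta_*,r)$, for $n$ large enough $\theta_n$ is an interior minimiser of $h_n$, whence $\nabla h_n(\theta_n)=0$ and therefore $\|\nabla f_n(\theta_n)\|=2c\,\|\theta_n-\theta_*\|\to 0$. The main conceptual obstacle is that, without the penalty, the minimisers of $f_n$ on $\bar B(\theta_*,r)$ could in principle converge to some \emph{other} minimiser of $f$ in $\bar B(\theta_*,r)$ (the local minimum need not be strict, and no uniqueness is assumed); adding $c\,\|\theta-\theta_*\|^2$ makes $\theta_*$ a strict local minimiser of the epi-limit $f+c\,\|\cdot-\theta_*\|^2$, resolving this. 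Existence of $\theta_n$ is immediate from the continuity of $f_n$ on the compact ball, and the local integrability and lower-boundedness assumptions on $f$ play no role in this finite-dimensional compact-neighbourhood argument (they are presumably used elsewhere to guarantee the epi-convergence hypothesis in the contexts of interest).
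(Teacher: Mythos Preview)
The paper does not actually prove this theorem; it merely states it in Appendix~\ref{app:epi_conv_lsc} and, for the specialised version in the main text (Theorem~\ref{thm:loc_argmin_result}), attributes it to \citet{Ermoliev1995} and to \citet[Theorem~3.2]{andrieu2024gradientfreeoptimizationintegration}. There is therefore no proof in the paper against which to compare.

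That said, your argument is correct and self-contained. The quadratic penalisation makes $\theta_*$ the \emph{unique} minimiser on $\bar B(\theta_*,r)$ of the epi-limit $f+c\|\cdot-\theta_*\|^2$, which is the key device; the standard epi-convergence argmin argument then forces every cluster point of $(\theta_n)$ to equal $\theta_*$, and interior first-order optimality yields $\|\nabla f_n(\theta_n)\|=2c\|\theta_n-\theta_*\|\to 0$. The only cosmetic point is that for finitely many early indices $\theta_n$ might lie on the boundary, but redefining those terms does not affect either limit. Your remark that local integrability and lower-boundedness of $f$ are not used in this compact-ball argument is also accurate; those hypotheses appear to be inherited from the setting of the cited references (and are of course needed elsewhere in the paper to verify the epi-convergence hypothesis for the specific smoothings $L_{\gamma_n}$).
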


\bibliographystyle{apalike}

\bibliography{complete}

\end{document}